
\documentclass[10pt, conference]{IEEEtran}
\ifCLASSINFOpdf
  % \usepackage[pdftex]{graphicx}
  % declare the path(s) where your graphic files are
  % \graphicspath{{../pdf/}{../jpeg/}}
  % and their extensions so you won't have to specify these with
  % every instance of \includegraphics
  % \DeclareGraphicsExtensions{.pdf,.jpeg,.png}
\else
  % or other class option (dvipsone, dvipdf, if not using dvips). graphicx
  % will default to the driver specified in the system graphics.cfg if no
  % driver is specified.
  % \usepackage[dvips]{graphicx}
  % declare the path(s) where your graphic files are
  % \graphicspath{{../eps/}}
  % and their extensions so you won't have to specify these with
  % every instance of \includegraphics
  % \DeclareGraphicsExtensions{.eps}
\fi
\hyphenation{op-tical net-works semi-conduc-tor}

% *** COSTUM PACKAGES ***
\usepackage{tikz}
\usepackage{graphicx}
\usetikzlibrary{shapes,arrows,automata}
\usepackage[nointegrals]{wasysym}

\usepackage{mathpartir}
\usepackage{amsmath,amsfonts,amsthm,mathtools,thmtools}
\usepackage{stmaryrd}
\usepackage{thm-restate}
\usepackage{environ}

\usepackage{listings}
\usepackage[inline]{enumitem}

\usepackage{prftree}

\usepackage{xcolor}

\usepackage{fdsymbol}

\usepackage[hyphens]{url}
\usepackage{hyperref}
\usepackage[hyphenbreaks]{breakurl}

%\allowdisplaybreaks
\interdisplaylinepenalty=2500
         % I-Aa

\renewcommand{\figurename}{Figure}

\newcommand{\ttype}[1]{\mbox{}\hspace{-2mm}{{\small \texttt{ #1}}}}

%\lstdefinestyle{cil}{numbers=left,numberstyle=\tiny,numbersep=5pt,basicstyle=\small\ttfamily,frame=leftline,xleftmargin=0.3cm}
\lstdefinelanguage{cil}{
    keywords = {allow, auditallow, deny, dontaudit, neverallow, allowx, 
        auditallowx, dontauditx, neverallowx, call, macro, 
        common, classcommon, class, classorder, classpermission,
        classpermissionset, classmap, classmapping, permissionx,
        boolean, booleanif, tunable, tunableif, constrain,
        validatetrans, mlsconstrain, mlsvalidatetrans, block,
        blockabstract, blockinherit, optional, context,
        defaultuser, defaultrole, defaulttype, defaultrange,
        filecon, fsuse, genfscon, ibpkeycon, ibendportcon,
        sensitivity, sensitivityalias, sensitivityaliasactual,
        sensitivityorder, category, categoryalias,
        categoryaliasactual, categoryorder, categoryset,
        sensitivitycategory, level, levelrange, rangetransition,
        ipaddr, nefifcon, nodecon, portcon, mls, handleunknown,
        policycap, reject, role, roletype, roleattribute,
        roleattributeset, roleallow, roletransition, rolebounds,
        sid, sidorder, sidcontext, type, typealias,
        typealiasactual, typeattribute, typeattributeset,
        expandtypeattribute, typebounds, typechange, typemember,
        typetransition, typepermissive, user, userrole,
        userattribute, userattributeset, userlevel, userrange,
        userbounds, userprefix, selinuxuser, selinuxuserdefault,
        iomemcon, ioportcon, pcidevicecon, pirqcon,
        devicetreecon, name, true, false, and, or, xor, eq,
        neq, not, all, dom, domby, incomp, range, r1, r2,
        r3, t1, t2, t3, u1, u2, u3, l1, l2, h1, h2,
        string, dccp, sctp, tcp, udp, task, trans, xattr,
        self, low, high, low_high, glblub, source, target}
}

\lstdefinelanguage{NuSMV}{
    keywords = {DEFINE, MODULE, VAR, IVAR, TRANS, LTLSPEC, next}
}

\lstdefinestyle{cil}{basicstyle=\footnotesize\ttfamily, language=cil}
\lstdefinestyle{NuSMV}{basicstyle=\footnotesize\ttfamily, language=NuSMV}

%[section]
\newtheorem{corollary}{Corollary}%[section]
\newtheorem{lemma}{Lemma}%[section]
\newtheorem{definition}{Definition}%[section]
%[section]
%[section]
%[section]

% ENUMERATE environments
\def\makenewenum#1#2{%
\newcounter{cnt#1}
\newenvironment{#1}%
{\begin{list}{\makebox[0pt][r]{#2}}%
{\setlength{\itemsep}{0pt}%
 \setlength{\parsep}{.2em}%
 \setlength{\leftmargin}{2em}%
 \setlength{\labelwidth}{.2em}%
 \usecounter{cnt#1}}}%
{\end{list}}}
\makenewenum{enu}{\arabic{cntenu}.}
\makenewenum{enum}{(\arabic{cntenum})}
\makenewenum{itenum}{\em(\arabic{cntenum})}
\makenewenum{en}{(\roman{cnten})}
\makenewenum{renum}{\textit{(\roman{cntrenum})}}
\makenewenum{enhyphen}{(\roman{cntenhyphen}')}
\makenewenum{aenum}{(\alph{cntaenum})}
\makenewenum{itemizes}{$\bullet$}

\newenvironment{restate-theorem}[1]%
  {\begin{trivlist}\item[]{\normalsize\bfseries{\sffamily}
        Restatement of Theorem~#1.}\hspace*{0mm}\it}%
  {\end{trivlist}}

\newenvironment{restate-lemma}[1]%
  {\begin{trivlist}\item[]{\normalsize\bfseries{\sffamily}
        Restatement of Lemma~#1.}\hspace*{0mm}\it}%
  {\end{trivlist}}

\newenvironment{restate-proposition}[1]%
  {\begin{trivlist}\item[]{\normalsize\bfseries{\sffamily}
        Restatement of Proposition~#1.}\hspace*{0mm}\it}%
  {\end{trivlist}}

\NewEnviron{smalign*}{%
%  \begin{small}

  \scalebox{0.9}{\parbox{1.01\linewidth}{%
  \begin{align*}
    \BODY
  \end{align*}
}}

%  \end{small}
}

\newcommand\ta{\ensuremath{\mathit{ta}}}

%%%%%%%%%%%%%%%%%%%%%%%%%%%%%%%%%%
%%%%%%%%%%%%%%%%% Shortcuts
%%%%%%%%%%%%%%%%%%%%%%%%%%%%%%%%%%

\newcommand{\ltlencode}[1]{\scalebox{1.3}{\leftmoon}\mkern-7mu #1 \mkern-5mu\scalebox{1.3}{\rightmoon}}
\newcommand{\lang}{IFCIL}
\newcommand{\tool}{IFCILverif}
\newcommand{\modelchecker}{NuSMV}
\newcommand{\semantics}[1]{\llbracket#1\rrbracket}

\begin{document}
%
% paper title
% Titles are generally capitalized except for words such as a, an, and, as,
% at, but, by, for, in, nor, of, on, or, the, to and up, which are usually
% not capitalized unless they are the first or last word of the title.
% Linebreaks \\ can be used within to get better formatting as desired.
% Do not put math or special symbols in the title.
\IEEEoverridecommandlockouts
%\title{IFCIL: An Information Flow Configuration Language for SELinux*\thanks{*This is the extended version of ....}}
\title{IFCIL: An Information Flow Configuration Language for SELinux

(Extended Version)}

% author names and affiliations
% use a multiple column layout for up to three different
% affiliations

%\author{\IEEEauthorblockN{Anonymized for submission}}

\author{
\IEEEauthorblockN{
  Lorenzo Ceragioli\IEEEauthorrefmark{1},
  Letterio Galletta\IEEEauthorrefmark{2}\IEEEauthorrefmark{4},
  Pierpaolo Degano\IEEEauthorrefmark{1}\IEEEauthorrefmark{2},
  and David Basin\IEEEauthorrefmark{3}
}
\IEEEauthorblockA{\IEEEauthorrefmark{1} Universit\`a di Pisa, Italy}
\IEEEauthorblockA{\IEEEauthorrefmark{2} IMT School for Advanced Studies Lucca, Italy}
\IEEEauthorblockA{\IEEEauthorrefmark{3} ETH Zurich, Switzerland}
\IEEEauthorblockA{\IEEEauthorrefmark{4} CINI Cybersecurity National Laboratory, Rome, Italy}
}

% conference papers do not typically use \thanks and this command
% is locked out in conference mode. If really needed, such as for
% the acknowledgment of grants, issue a \IEEEoverridecommandlockouts
% after \documentclass

% for over three affiliations, or if they all won't fit within the width
% of the page, use this alternative format:
%
%\author{\IEEEauthorblockN{Michael Shell\IEEEauthorrefmark{1},
%Homer Simpson\IEEEauthorrefmark{2},
%James Kirk\IEEEauthorrefmark{3},
%Montgomery Scott\IEEEauthorrefmark{3} and
%Eldon Tyrell\IEEEauthorrefmark{4}}
%\IEEEauthorblockA{\IEEEauthorrefmark{1}School of Electrical and Computer Engineering\\
%Georgia Institute of Technology,
%Atlanta, Georgia 30332--0250\\ Email: see http://www.michaelshell.org/contact.html}
%\IEEEauthorblockA{\IEEEauthorrefmark{2}Twentieth Century Fox, Springfield, USA\\
%Email: homer@thesimpsons.com}
%\IEEEauthorblockA{\IEEEauthorrefmark{3}Starfleet Academy, San Francisco, California 96678-2391\\
%Telephone: (800) 555--1212, Fax: (888) 555--1212}
%\IEEEauthorblockA{\IEEEauthorrefmark{4}Tyrell Inc., 123 Replicant Street, Los Angeles, California 90210--4321}}

% make the title area
\maketitle

%-------------------------------------------------------------------------------
\begin{abstract}
    Security Enhanced Linux (SELinux) is a security architecture for Linux implementing mandatory access control. 
    It has been used in numerous security-critical contexts  ranging from servers to mobile devices.
    %, e.g., it is a fundamental component of  Android. 
    But this is challenging as SELinux security policies are difficult to write, understand, and maintain.
    Recently, the intermediate language CIL was introduced to foster the development of high-level policy languages
    and to write structured configurations. 
    However, CIL lacks mechanisms for ensuring that the resulting configurations obey desired information flow policies.
    To remedy this, we propose \lang, a backward compatible extension of CIL for specifying fine-grained information flow requirements for CIL configurations.
    Using \lang, administrators can express, e.g., confidentiality, integrity, and non-interference properties.
    We also provide a tool to statically verify these requirements.
    %\footnote{  This is the extended version of .... } 
\end{abstract}
%endow \lang\ with
%-------------------------------------------------------------------------------

\section{Introduction}\label{sec:intro}
% !TEX root = ../sp2022-main.tex
%

Security Enhanced Linux (SELinux) is a set of extensions of the Linux kernel that implements a Mandatory Access Control mechanism.
It is widely used for defining security polices in Linux-based systems, including servers~\cite{Yokoyama}, network appliances~\cite{openWRTconfig}, and mobile devices~\cite{SmalleyC13}.
Defining an SELinux policy is conceptually simple: the system administrator defines a set of \emph{types}, uses them to label all system resources and processes, and then defines a set of rules specifying which operations the processes can perform on resources.
However, its use is far from simple.
Writing, understanding, and maintaining SELinux security policies is difficult and error-prone as evidenced by numerous examples of  misconfigurations~\cite{ImCW18} that have led to serious vulnerabilities in widely used policies.

To simplify working with SELinux and to address the limitations of its default policy language, the community called for and proposed new high-level configuration languages~\cite{issueHLL,Lobster}.
In particular, SELinux developers recently proposed the intermediate configuration language CIL (Common Intermediate Language),
which is a declarative language that offers advanced features to aid both policy specification and analysis.
CIL supports the definition of structured configurations, using, e.g., namespaces and macros, 
and enables administrators to specify which resources are critical, which entities can access them, and which cannot.
It also provides tool support to statically detect and prevent misconfigurations, which could lead to unauthorized access to security-critical resources.

However,  CIL currently provides no means to prevent unwanted indirect information flows, which is essential to preventing confidentiality and integrity breaches.
%However, the current protection mechanism of CIL  does not prevent unwanted indirect information flows.
To overcome this serious limitation, we propose \lang, an extension of CIL supporting information flow requirements, and we endow it with a verification procedure for statically checking that a configuration satisfies its requirements.

Our proposal consists of three parts. 
First, we propose the domain specific language (DSL), called \emph{IFL} (Information Flow Language), for expressing fine-grained information flow requirements, which we group in two categories:
\emph{functional} and \emph{security} requirements.
Functional requirements specify which permissions must be granted to users to perform their authorized tasks, such as which resources they can access and with which operations.
In contrast, security requirements prevent entities from operating on other possibly critical entities, and thereby enforce security properties, including confidentiality, integrity, and non-transitive information flow properties. 
Our DSL is compositional and supports the refinement of requirements with further restrictions both to make them more demanding and to adapt them to specific contexts.  

Second, we introduce \emph{\lang} (Information Flow CIL), which extends CIL with constructs to annotate configurations with IFL requirements.
Our extension  is backward compatible: an \lang\ configuration is also a valid CIL configuration and can be translated by the standard CIL compiler.

Finally, we  endow \lang\ with a verification procedure supported by an automated tool that, given a configuration, checks if its IFL requirements are satisfied.
We assess our tool's effectiveness and scalability on real-world configurations.

In summary, our main contributions are as follows.
\begin{itemize}
\item
We present the language IFL for expressing complex, fine-grained, information flow requirements in a declarative and
compositional way, including confidentiality, integrity, and non-transitive information flow properties. IFL requirements can be extended through
refinement and various access control languages can easily be augmented with IFL. Moreover, IFL requirements can be verified using off-the-shelf LTL model checkers.

 \item We propose \lang, the integration of IFL inside CIL. 
	We achieve this by using special comments that an administrator can associate with different parts of a CIL configuration. 
    We give an algorithm for statically verifying the compliance of a configuration to its IFL requirements. 
%    This algorithm is based on an encoding from IFL into linear-time temporal logic supporting the use of off-the-shelf model checkers for verification.
    
    \item We give CIL a formal semantics and empirically validate its adequacy with respect to the CIL reference manual and the CIL compiler.
	Besides providing the basis for our verification algorithm, the semantics and its experimental validation make it possible to understand CIL's trickier bits, and to illuminate some unspecified corner cases and disagreements between the documentation and the compiler.

    \item We provide a prototype tool~\cite{Tool}
%    \footnote{Available for download at \url{https://sites.google.com/view/ifcilpaper}} 
    that implements our verification procedure by leveraging \modelchecker, a popular model checker~\cite{NuSMV2}. 
        Our tool checks if an \lang\ configuration satisfies the requirements therein and, when they are violated, it warns the administrator about potentially dangerous parts of the configuration.
      
    \item 
    We experimentally assess our tool on three real-world CIL policies~\cite{openWRTconfig, cilbase, dspp5}.
    We annotate them with IFL requirements expressing properties taken from the literature and with new ones. 
    We thereby validate our tool and show that it scales well. For example, it takes less than two minutes to verify 39 requirements on the configuration in~\cite{openWRTconfig}, which has roughly 46,000 lines of code.
%    , and to verify that the policy ensures some relevant properties, like that there is no direct and indirect information flow from a critical resource to a non trusted sink.
%    \blu{
%    Furthermore, our analysis detects some ambiguities in the configuration due to CIL’s (previously)
%underspecified semantics.
%    }
    
%\alert{
%\begin{itemize}
%\item Can you add something impressive about what the evaluation shows? I.e., why is the evaluation a contribution?
%(This is important!)    
%\item details about the configuration (routers, number of lines) and which kind of requirements (inspired by neverallows)
%\end{itemize}
%}
%
\end{itemize}

\noindent
%The rest of the paper is structured as follows.
\emph{Outline:}
In Section~\ref{sec:back} we introduce SELinux, CIL, and the mechanism used by administrators to protect critical resources.
%In Section~\ref{sec:formal} we give a high-level account of our CIL semantics and the methodology we used to experimentally validate its adequacy.
In Section~\ref{sec:formal} we give a high-level account of our CIL semantics and how we experimentally validate its adequacy.
In Section~\ref{sec:IFL} we present \lang\ and we explain our verification procedure for checking the satisfiability of the requirements in Section~\ref{sec:verification}.
%We present our verification tool in Section~\ref{sec:tool} and its experimental assessment in Sub-section~\ref{sec:validation}.
%
%In Section~\ref{sec:related} we compare our work with the relevant literature and in Section~\ref{sec:conclude} we draw conclusions.   
%
In Section~\ref{sec:tool} we present our verification tool and our experimental assessment.
In Section~\ref{sec:related} we compare our work with the relevant literature and in Section~\ref{sec:conclude} we draw conclusions.   
%The details of our formal development and the proofs of our theorems are provided in the Appendix.
The appendices contain
the details of our formal development and the proofs of our theorems.

\section{Background}\label{sec:back}
% !TEX root = ../sp2022-main.tex

\paragraph{SELinux}
%\emph{SELinux:}
SELinux is a set of extensions to the Linux kernel and utilities~\cite{SELinux}.
It extends the major subsystems of the Linux kernel with strong, flexible, mandatory access control (MAC).
The SELinux security server permits or denies a process to invoke a system call on a resource based on a configuration specified by the system administrator.
To specify a configuration, an administrator defines a set of \emph{types}, and labels the OS resources and processes with them.
In addition, all resources belong to predefined \emph{classes}, such as file, process, socket, or directory.
A rule in a configuration relates the type $t$ and class $c$ of resources, and the type $t'$ of processes with the permitted operations.
A rule thereby specifies the actions that processes labelled $t'$ can perform on the resources of class $c$ labeled $t$, for example, read or write a file, execute a process, open a socket, or change the DAC rights of a directory.
A process \emph{P} can invoke a system call \emph{SC} on a resource \emph{R} only if there is a rule that permits \emph{P} to do so.

Administrators typically specify configurations using SELinux's \emph{kernel policy language}~\cite{Kerneldoc}.
Configurations are then compiled to a kind of (kernel binary) access-control matrix. 
However, this policy language is very low-level.
For example, it does not allow the administrator to structure configurations, which makes them hard to understand and maintain.
Thus using the kernel policy language is cumbersome and error-prone, as shown by the over permissive evolution of the Android policy~\cite{ImCW18}.
Some high-level configuration languages have been suggested with their own compilers and tools as an attempt to address these limitations~\cite{Lobster, Nakamura}.
Recently, the SELinux developers proposed a promising new intermediate configuration language with  advanced features and tools to support both the development of high-level languages and the definition of configurations. 
We briefly survey this language below.

\paragraph{CIL}
%\emph{CIL:}
The Common Intermediate Language (CIL)~\cite{CILdoc} was designed as a bridge between high-level configuration languages and the low-level binary representation introduced above.
Compilers from various configuration languages to CIL are intended to support multi-language policy definitions.
A compiler for the kernel policy language is currently available, and CIL is designed to support existing high-level configuration languages, e.g., Lobster~\cite{Lobster}, and future ones too.
Despite its original goal, CIL is also used to directly write configurations~\cite{dspp5,openWRTconfig,cilbase} for complex real-world policies, like for Android~\cite{aosp}.
Indeed, CIL provides its users with high-level constructs like nested blocks, inheritance, and macros, thereby supporting the structured definition of configurations.
Moreover, since CIL is declarative, it facilitates reasoning about configurations, and the same analysis techniques and tools for CIL can help when other high-level languages are used.

Roughly, a CIL configuration consists of a set of declarations of blocks, types, and rules.
Similarly to classes in programming languages, \emph{blocks} have names and introduce  namespaces and further declarations. 
\emph{Types} are labels that are associated with system resources and processes.
Rules regulate types by specifying which operations processes can perform on resources.
Intuitively, administrators can define two kinds of rules: those that grant permission to processes (\emph{allow} rules) and those that specify permissions that must be never granted to processes (\emph{never allow} rules).

Types can be grouped into named sets, called \emph{typeattributes}, which may be used inside rules to denote all the types therein.
Blocks can also contain \emph{macro} definitions that allow an administrator to abstract a set of rules and to reuse them in different parts of a configuration. 
Macros can have types as parameters that are instantiated when the macro is called.
Moreover, to foster code reuse and modularity, CIL features the construct 
\ttype{blockinherit} 
%\texttt{blockinherit}
that permits a block to \emph{inherit} from another block. 
Similarly to Object Oriented languages, all the definitions of rules and types in the inherited block are available in the inheriting block. 
The main difference is that inheritance is actually realized by a kind of copying rule.
%To foster code re-usability, CIL offers a mechanism that permits a block to \emph{inherit} from another block.
%

The most appealing features of CIL with respect to the kernel policy language of SELinux are blocks that enable the administrator 
%to define specific domains with their types and permissions, 
build modular configurations, as well as macros and inheritance that allow code reuse.

Below, we illustrate CIL's main features through examples.
These examples also illustrate that blocks, types, typeattributes and macros have names, and resolving them in the correct name space and order is non-trivial.

%Consider the declaration of a CIL block called \ttype{house} that declares two types, \ttype{man} and \ttype{object}, and the permission (the \ttype{allow} rule) for process labeled \ttype{man} to read the files labeled \ttype{object}:
Consider the following CIL block \ttype{house} that declares two types, \ttype{man} and \ttype{object}, and the permission (the \ttype{allow} rule) for processes labeled \ttype{man} to read the files labeled \ttype{object}:
%\alert{c'\`e un vspace}
\vspace{-1mm}
\begin{lstlisting}[style=cil]
(block house
    (type man)
    (type object)
    (allow man object (file (read))))
\end{lstlisting}
%
%within which there is the declaration of the type called \texttt{object}
% as follows:
%\texttt{(type object)}.
%\begin{lstlisting}[style=cil]
%  
%\lstinline{(type object)}
%\end{lstlisting}
%
%%%Blocks introduce namepsaces, the full name of a type \texttt{t} declared inside a block \texttt{B} is \texttt{B.t} (which is the name used in the binary representation).
%%%\alert{
%%%More precisely, every entity has a \textit{local name} occurring in its declaration, e.g. \texttt{a} for a type declared with \texttt{(type a)}; a \textit{full name}, e.g. \texttt{.b.c.a} for a type \texttt{a} declared inside a block \texttt{c} nested inside a block \texttt{b};
%%%and a set of \textit{relative names}, like \texttt{c.a} that refers to \texttt{.b.c.a} if it is used inside the block \texttt{b}.
%%%\\
%%%add terminology fully qualified name. L: invece di usare tipi di fantasia, userei i tipi di sopra.\\
%%%}
%The intended meaning of this example is that
%the processes of type \ttype{house.man} can read the elements of the class \ttype{file} labeled \ttype{house.object}.
%
Intuitively, processes of type \ttype{house.man} can read the elements of the class \ttype{file} labeled \ttype{house.object}.
Note that blocks introduce namespaces, and the elements defined therein may be referred to directly within the block itself, or by their qualified name, as done above.

The following block inherits the types \ttype{man} and \ttype{object} and the relevant permission from the block \ttype{house} through the \ttype{blockinherit} rule. 
\begin{lstlisting}[style=cil]
(block cottage
    (blockinherit house)
    (type garden))
\end{lstlisting}
Intuitively, \ttype{blockinherit} copies the body of the block \ttype{house}. Thus the qualified names of the copied types become \ttype{cottage.man} and \ttype{cottage.object}.
In contrast, the type \ttype{garden} is declared in the block, which is not in \ttype{house}.

Blocks can be nested, and the outermost block can refer to the elements in the nested ones by qualifying their names.
\begin{lstlisting}[style=cil]
(block tree
  (block nest
     (type egg))
  (type bird)
  (allow bird nest.egg (file (write))))
\end{lstlisting}
%
%An \texttt{allow} rule within a block can refer to a type \texttt{t} declared in another block \texttt{b}, through the qualified name \texttt{b.t}.
%
Intuitively, the last \ttype{allow} rule grants subjects with type \ttype{tree.bird} 
the permission to write to the 
files with type \ttype{tree.nest.egg}.

A global namespace is assumed that includes all the blocks, 
the global types, and the global permission.
%as well as global types and global permission.
For example, in 
\begin{lstlisting}[style=cil]
(type stranger)
(allow stranger inhouse.object (file (open)))
(block inhouse
    (type man)
    (type object)
    (allow man object (file (read)))
    (allow .stranger object (file (read)))
    (allow stranger object (file (write))))
\end{lstlisting}
 the name \ttype{stranger} and the fully qualified \ttype{.stranger} in the allow rules both refer to the global type \ttype{.stranger}.
%Note however that if the block \texttt{inside} declared a type \texttt{out}, the first reference would be to the local type, i.e., to \texttt{inside.out}.
Note however that if the block \ttype{inhouse} declared a type \ttype{stranger}, this declaration would overshadow the global one in the last \ttype{allow} rule, but not the third one since a fully qualified name is used.
Note too that the global \ttype{allow} rule refers to a type declared in the enclosed block.

The administrator can collect a set of rules using a macro-like construct, as shown in the following example.
\begin{lstlisting}[style=cil]
(block animal_mcr
  (macro add_dog((type x)(type y))
    (allow x man (file (read)))
    (allow y dog (file (open))))
  (type dog))
\end{lstlisting}
Macros are invoked as follows.
\begin{lstlisting}[style=cil]
(block animal_house
  (type man)
  (type cat)
  (call animal_mcr.add_dog(cat cat)))
\end{lstlisting}
Roughly, the content of \ttype{add\_dog} replaces the last line where the formal parameters \ttype{x} and \ttype{y} are bound to \ttype{animal\_house.cat}.
Names are resolved using a mechanism similar to dynamic binding: the name \ttype{dog} in the macro is resolved as \ttype{animal\_mcr.dog}, while \ttype{man} is resolved as \ttype{animal\_house.man}.
Name resolution can be rather intricate, especially when constructs are combined in non-trivial ways, such as when inheritance and macros are interweaved.
In these cases, configurations may have unexepected behaviour 
(see Section~\ref{sec:formal} for examples), 
and lead to misconfigurations that are difficult to spot. 
This problem is exacerbated by the fact that administrators cannot refer to a formal semantics, which CIL lacks.
One contribution of this paper is to provide such a semantics.
We define it in Appendix~\ref{app:CIL} and provide an intuitive account in Section~\ref{sec:formal}. 

An administrator can group types into named sets, called type attributes, which may be used in place of a type. 
The following declares two type attributes named \ttype{pet} and \ttype{not\_pet} and defines the types therein.
\begin{lstlisting}[style=cil]
(typeattribute pet)    
(typeattributeset pet 
    (or (animal_mcr.dog) (animal_house.cat)))
(typeattribute not_pet)
(typeattributeset not_pet
    (not (pet)))
\end{lstlisting}
The first type attribute includes the two types \ttype{animal\_mcr.dog} and \ttype{animal\_house.cat}. 
In contrast, the second one includes all the others.

Administrators can also specify which permissions should never be granted to a given type using \ttype{neverallow} rules.
The rule below prohibits subjects with type
\ttype{animal\_house.cat} to read resources of any type not in \ttype{pet}:
\begin{lstlisting}[style=cil]
(neverallow animal_house.cat not_pet (file(read)))
\end{lstlisting}  
The CIL compiler statically checks that no \ttype{allow} rule inside the configuration violates a \ttype{neverallow} rule.
In this example the compiler will report an error because \ttype{animal\_house.cat} can read the files of type \ttype{animal\_house.man} that is in \ttype{not\_pet}.
Although useful, as we explain below, these checks are insufficient to prevent insecure information flow.

\paragraph{An example from the security domain}

Consider the following block \ttype{mem} defined in~\cite{openWRTconfig},
a CIL configuration designed for OpenWrt powered wireless routers.
\begin{lstlisting}[style=cil]
(block mem
  (block read
    (typeattribute subj_typeattr)
    (typeattribute not_subj_typeattr)
    (typeattributeset not_subj_typeattr 
        (not subj_typeattr))
    (neverallow not_subj_typeattr nodedev 
                       (chr_file (read)))))
\end{lstlisting}
This block defines an inner block \ttype{read} and two disjoint type attributes. 
The first includes the system subjects, and the second includes other types.
The \ttype{neverallow} rule prevents \ttype{not\_subj\_typeattr} types from reading a character file of the globally defined type \ttype{nodedev}.
The underlying idea is that resources of type \ttype{nodedev} are critical for the system and must be carefully protected.
This block shows a typical pattern that administrators use to protect critical resources in CIL using type attributes and \ttype{neverallow} rules. 

%In Linux systems it is common to have critical resources for which the user must think twice about giving the access.
%In a CIL configuration, one may use typeattributes to distinguish which types can perform such critical operations.
%As an example, 
%A typeattribute \texttt{subj\_typeattr} is defined, with the idea that only types incuded in it should be able to read a chr\_file of type \texttt{nodedev}.
%One way of forcing this behaviour is to add a \texttt{neverallow} statement that prohibit such permissions to all the types not belonging to \texttt{subj\_typeattr} (line?).
This pattern offers an extra check.
In our example, if the administrator includes the following rule 
\begin{lstlisting}[style=cil]
(allow untrusted mem.read.nodedev (chr_file (read)))
\end{lstlisting}
that grants a type \ttype{untrusted} the permission to read a character file of type \ttype{nodedev}, then the CIL compiler raises an error. 
There are two ways to avoid this error: the administrator may either remove the last rule (because granting the permission is actually dangerous), or add \ttype{untrusted} to \ttype{subj\_typeattr} to grant the permission.

However, this pattern is insufficient to control how information flows.
For example, consider the following snippet
\begin{lstlisting}[style=cil]
(type untrusted)
(type vect)
(type deputy)
(typeattributeset mem.read.subj_typeattr deputy)
(allow deputy mem.read.nodedev (chr_file (read)))
(allow deputy vect (file (write)))
(allow untrusted vect (file (read)))
\end{lstlisting}
where the types \ttype{untrusted}, \ttype{vect}, and \ttype{deputy} are defined, and \ttype{deputy} is in \ttype{mem.read.subj\_typeattr}.
Now, a leak may occur if a subject in \ttype{subj\_typeattr} reads a character file of type \ttype{nodedev} and forwards information, via \ttype{vect}, to an arbitrary process of type \ttype{untrusted}, which is permitted by the given \ttype{allow} rules.

\paragraph{Preventing information flow}
Currently, CIL does not prevent indirect information flows between types.
The goal of our work is to extend it with a DSL, dubbed \emph{IFL}, to express information flow control requirements.
We call the resulting language \lang.
In addition, we endow \lang\ with a mechanism for statically checking that a configuration satisfies the stated requirements.
Our extensions provide administrators with an extra, automatic check when defining rules that grant or deny information flows from a critical resource.
%Intuiitively, our requirements behave like \texttt{neverallow}, \alert{preventing a configuration from being compiled if not complying with the desired behaviour. Non è proprio così: IFL è nei commenti, uno può sempre compilare con il compiltore CIL. Va detto in modo diverso.}

We provide some intuition behind our extension by adding the following lines to the \ttype{mem} block above:
\begin{lstlisting}[style=cil]
(typeattribute ind_subj_typeattr)
(typeattribute not_ind_subj_typeattr)
(typeattributeset not_ind_subj_typeattr 
    (not ind_subj_typeattr))
;IFL; ~(nodedev +> not_ind_subj_typeattr) ;IFL; 
\end{lstlisting}
The first three lines introduce two type attributes \ttype{ind\_subj\_typeattr}, and \ttype{not\_ind\_subj\_typeattr}, which are declared disjoint.
The last line, enclosed between the \ttype{;IFL;} markers  is IFL \emph{annotation} that specifies the infomation flow requirement that no information can flow from \ttype{nodedev} to \ttype{not\_ind\_subj\_typeattr}.
This annotation is given as a CIL comment that is used by our verification tool, but is completely ignored by the standard CIL compiler. 
Thus, an \lang\ configuration is still a CIL configuration. 

Note that IFL enables administrators to use a pattern similar to the pattern used with \ttype{neverallow}, preventing \ttype{not\_ind\_subj\_typeattr} types from getting information from a character file of type \ttype{nodedev}.
In this way, our tool warns the administrator of the information leakage from \ttype{nodedev} character files illustrated above.
%Indeed this extension provides administrators with an extra, authomatic check when defining rules that allow information flows from a critical resource.
%only types inside \texttt{ind\_subj\_typeattr} may obtain (also indirectly) information from \texttt{nodedev} files, by adding the following lines into the block \texttt{mem.read} in ~\ref{tab:example}.

%The IFL requirements state that any information flow from type \texttt{nodedev} to type \texttt{not\_ind\_subj\_typeattr} is forbidden. 

\section{Formalizing CIL}\label{sec:formal}
% !TEX root = ../sp2022-main.tex
%

%\begin{lstlisting}[style=cil]

%(type d)
%(macro m ((type x)
%     (allow x A.a perms))
%(block D
%     (macro m ((type x)
%          (allow x x perms))
%     (call A.m (d)))
%(block A
%     (type a)
%     (macro m ((type x))
%            (call B.m1(x))))
%(block B
%      (type a)
%      (macro m1 ((type x))
%            (type b)
%            (allow a b perms)
%            (allow x b perms)))
%(block C
%       (blockinherit D)
%       (type a)
%       (call m (b)))
%\end{lstlisting}

The official CIL documentation~\cite{CILdoc} does not formally describe CIL's syntax and semantics.
The following, admittedly artificial, configuration highlights the need for a formal semantics:
\begin{lstlisting}[style=cil]
(type a)
(block A
  (call B.m1(a)))
(block B
  (macro m1((type x))
    (type a)
    (allow a x (file (read)))))
\end{lstlisting}
One would expect the parameter \ttype{x} of the macro \ttype{B.m1} to be bound to the type \ttype{a} in the global namespace, thereby allowing \ttype{A.a} to read files of type \ttype{.a}.
Instead, \ttype{x} is bound to \ttype{A.a}, and the resulting permission for \ttype{A.a} is to read files of type \ttype{A.a}.

As a second example, consider the following configuration:
\begin{lstlisting}[style=cil]
(type a)
(macro m((type x))
  (type b)
  (allow x b (file (read))))
(block A
  (call m(a)))
(block B
  (type a)
  (blockinherit A))
\end{lstlisting}
Here the block \ttype{B} inherits from \ttype{A}, which calls the macro \ttype{m}.
There are two plausible orders in which macro calls and inheritances can be resolved, and the choice determines to which name the parameter \ttype{x}  is bound when the \ttype{allow} rule is copied in \ttype{B}.
If the macro call is resolved before inheritance, then \ttype{x} is bound to \ttype{.a} (since \ttype{a} is undefined in \ttype{A}).
If instead the inheritance is resolved first, then the call instruction is copied inside \ttype{B} and \ttype{x} is bound to \ttype{B.a}.
This is CIL's actual behaviour, but the reference guide is unclear about the choice.

%We present here the intuition behind the formal semantics that we propose, its definition is in the Appendix~\ref{app:formal}.
%We focus on the type enforcement fragment, which is the most used component of CIL.
%As expected, the semantics relies on name resolution and rule copying, and specifies the set of permissions granted to the actual types of the system, i.e. the global names.

\paragraph{Ambiguities in CIL}\label{sec:ambiguity}

We found cases that are counterintuitive, but nevertheless are represented by our semantics correctly, i.e. in accordance with the actual behaviour of the CIL compiler.
For example, the following
\begin{lstlisting}[style=cil]
(macro m(type x)
  (type a)
  (allow x x (file (read))))
(block A
  (call m(a)))
\end{lstlisting}
seems impossible to resolve, because the type \ttype{a} defined inside \ttype{m} is passed to \ttype{m} itself as a parameter.
However, this is not deemed to be erroneous according to the compiler's behaviour.
Namely, the type \ttype{a} is copied from the macro \ttype{m} to the block \ttype{A} and then passed as parameter to \ttype{m} itself.
In a similar puzzling way, if another type named \ttype{a} is defined, e.g., in the global environment, it is shadowed by \mbox{the type copied from the macro.}

We also found cases that %appear meaningless but are not 
are meaningless, but are not detected as such
by the compiler.
In particular this is when typeattributes are recursively defined in a vacuous manner.
Consider for example the following configuration:
\begin{lstlisting}[style=cil]
(type a)
(typeattribute b)
(typeattribute c)
(typeattributeset b (not c))
(typeattributeset c b)
(allow b b (file (read)))
(allow c c (file (read)))
\end{lstlisting}
The typeattribute \ttype{b} should contain all the elements that are not in itself, which is a contradiction.
This error is not detected by the compiler, and a kernel policy is produced whose behaviour cannot be predicted using what we know about the semantics.
In fact, according to the compiler, \ttype{a} belongs to \ttype{b} but not to \ttype{c}, which is again contradictory since \ttype{c} is defined to be the same as \ttype{b}.
Note that such misconfigurations may arise silently in complex code where typeattributes are set using macros in different places in the code.
Indeed, we found such cases in the openWRT configuration that we used for assessing our tool.
Our tool warns the administrator about such misconfigurations and approximates the configuration behaviour by pruning the recursion tree to remove circularity.
This misbehaviour from the compiler deserves further investigation.
%As we will later see, our tool warns the administrator about such misconfigurations and approximates the configuration behaviour by pruning the recursion tree.

\paragraph{Formal semantics of CIL}

To clarify the behaviour of CIL configurations, and to formally support \lang\ and its verification mechanism, we 
provide a formal semantics for CIL.
Our semantics focuses on the type enforcement fragment of the language, which is its most used part (see the real-word CIL configurations in Section~\ref{sec:validation}),  and  maps each system type to its set of permissions.
%Although some features are not explicily addressed, our tool deals with all the type enforcement constructs used in real work CIL configurations, see Section~\ref{sec:validation}.

%
%To keep our formal development manageable, our semantics focuses on the type enforcement fragment of the language, which is its most used part.
%\alert{qui dire di pi\'u}
%Moreover, we do not consider the CIL constructs that allows to define the operations to be used inside allow rules.  
%%QUI?? because?? and  maps each system type to its set of permissions.
%However, the last one is not a limitation because the other constructs can be encoded in the considered fragment. 
%Actually, as we discuss in Section~\ref{sec:validation} our tool deals with (almost) all the constructs used in real work CIL configurations. 
     
In this section, we provide a high-level overview of our CIL semantics. Its detailed formalization is given in Appendix~\ref{app:CIL}.

Our semantics benefits from a normal form for configurations.
Roughly, we resolve inheritance and macro calls and fully qualify all names.
We compute this normal form using the following rewriting pipeline. 
This pipeline consists of six phases, where each phase repeatedly applies a set of rewrite transformations until the fixed point is reached.
\begin{enumerate}
\item The block names in \ttype{blockinherit} rules are resolved locally, if possible, or globally otherwise.
\item \ttype{blockinherit} rules are replaced by the content of the blocks they refer to.
\item The names of macros in \ttype{call} rules are resolved locally, if possible, or globally otherwise.
\item The declarations of types and typeattributes are copied from the body of the macros in the calling blocks.
\item\label{ph:calls} Macro calls are resolved: the type names in the parameters of \ttype{call} rules are resolved locally, if possible, or globally otherwise; then the \ttype{allow} rules are copied from the macros in the calling blocks. 
While copying, the non-local names in the \ttype{allow} rules are resolved in the block containing the macro definition, if possible; otherwise the resolution is delegated to further application of~(\ref{ph:calls}), until no longer possible, and then to~(\ref{ph:allow});
\item\label{ph:allow} The names in \ttype{allow} and \ttype{typeattributeset} rules in blocks are resolved locally, if possible, or globally if not.
\end{enumerate}

\noindent
The configuration in the second example is transformed by the first four phases into the left configuration below, where the \ttype{(blockinherit A)} first becomes \ttype{(blockinherit .A)} and then is resolved as \ttype{(call m(a))}; the macro name in the two occurences of \ttype{(call m(a))} are both resolved to \ttype{.m}; finally the type definition \ttype{(type b)} is copied from the macro to blocks \ttype{A} and \ttype{B}.
Phase~(\ref{ph:calls}) copies the \ttype{allow} rule instantiating the parameter \ttype{x} to the names \ttype{.a} in \ttype{A} and \ttype{.B.a} in \ttype{B}.
Finally, the two occurences of \ttype{b} are resolved to \ttype{.A.b} and \ttype{.B.b}.
Note that this representation is that of the binary representation, where names are always fully qualified.
The resulting configuration is on the right below.
\\

\begin{tabular}{ll}
\hspace{-0.8cm}
\begin{lstlisting}[style=cil]
(type a)
(macro m((type x))
  (type b)
  (allow x b 
    (file (read))))
(block A
  (type b)
  (call .m(a)))
(block B
  (type a)
  (type b)
  (call .m(a)))
  
\end{lstlisting}
&
\hspace{-0.4cm}
\begin{lstlisting}[style=cil]
(type a)
(macro m((type x))
  (type b)
  (allow x b 
    (file (read))))
(block A
  (type b)
  (allow .a .A.b (file (read)))
(block B
  (type a)
  (type b)
  (allow .B.a .B.b (file (read)))
\end{lstlisting}
\end{tabular}

\

%\noindent
%Phase~(\ref{ph:calls}) copies the \ttype{allow} rule instantiating the parameter \ttype{x} to the different names \ttype{.a} in \ttype{A} and \ttype{.B.a} in \ttype{B}.
%Finally, the two occurences of \ttype{b} are resolved to \ttype{.A.b} and \ttype{.B.b}.
%Note that this representation is that of the binary representation, where names are always fully qualified.
%The resulting configuration is the right one above. 

Given a configuration in normal form, our semantic function represents it as a directed labelled graph $G = (N, \ta, A)$.
The nodes $N$ model the types and the typeattributes (with global names), and the function $\ta \colon N \rightarrow 2^N$ represents the types contained in a typeattribute (assuming $\ta(n) = \{n\}$ when $n$ is a type, which will be always the case in our examples).
The arcs $A \subseteq N \times 2^O \times N$ model permissions, where $O$ is the set of SELinux operations; we assume that whenever the typeattribute $m$ operates on $m'$, there are also the arcs $(n, o, n')$, for all $n \in \ta(m)$ and $n' \in \ta(m')$.
The meaning of $(n, o, n')$ is that the type $n$ is allowed to perform all operations in $o$ on the resources of type $n'$.
The formal definition of the semantic function is straightforward.

For example, the configuration above is associated with the following graph, where $\ta$ maps a node into the singleton set containing itself and we omit $\{\}$ for singleton sets on the arcs.

\begin{center}
                     \resizebox{0.35\textwidth}{!}{
\begin{tikzpicture}
\node[font=\footnotesize, circle, draw, thick, yshift=0cm, xshift=0cm, minimum size=8mm] (a) {.a};
\node[font=\footnotesize, circle, draw, thick, yshift=0cm, xshift=2cm, minimum size=8mm] (Ab) {.A.b};
\node[font=\footnotesize, circle, draw, thick, yshift=0cm, xshift=4cm, minimum size=8mm] (Ba) {.B.a};
\node[font=\footnotesize, circle, draw, thick, yshift=0cm, xshift=6cm, minimum size=8mm] (Bb) {.B.b};

\begin{scope}[>=latex]
\draw[<-, black] (a) to node[above, font=\footnotesize] {\ttype{read}} (Ab);
\draw[<-, black] (Ba) to node[above, font=\footnotesize] {\ttype{read}} (Bb);
\end{scope}
\end{tikzpicture}                                }
\end{center}
%
%is associated with the configuration above, where its nodes include the singleton type image of the function $\ta$.

\paragraph{Adequacy of the formalization}

We define the CIL formal semantics to reflect both the implicit semantics given by the reference manual and the operational semantics defined by the compiler.
However, the documentation and the compiler sometimes disagree.
In addition, the manual has both underspecified and ambiguous cases.
When these mismatches occur and when unexplainable behavior arise, we asked CIL developers about the intended behavior~\cite{mail-log1, mail-log2}.
Some cases have been recognized as compiler bugs and the developers will fix them, whereas they will update the documentation in other cases~\cite{mail-log3}.

We identified name resolution as the most involved part of CIL's semantics, especially when 
name resolution interacts with inheritance or  macros.

As an example of a mismatch between the compiler and the reference manual, consider the following configuration:
\begin{lstlisting}[style=cil]
(block A
  (type a)
  (macro m ()
    (type a)
    (allow a a (file (read)))))
(block B
  (call A.m))
\end{lstlisting}
According to the manual, types defined inside the macro should be checked before those defined in the namespace where the macro is defined.
Hence, when copying the \ttype{allow} rule from \ttype{m} to \ttype{B}, we expect the type \ttype{a} to be resolved as \ttype{B.a}.
But it is resolved instead as \ttype{A.a}.
The CIL developers agreed that this is a bug of the compiler~\cite{mail-log3}.

The reference manual lacks a description of how the composition of CIL constructs behaves.
In particular, the composition of macro calls and block inheritance behaves differently, depending on the order in which they are resolved.
The beginning of this section presented several configurations with this kind of problem.
Since the manual specifies no evaluation order and even ignores this problem, we based the adequacy of this part of the semantics entirely on the compiler and on the developers' feedback.

To understand how to correctly compose the semantics of the different constructs, we performed comprehensive testing, discriminating between different orders.
Our tests indicate that macro calls are handled after block inheritance (i.e., phases (1) and (2) are executed before phases (3) to (6)).
We discovered that no order works for the resolution of different occurrences of block inheritance, and the same applies with different occurrences of macro calls.
This is because different occurrences of the same construct are resolved in an interleaved manner.
In other words, this resolution consists of a number of steps that are executed in the given order for all the occurrences.
For example, all the occurrences of block inheritance must complete phase (1) before any of them starts phase (2).
%
%This is because the resolution of different occurrences is interleaved, namely the resolution of CIL constructs has to be split into a number of steps (i.e., phases (1) and (2) for block inheritance and (3) to (6) for macro calls) to be executed in the given order for all the occurrences.
%
Note that our semantics may seem counterintuitive in some corner cases like those mentioned in paragraph~\ref{sec:ambiguity}, but it is in agreement with the developers' intent.

\section{The policy language \lang}\label{sec:IFL}
% !TEX root = ../sp2022-main.tex
%
This section introduces IFL, our DSL for defining annotations that enable administrators to express infomation flow control requirements.
We integrate IFL with CIL, obtaining the policy language \lang, where annotations are composed with CIL constructs.
In addition, we endow \lang\ with a mechanism for statically checking that configurations satisfy their requirements.

\subsection{IFL}\label{subsec:ifl}

The constructs of IFL consider SELinux entities, typically types, and the flow of information between them.
Using IFL we define both functional requirements, allowing authorized information flows, and security requirements, preventing dangerous information flows.

\paragraph{The language}
We use IFL to model  
%the transfer of
how information flows from one node of the graph associated with a type by the semantics, to another node, by listing the traversed nodes in the graph, and the operations allowed on them.
This is done by defining a flow \textit{kind} $P$ using the following grammar.
%\begin{smalign*}
%P &::= n\ P' \mid P'\\
%P' &::= Arrow\ n \mid Arrow \mid P'\ P'\\
%Arrow &::= \ttype{[}o\ttype{]>} \mid \ttype{[}o\ttype{]>} \mid \ttype{+>} \mid \ttype{>}
%\end{smalign*}
%where $n$ are nodes and $o$ are sets of applicable operations.
%
\begin{smalign*}
%P &::= n\ P' \mid P'\\
P ::= \ttype{n\ [o]> n}'  \mid \ttype{n\ +[o]> n}' \mid P_1\ P_2
\end{smalign*}
%\alert{
%n\ [o]+> n'  \quad oppure \quad n\ [o]>+ n' ? \quad direi non n\ [o]>$^+$ n' \\
%}
In this grammar, \ttype{n} and \ttype{n}$'$ are the starting and the ending nodes in a path of length $1$ for \ttype{[o]>}, and of length $1$ or longer for \ttype{+[o]>}. Nodes may also be given using the wildcard \ttype{*} standing for any node representing a type.
The non-empty set $o \subseteq O$ contains a subset of the applicable operations, and it is omitted when it is the entire set $O$.
The labeled path $P_1\ P_2$ is additionally constrained so that the ending point of $P_1$ matches the starting one of $P_2$.
%, which is omitted.

The direction of arrows reflects how information flows in the graph, e.g., \ttype{n [write,read]> n$'$} means that information flows from \ttype{n} to \ttype{n$'$} when \ttype{n} writes on \ttype{n$'$} or \ttype{n$'$} reads from \ttype{n} (the operations in square brakets are the only applicable ones in this step).
A direct information flow is represented as a single step \ttype{n > n$'$}, whereas an indirect information flow is represented by multiple steps 
\mbox{\ttype{n +> n$'$}.}
A kind can also mention intermediate steps,
e.g., \ttype{n > * > n$''$ +> n$'$} specifies that information flows in two steps (through an unspecified node) from \ttype{n} to \ttype{n$''$} and then in multiple steps to \ttype{n$'$}.
%Also, arrows possibly carry a set \ttype{o} restricting the applicable operations that may cause information flow.
%For example, the path type \ttype{n +> n' [read]> n''} specifies that the information originates in \ttype{n}, passes through the intermediate node \ttype{n'} and arrives at \ttype{n''} that performs a single \ttype{read} step.
%+> n is the type of the paths that end with n;
%+>>> n is the type of the paths with at least three steps that end with n;
%[+]>n>n' is the type of the paths that end with n' and are such that all flows are weak apart from the last one which may be of any modality and is from n to n'

%We will say that a path \pi is of type P if its starting, ending,intermediate points and the modalities match. E.g. the path n; n'''; n'; n'' in the following graph is of type n +> n' +> n''.

Kinds are used to constrain the admissible paths of a configuration.
Given the semantics $G=(N, \ta, A)$ of a configuration, the following construction builds an \emph{information flow diagram}, i.e., a directed graph $I = (N, \ta, E)$, where the arcs of $E$ are built as follows.
For any arc $(n, o, n') \in A$, $E$ contains: $(i)$ the arc $(n, o', n')$, where $\emptyset \neq o' \subseteq o$ are the operations of $n$ on $n'$ that cause an explicit information flow from $n$ to $n'$ (e.g., \ttype{write}); $(ii)$  the arc $(n', o'', n)$, where $\emptyset \neq o'' \subseteq o$ are the operations of $n$ on $n'$ that cause an explicit information flow from $n'$ to $n$ (e.g., \ttype{read}).

%The administrator can state requirements on configurations, predicating on the information flow diagram $I$ and flow kinds, of the form
The administrator can state the requirements on configurations given by the following grammar
\vspace{-1mm}
\begin{smalign*}
\mathcal R ::= P \mid \ttype{\textasciitilde}P \mid P \ttype{:} P',
\end{smalign*}
\noindent
for assertions about the information flow diagram $I$ and flow kinds.
In particular, the first type of requirement, $P$, is \emph{path existence}, which stipulates the existence in $I$ of a path $\pi$ of kind $P$.
The second, $\ttype{\textasciitilde}P$, specifies \emph{path prohibition} and requires that there are no paths in $I$ of kind $P$.
The third is \emph{path constraint} and requires that every path $\pi$ of kind $P$ in $I$ is also of kind $P'$.
 
%. They have either form, where $\pi$ ranges on the paths of $I$
%\begin{description}
%\item[\hspace{-.6cm} path existence:] $P$ requires the existence of a path $\pi$ of kind $P$;
%\item[\hspace{-.6cm} path prohibition:] $\ttype{\textasciitilde}P$ requires that no paths $\pi$ of kind $P$ exist;
%\item[\hspace{-.6cm} path constraint:] $P \ttype{:} P'$ requires every path $\pi$ of kind $P$ to be also of kind $P'$.
%\end{description}
%
%\alert{non compare prima, andrebbe spiegata e fatta vedere. E' quella di Table~1?}
%
\begin{figure}[tb]
\begin{center}
                     \resizebox{0.4\textwidth}{!}{
\begin{tikzpicture}
\node[font=\footnotesize, circle, draw, thick, yshift=0cm, xshift=0cm, minimum size=8mm] (DB) {.DB};
\node[font=\footnotesize, circle, draw, thick, yshift=0cm, xshift=3.75cm, minimum size=8mm] (http) {.http};
\node[font=\footnotesize, circle, draw, thick, yshift=1.75cm, xshift=0cm, minimum size=8mm] (anon) {.anon};
\node[font=\footnotesize, circle, draw, thick, yshift=1.75cm, xshift=4.25cm, minimum size=8mm] (home) {.home};
\node[font=\footnotesize, circle, draw, thick, yshift=1.75cm, xshift=7cm, minimum size=8mm] (other) {.other};
\node[font=\footnotesize, circle, draw, thick, yshift=0cm, xshift=7cm, minimum size=8mm] (net) {.net};

\begin{scope}[>=latex]
\draw[<-, black] (DB) to node[left, font=\footnotesize] {\ttype{read}} (anon);
\draw[->, dashed, gray, bend right] (DB) to node[right, font=\footnotesize] {\ttype{read}} (anon);

\draw[<-, black] (home) to node[left, font=\footnotesize, yshift=0.1cm] {\ttype{read}} (http);
\draw[->, dashed, gray, bend left] (home) to node[right, font=\footnotesize, yshift=0.15cm] {\ttype{read}} (http);

\draw[<-, black] (DB) to node[above, font=\footnotesize] {\ttype{write}} (http);
\draw[<-, dashed, gray, bend right] (DB) to node[above, font=\footnotesize] {\ttype{write}} (http);

\draw[<-, black] (anon) to node[above, font=\footnotesize] {\ttype{read}} (http);
\draw[->, dashed, gray, bend left] (anon) to node[above, font=\footnotesize] {\ttype{read}} (http);

\draw[<-, black] (net) to node[above, font=\footnotesize] {\ttype{read}, \ttype{write}} (http);
\draw[<-, dashed, gray, bend left] (net) to node[above, font=\footnotesize] {\ttype{write}} (http);
\draw[->, dashed, gray, bend right] (net) to node[above, font=\footnotesize] {\ttype{read}} (http);

\draw[->, black, dotted] (other) to node[right, font=\footnotesize, yshift=0.15cm] {} (home);
\end{scope}
\end{tikzpicture}
                                 }
                                 \vspace{-3mm}
\end{center}
\caption{A simple configuration (black solid arcs) and its information flow diagram (gray dashed arcs); the dotted arc represents inclusion of the target in the typeattribute of the source.}\label{fig:conf+diag}
\end{figure}
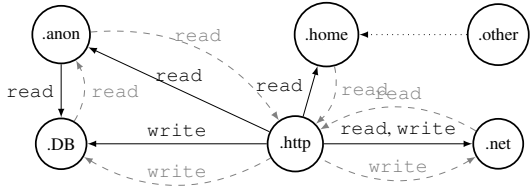
\figurename~\ref{fig:conf+diag} shows the graph semantics of a simple configuration (with the black solid arcs) and its information flow diagram (with the gray dashed arcs).
A dotted arc from a node $t$ to a node $t'$ indicates that $t'$ is in $ta(t)$.
We will further discuss this configuration in Figure~\ref{fig:annotation}.
Intuitively, the entities of type \ttype{http} collect information from the network into the database and make data available to the network and to additional entities of type \ttype{home}.
Information can flow from the network into the database and vice versa, as the configuration satisfies the functional requirements \ttype{net +> http +> DB} and \ttype{DB +> http +> net} (passing through \ttype{anon}).
Moreover, the following security requirements are met: \ttype{\textasciitilde (DB +> other)} and \ttype{DB +> net : DB > anon +> net}.
The first states that no information flows from the database to the generic, untrusted types in \ttype{other}; the second requirement says that the private information in the database passes through \ttype{anon} (where, for example, anonymization takes place) before being delivered in the network.

\paragraph{Formal semantics}
We formalize next when a configuration satisfies a given requirement.
We define a path $\pi$ of an information flow diagram $I$, and when the path $\pi$ is of kind $P$.
Intuitively, this holds when the information flow passes through the specified nodes in the correct order as a result of the designated operations.
%
%\begin{definition}[information flow path]
%Let $I = (N, \ta, E)$ be an information flow diagram, and let $(n_1, o_1, n_2) (n_2, o_2, n_3) ... (n_i, o_i, n_{i+1})$ be a non empty sequence of arcs. 
%A \emph{path} in $I$ is the sequence
%\[
%\pi = (m_1, o_1, m_2) (m_2, o_2, m_3) ... (m_i, o_i, m_{i+1})
%\] 
%where $\forall 1\leq j \leq i+1. m_j \in \ta(n_j)$.
%\end{definition}
\begin{definition}[information flow path and kinds]\label{def:ifpath-kind}
Let $I = (N, \ta, E)$ be an information flow diagram, a \emph{path} in $I$ is the non-empty sequence
\vspace{-1mm}
\begin{smalign*}
\pi = (n_1, o_1, n_2) (n_2, o_2, n_3) ... (n_i, o_i, n_{i+1}).
\end{smalign*} 
We say that $\pi$ has kind $P$ in $I$, in symbols $\pi \triangleright_I P$, iff 
\vspace{-1mm}
\begin{smalign*}
(\mathtt{n}, o, \mathtt{n'}) \triangleright_I  \mathtt{m}\ \ttype{[}o'\ttype{]>}\ \mathtt{m'} 
          \ \  & \text{ iff }\   (m = *\ \lor\ \mathtt{n} \in \ta(m)) \land
         \\  & \qquad (m' = *\ \lor\ \mathtt{n'}\in \ta(m')) \land
          \\ 
           & \qquad  o \cap o' \neq \emptyset 
          \\
(\mathtt{n}, o, \mathtt{n'}) \triangleright_I  \mathtt{m}\ \ttype{+[}o'\ttype{]>}\ \mathtt{m'} 
          \ \  & \text{ iff }\  
          (\mathtt{n}, o, \mathtt{n'}) \triangleright_I  \mathtt{m}\ \ttype{[}o'\ttype{]>}\ \mathtt{m'} 
 \\      
(\mathtt{n}, o, \mathtt{n'})\, \pi \triangleright_I  \mathtt{m}\ \ttype{+[}o'\ttype{]>}\ \mathtt{m'} 
          \ \  & \text{ iff }\   (\mathtt{n}, o, \mathtt{n'}) \triangleright_I  \mathtt{m}\ \ttype{[}o'\ttype{]>}\ *  \land
          \\   
          &  \qquad  \pi \triangleright_I  * \ \ttype{+[}o'\ttype{]>}\ \mathtt{m'} 
          \\          
\pi \triangleright_I P_1\ P_2 
          \  \  & \text{ iff }\   \exists \pi', \pi''.\ \pi = \pi' \pi'' 
            \land\, 
            \\ & \qquad \pi' \triangleright_I P_1  \,\land\, \pi'' \triangleright_I P_2 
%%&(\mathtt{n}, o, \mathtt{n'}) \triangleright_I  \mathtt{m}\ \ttype{[}o'\ttype{]>}\ \mathtt{m'} 
%%          \ \  \text{ \bf{iff} }\  \ (m = *\ \lor\ \mathtt{n} \in \ta(m)) \land
%%%         \\  &\qquad \land (m' = *\ \lor\ \mathtt{n'}\in \ta(m')) 
%%          \\ 
%%          & \qquad \qquad \qquad  (m' = *\ \lor\ \mathtt{n'} \in \ta(m'))  \land  o \cap o' \neq \emptyset 
%%          \\
%%&(\mathtt{n}, o, \mathtt{n'}) \triangleright_I  \mathtt{m}\ \ttype{+[}o'\ttype{]>}\ \mathtt{m'} 
%%          \ \  \text{ \bf{iff} }\  \ 
%%          (\mathtt{n}, o, \mathtt{n'}) \triangleright_I  \mathtt{m}\ \ttype{[}o'\ttype{]>}\ \mathtt{m'} 
%% \\      
%%&(\mathtt{n}, o, \mathtt{n'})\, \pi \triangleright_I  \mathtt{m}\ \ttype{+[}o'\ttype{]>}\ \mathtt{m'} 
%%          \ \  \text{ \bf{iff} }\  \  (\mathtt{n}, o, \mathtt{n'}) \triangleright_I  \mathtt{m}\ \ttype{[}o'\ttype{]>}\ *  \land
%%          \\   
%%          & \qquad \qquad \qquad  \pi \triangleright_I  * \ \ttype{+[}o'\ttype{]>}\ \mathtt{m'} 
%%          \\          
%%&\pi \triangleright_I P_1\ P_2 
%%          \ \  \text{ \bf{iff} }\  \  \exists \pi', \pi''.\ \pi = \pi' \pi'' 
%%            \land\, \pi' \triangleright_I P_1  \,\land\, \pi'' \triangleright_I P_2 
\end{smalign*}
\vspace{-1mm}
\end{definition}

%Now, we define when a path $\pi$ of an information flow diagram $I$ is of kind $P$.
%\alert{
%Intuitively, $\pi$ is of kind $P$ if the information flow passes through the specified nodes in the correct order as a result of the designated operations.
%}
%
%\begin{definition}[kind of paths]
%Let $\pi$ be a path in an information flow diagram $I$, and let $P$ be a kind.
%We say that $\pi$ has kind $P$ in $I$, in symbols $\pi \triangleright_I P$, iff 
%\begin{smalign*}
%(\mathtt{n}, o, \mathtt{n'}) \triangleright_I  \mathtt{n}\ \ttype{[}o'\ttype{]>}\ \mathtt{n'} 
%          &\text{ iff }  o \cap o' \neq \emptyset 
%          \\
%(\mathtt{n}, o, \mathtt{n'}) \triangleright_I  \mathtt{n}\ \ttype{[}o'\ttype{]>}\ \mathtt{n'} 
%          &\text{ iff } (\mathtt{n}, o, \mathtt{n'}) \triangleright_I  \mathtt{n}\ \ttype{[}o'\ttype{]>}\ \mathtt{n'} 
%          \\          
%(\mathtt{n}, o, \mathtt{n'})\, \pi \triangleright_I  \mathtt{n}\ \ttype{[}o'\ttype{]>}\ \mathtt{n''} 
%          &\text{ iff } (\mathtt{n}, o, \mathtt{n'}) \triangleright_I  \mathtt{n}\ \ttype{[}o'\ttype{]>}\ \mathtt{n'} \land  \pi \triangleright_I  \mathtt{n}\ \ttype{[}o'\ttype{]>}\ \mathtt{n''} 
%          \\          
%\pi \, \pi' \triangleright_I P_1\ttype{:}P_2 
%          &\text{ iff } \pi \triangleright_I P_1  \,\land\, \pi' \triangleright_I P_2 
%\end{smalign*}
%\end{definition}
%
The second part of the definition has four cases.
The first case considers a path in the information flow diagram made of a single arc of a simple kind; since there exists an operation \ttype{op}$\,\in o \cap o'$, the arc (\ttype{n, op, n'}) can be followed transferring information from \ttype{n} to \ttype{n'}.
The second case reduces $\ttype{+[}o'\ttype{]>}$ to $\ttype{[}o'\ttype{]>}$. 
The third case simply iterates the checks along a path longer than one.
In the final case, we split a path into a prefix satisfying $P_1$ and a suffix satisfying $P_2$.
Recall that the wildcard $*$ stands for any node and can replace $n, n'$, and $n''$ above.
For example, the first clause can be rewritten as
$(\mathtt{n}, o, \mathtt{n'})\, \triangleright_I  *\ \ttype{[}o'\ttype{]>}\ \mathtt{n'} \text{ iff }  o \cap o' \neq \emptyset$
holds because the kind $*\ \ttype{[}o'\ttype{]>}\ \mathtt{n}'$ says that information flows from any node to $\mathtt{n'} $.

The predicate $I \models {\mathcal R}$ defined below expresses that a configuration with information flow diagram $I$ satisfies the requirement ${\mathcal R}$.
\begin{definition}[validity of a configuration]\label{def:requirementsem}
Let $I$ be an information flow diagram, and let $\mathcal R$ be a requirement of a given configuration.
We define $I$ to be valid w.r.t.~$\mathcal R$, in symbols $I \models {\mathcal R}$, by cases on the syntax of $\mathcal R$ as follows:
\begin{smalign*}
I \models P & \text{ iff } \exists \pi \text{ in } I \text { such that } \pi \triangleright_I P
\\
I \models \ttype{\textasciitilde} P & \text{ iff } \lnot (I \models P)
\\
I \models P_1:P_2 & \text{ iff } \forall \pi \text{ in } I \text { if } \pi \triangleright_I P_1
           \text{ then } \pi \triangleright_I P_2
\end{smalign*}
\end{definition}
It is immediate to verify that the requirements on the configuration in \figurename~\ref{fig:conf+diag} are indeed satisfied.

\paragraph{Expressivity}

A path existence constraint expresses a functional requirement, namely that a specific information flow is allowed.
If satisfied, this constraint ensures the administrator that the configuration does not prevent the system from performing the desired
task.
In contrast, a prohibition constraint specifies a security requirement: a configuration obeying it never goes wrong.
For example, one can easily specify confidentiality in a Bell-La Padula style, or integrity in the Biba integrity model.
Finally, path constraints can express nontransitive properties, like intransitive noninterference.
For example $\mathtt{n \ttype{+>} n' : n \ttype{+>} n'' \ttype{+>} n'}$ requires that the type 
$\mathtt{n}$ cannot transmit any information to $\mathtt{n'}$ unless it is done through $\mathtt{n''}$.

IFL can express (positive and negative) reachability properties with constraints on the paths. 
Since information flow diagrams are labeled transitions systems, IFL has similarities to temporal logics.
Actually, IFL kinds can be expressed as LTL formulas, as the encoding in Section~\ref{sec:LTLencoding} shows.
However, IFL allows an additional quantification over paths, which can appear only at top level and is not \mbox{expressible in LTL.}

\subsection{\lang}

We introduce the language \lang, obtained by integrating IFL into CIL.
%Namely, we augment CIL with comments that specify IFL requirements.
%More precisely, we add the following two constructs.
More precisely, we add the following two constructs to augment CIL with comments that specify IFL requirements.
\begin{enumerate}
\item 
\emph{Information flow requirement definitions}, which may occur in blocks and macros.
We use them to specify IFL requirements with labels, which must be satisfied by the allow rules of the configuration where they occur.
Requirements are copied when calling a macro or inheriting a block, and are managed coherently with the other rules, e.g., concerning name resolution.
\item
\emph{Refinement of requirements}, which may occur within \ttype{call} and \ttype{blockinherit} instructions.
Refinements strengthen requirements by further elaborating  constraints in the inheriting or caller block.
%They can be copied or inherited, and are managed coherently with the other rules, e.g., concerning name resolution.
%Besides being copied, requirements can be refined so as to further detail its constraints in the inherited block or the called macro.
\end{enumerate}
%Also a referred requirements can be refined to obtain a more specific information flow constraint.
To ensure backward compatibility, requirement definitions and refinements are enclosed between \ttype{;IFL;} thereby taking the form of CIL comments.

The example in \figurename~\ref{fig:annotation} illustrates both constructs.
For example, the second line in \figurename~\ref{fig:annotation} contains a functional requirement labeled with \ttype{(F1)} that requires the existence of a direct or indirect information flow from the node \ttype{inp} to \ttype{out}.
Nodes in the IFL requirements are types, and are resolved as any other CIL name, e.g., the parameter \ttype{out} is bound to \ttype{net} in the first call of the macro \ttype{in\_out}; in contrast, the requirements \ttype{F1} and \ttype{F2} are simply copied.

In the second call, the administrator duplicates the requirements and refines them with further constraints about how information must flow, since the intermediate node \ttype{http} is inserted in the requirements.
Note that the new labels refer to those of the original requirements.
The new requirements impose that a flow must exist from \ttype{net} (instantiating the parameter \ttype{inp}) to \ttype{DB} (instantiating \ttype{out}) and vice-versa, both passing through \ttype{http}.
Note that since the wildcard is used there is no constraint on the actual parameters.
The refined requirement \ttype{(F1R:F1)} results then in the path constraint 
\ttype{net +> http : http +> DB}, while the refined requirement \ttype{(F2R:F2)} is \ttype{DB +> http : http +> net}.

Similarly, in the call to \ttype{anonymize}, the requirement \ttype{(S1)} is refined by specifying that the operation in the single step is \ttype{read}.
Of course, the same happens when inheriting a block.
Finally, the requirement \ttype{(S2)} states that information cannot flow from \ttype{DB} to \ttype{other}.
\begin{figure}
\begin{lstlisting}[style=cil]
(macro in_out((type inp) (type out))
  ;IFL; (F1) inp +> out ;IFL;
  ;IFL; (F2) out +> inp ;IFL;)
(macro anonymize((type x) (type y))
  (type anon)
  (allow anon x (file (read)))
  ;IFL; (S1) x +> y : x > anon +> y ;IFL;)

(typeattribute other)
(typeattributeset other 
  (not (or DB (or http (or anon net)))))

(type DB)
(type http)
(type home)
(type net)

(call in_out(net http))

(call in_out(net DB)
  ;IFL; (F1R:F1) * +> http +> * ;IFL;
  ;IFL; (F2R:F2) * +> http +> * ;IFL;)
	
(call anonymize(DB net)
  ;IFL;(S1R:S1) DB+>net : DB[read]>anon+>net ;IFL;)

(allow http anon (file (read)))
(allow http DB (file (write)))
(allow http other (file (read)))
(allow http net (file (read write)))
;IFL; (S2) ~ DB +> other ;IFL;
\end{lstlisting}
\vspace{-3mm}

\caption{Example of CIL configuration with IFL annotations.}
\label{fig:annotation}
\end{figure}

We now introduce the most important details of the formalization of \lang;
its complete definition is in the Appendix~\ref{app:IFCIL}.
%We provide the complete definition in Appendix.
%
We first discuss the notion of refinement of IFL requirements.
Intuitively, a refinement of a requirement $\mathcal{R}$ allows a subset of the information flow paths allowed by $\mathcal{R}$.
This is formalized by a preorder $\preceq$, saying that $\mathcal{R}' \preceq \mathcal{R}$ if $\mathcal{R}'$ refines $\mathcal{R}$; the precise definition of $\preceq$ is given in Appendix~\ref{app:IFCIL}.

We prove the following theorem, stating that the validity of configurations is preserved by refinement.
\begin{restatable}[Refinement]{theorem}{refinement}\label{thm:ref}
Let $I$ be an information flow diagram, and let $\mathcal{R}'$ and $\mathcal{R}$ be two IFL requirements such that $\mathcal{R}' \preceq \mathcal{R}$.
Then
\vspace{-3mm}
\begin{smalign*}
I \models \mathcal{R}' \Rightarrow I \models \mathcal{R}.
\end{smalign*}
\end{restatable}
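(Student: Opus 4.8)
The plan is to prove the theorem by induction on the structure of the preorder $\preceq$, which we would expect to be defined inductively in Appendix~\ref{app:IFCIL}. First I would unfold the intended meaning of $\mathcal{R}' \preceq \mathcal{R}$ at the level of kinds: the key semantic fact we need is that refinement of flow kinds shrinks the set of paths they match, i.e.\ if $P' \preceq P$ then for every path $\pi$ in $I$, $\pi \triangleright_I P'$ implies $\pi \triangleright_I P$. This kind-level lemma is the real engine; once it is in hand, the requirement-level statement follows by a short case analysis on the three forms of $\mathcal{R}$ given by the grammar $\mathcal{R} ::= P \mid \ttype{\textasciitilde}P \mid P\ttype{:}P'$.

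For that case analysis: if $\mathcal{R} = P$ and $\mathcal{R}' = P'$ with $P' \preceq P$, then from $I \models P'$ we get a witnessing path $\pi$ with $\pi \triangleright_I P'$, and the kind-level lemma immediately gives $\pi \triangleright_I P$, hence $I \models P$. If $\mathcal{R} = \ttype{\textasciitilde}P$, refinement of a prohibition should again be a prohibition $\ttype{\textasciitilde}P'$ where the intended direction is that $P \preceq P'$ makes the prohibition \emph{stronger}; using the lemma, the non-existence of a $P'$-path implies the non-existence of a $P$-path, so $I \models \ttype{\textasciitilde}P' \Rightarrow I \models \ttype{\textasciitilde}P$. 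If $\mathcal{R} = P_1 \ttype{:} P_2$ and $\mathcal{R}' = P_1' \ttype{:} P_2'$, refinement should allow $P_1 \preceq P_1'$ (the premise of the constraint may be broadened) and $P_2' \preceq P_2$ (the conclusion may be narrowed); then for any $\pi$ with $\pi \triangleright_I P_1$ we get $\pi \triangleright_I P_1'$, hence $\pi \triangleright_I P_2'$ by $I \models \mathcal{R}'$, hence $\pi \triangleright_I P_2$ by the lemma, giving $I \models P_1 \ttype{:} P_2$.

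The kind-level lemma itself I would prove by induction on the derivation of $P' \preceq P$, following the generating rules of $\preceq$ on kinds. The expected atomic generators are: replacing a node by the wildcard $\ttype{*}$ (sound because the clauses of Definition~\ref{def:ifpath-kind} that test $m = *$ are weaker than those that test $\mathtt{n} \in \ta(m)$); shrinking an operation set $o'$ to a nonempty subset (sound because $o \cap o'' \subseteq o \cap o'$, so $o \cap o'' \neq \emptyset$ implies $o \cap o' \neq \emptyset$); replacing a single step $\ttype{[o]>}$ by the transitive step $\ttype{+[o]>}$ on the same endpoints (sound by the second clause of Definition~\ref{def:ifpath-kind}, which reduces $\ttype{+[}o'\ttype{]>}$ to $\ttype{[}o'\ttype{]>}$ on one-arc paths and, via the third clause, covers longer ones); and, crucially, \emph{inserting intermediate waypoints} into a transitive segment, e.g.\ refining $\mathtt{m}\ \ttype{+>}\ \mathtt{m'}$ to $\mathtt{m}\ \ttype{+>}\ \mathtt{m''}\ \ttype{+>}\ \mathtt{m'}$ (as happens with the \ttype{http} refinements in Figure~\ref{fig:annotation}). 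Congruence/compatibility of $\preceq$ with the concatenation constructor $P_1\ P_2$ then lifts these atomic cases, using the fourth clause of Definition~\ref{def:ifpath-kind} to split and recombine paths.

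The main obstacle I anticipate is the waypoint-insertion case: showing that if $\pi$ realizes a long transitive flow $\mathtt{m} \ttype{+>} \mathtt{m'}$ \emph{and} the refined configuration already satisfies the more constrained requirement, then $\pi$ can be decomposed through the inserted node $\mathtt{m''}$ consistently with the matching relation $\triangleright_I$. This is delicate because $\triangleright_I$ on $\ttype{+[}o'\ttype{]>}$ threads a \emph{single} operation set $o'$ through the whole segment (the third clause keeps the same $o'$ and merely inserts a $\ttype{*}$ waypoint), so one has to be careful that the operation-set bookkeeping survives the split — but in fact this is exactly why the third clause of Definition~\ref{def:ifpath-kind} is phrased with a trailing $\ttype{*}$, so the induction should go through cleanly once the generators of $\preceq$ are pinned down. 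A secondary subtlety is making sure the directionality of $\preceq$ is tracked correctly through negation and through the two sides of a path constraint (contravariant in the premise, covariant in the conclusion, and contravariant inside $\ttype{\textasciitilde}$), which the case analysis above handles by reading off the appropriate orientation in each clause.
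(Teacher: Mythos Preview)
Your proposal is correct and matches the paper's proof closely: a kind-level lemma (Lemma~\ref{thm:kindref}) showing that $P \preceq_P P'$ and $\pi \triangleright_I P$ imply $\pi \triangleright_I P'$, proved by induction on the derivation of $\preceq_P$, followed by the three-case analysis on the requirement-level rules with exactly the variances you identify. One small correction to your ``main obstacle'' paragraph: for the waypoint-insertion case the hypothesis already gives you $\pi \triangleright_I (\mathtt{m}\ \ttype{+>}\ \mathtt{m''}\ \ttype{+>}\ \mathtt{m'})$, so $\pi$ is \emph{already} decomposed through $\mathtt{m''}$; the work is to \emph{merge} the two segments back into a single $\ttype{+[}o'\ttype{]>}$, and the operation-set side condition in the paper's rule (P-3) (namely $o_1, o_2 \subseteq o'$) is there precisely to make that merge go through. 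This is a harmless slip in direction and does not affect the validity of your plan.
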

\vspace{-3mm}

In defining the semantics of \lang, we use the meet of two requirements $\mathcal{R}' \sqcap \mathcal{R}$ on the set of requirements preordered with $\preceq$, i.e., the largest requirement w.r.t.\ $\preceq$ that is smaller than both $\mathcal{R}'$ and $\mathcal{R}$.
To see why, consider the following requirements taken from the example above.
\begin{lstlisting}[style=cil]
  ;IFL; (F1) inp +> out ;IFL;
  ;IFL; (F1R:F1) * +> http +> * ;IFL;
\end{lstlisting}
These requirements are incomparable with respect to $\preceq$.
To see this, take $I$ and $I'$ with nodes in $\{$\ttype{inp, out, http, a}$\}$ such that $I$ has a single arc (\ttype{inp, out}), and $I'$ only has the two arcs (\ttype{a, http}) and (\ttype{http, a}).
If they were comparable, Theorem~\ref{thm:ref} would be falsified because \mbox{$I \models$ \ttype{inp +> out}} but $I \not\models$ \ttype{* +> http +> *}, and similarly for $I'$ replacing $I$.
Although incomparable, \ttype{F1} and \ttype{F1R:F1} are clearly related.
Namely there exists the meet of the two $\ttype{F1} \sqcap \ttype{F1R:F1} = \ttype{inp +> http +> out}$.
This meet has more details than, and refines both, \ttype{F1} and \ttype{F1R:F1}, 
because it requests the presence of an information flow from the node \ttype{inp} to \ttype{out}, via \ttype{http}.

We are now ready to define the semantics of \lang.
We first normalize configurations by applying the six transformation phases described in Section~\ref{sec:formal}, taking meets whenever needed.

The semantics of a configuration consists of a graph $G$ and a set of requirements $\mathbb{R}$ representing the semantics of a CIL configuration and the IFL annotations.
It is defined as:
\begin{definition}[\lang\ semantics]
Given a (normalized) \lang\ configuration $\Sigma$, its semantics is the pair $(G, \mathbb{R})$, where $G$ is the CIL semantics of $\Sigma$ and $\mathbb{R}$ is the set of IFL requirements occurring in $\Sigma$.
\end{definition}

Not all configurations satisfy their requirements, and we define below when they do, i.e., when the information flow respects the constraints expressed in the IFL annotations. 
\begin{definition}[correct \lang\ configuration]
Let $\Sigma$ be a (normalized) \lang\ configuration, let $(G, \mathbb{R})$ be its semantics, and let $I$ be the information flow diagram of $G$. The configuration $\Sigma$ is correct, in symbols $I \models \mathbb{R}$, 
iff $I \models \mathcal{R}$ for all $\mathcal{R} \in \mathbb{R}$.
\end{definition}

\section{Requirement verification}\label{sec:verification}
% !TEX root = ../sp2022-main.tex
%

We describe next how we automatically check that a \lang\ configuration respects the given information flow requirements.
We rely on model checking, so as to reuse existing verification tools.
For this, we first encode a configuration as a Kripke transition system~\cite{Muller-OlmSS99} and 
%an information flow requirement 
an IFL requirement 
as an LTL formula.
%Afterwards we describe our implementation.

\subsection{Encoding in temporal logic}\label{sec:LTLencoding}

A Kripke transition system (KTS) over a set $AP$ of atomic propositions is $K = (S, Act, \rightarrow, L)$, where $S$ is a set of states, $Act$ is a set of actions, $\rightarrow\, \subseteq S \times Act \times S$ is a transition relation, and $L \colon S \rightarrow 2^{AP}$ is a labeling function mapping nodes to a set of proposition that hold at that node.
Paths of $K$ are defined as alternating sequences of states and actions starting and ending with a state.
%In the following, we only consider \emph{maximal} paths, i.e.\ those paths that cannot be extended.

We associate an IFCL configuration $\Sigma$ with a KTS with the nodes of $\Sigma$ as states and
the edges of the information flow diagram of $\Sigma$ as transitions (for technical reasons, transitions are labeled with a single operation), and the type and typeattribute names of $\Sigma$ as atomic propositions.

%(for technical reasons, transitions are labeled with a single operation), and the type and typeattribute names occurring in $\Sigma$ as atomic propositions.
%Formally,
%
\begin{definition}[Encoding of configurations]
	Let $I= (N, \ta, E)$ be the information flow diagram of a configuration $\Sigma$. 
	%The corresponding KTS is $K = (N, O, E', \Lambda)$, where
	The corresponding KTS is $K = (N, O, E', \Lambda)$, where
	%Kripke transition system is $K = (N, O, E', \Lambda)$, where
	%
	%$K = (N \cup \{ \iota \}, O, E', \Lambda)$, where
	%\begin{smalign*}
	%%E' &= \{ (n, op, n') \mid (n, o, n') \in E \land op \in o \} \cup \{ (\iota, op, n) \mid n \in N \land op \in O \}\\
	%-\ &O \text{ is the set of SELinux operations} \\
	%-\ &E' = \{ (n, op, n') \mid (n, o, n') \in E \land op \in o \} \\
	%-\ &M \in \Lambda (n) \text{ if } n \in \ta(M), 
	%\\&  \text{ i.e., if } \text{the \ttype{typeattribute} } M
	%\text{ contains } n.
	%%& \text{ i.e., if } n \ttype{ is an element of the \ttype{typeattribute} } M
	%%t \in \Lambda (\ta(n))
	%\end{smalign*}
	\begin{itemize}
		\item $O$ is the set of SELinux operations
		%\item $E' = \{ (n, op, n') \mid (n, o, n') \in E \land op \in o \}$
		\item $E' = \{ (n, op, n') \mid (n, o, n') \in E \land op \in o \}$
		\item \mbox{$M\!\in\!\Lambda (n)$  if $n\!\in\!\ta(M)$, i.e., $n$ is in the typeattribute $M$}
		%& \text{ i.e., if } n \ttype{ is an element of the \ttype{typeattribute} } M
		%t \in \Lambda (\ta(n))
	\end{itemize}
\end{definition}

%A Kripke transition system (KTS) over a set $AP$ of atomic propositions is a structure $K = (S, Act, \rightarrow, L)$, where $S$ is a set of states, $Act$ is a set of actions, $\rightarrow\, \subseteq S \times Act \times S$ is a transition relation, and $L \colon S \rightarrow 2^{AP}$ is a labeling function mapping nodes to a set of proposition that hold in that node.

We encode IFL kinds in a suitable version of LTL~\cite{Muller-OlmSS99}, where the syntax of formulas $\phi$ is
\begin{smalign*}
\phi ::= p \mid (op) \mid \phi_1 \land \phi_2 \mid \phi_1 \lor \phi_2 \mid \lnot \phi \mid X(\phi) \mid \phi_1 U \phi_2.
\end{smalign*}
We write $w \models_l \phi$ if the path $w$ of $K$ satifies the LTL formula $\phi$; the formal definition is standard and can be found in~\cite{Muller-OlmSS99}.
Intuitively, $w$ satisfies the atomic proposition $p$ if it starts with a node labeled with $p$; 
$w$ satisfies $(op)$ 
if its first action is $op \in O$; conjunction, disjuction, and negation are as usual; $X(\phi)$ is 
satisfied by
$w$ if its subpath starting from the second state satisfies $\phi$; 
and $w$ satisfies $\phi_1 U \phi_2$
if there exists a node $s$ in $w$ such that 
the subpath starting from it satisfies $\phi_2$
and every subpath starting from a state before $s$ 
satisfies $\phi_2$.

For convenience, in the following we simplify our grammar for the flow kind $P_1 P_2$ and rewrite the grammar from subsection~\ref{subsec:ifl} in the following equivalent form (recall that the starting node of $P$ in the last two cases is \ttype{n}$'$).
\begin{smalign*}
\!\!\!P ::= \ttype{n\,[o]> n'}  \mid \ttype{n\,+[o]> n}' 
                \mid 
                (\ttype{n\,[o]> n'}) P  \mid (\ttype{n\,+[o]> n}') P
\end{smalign*}

\begin{definition}[Encoding of flow kinds]\label{def:enc-kind}
%Given a requirement $\mathcal{R}$, we define its encoding as follows
The encoding of flow kinds is defined as follows.
\begin{smalign*}
&\ltlencode{\ttype{n [o]> n}'} = \ttype{n} \land \bigvee_{\ttype{op} \in \ttype{o}} (\ttype{op}) \land X(\ttype{n}' \land \lnot X(\mathit{true}) )\\
&\ltlencode{\ttype{n +[o]> n}'} = \ttype{n} \land \bigvee_{\ttype{op} \in \ttype{o}} (\ttype{op}) \land X(\bigvee_{\ttype{op} \in \ttype{o}} (\ttype{op})\ U\ (\ttype{n}' \land \lnot X(\mathit{true})))\\
&\ltlencode{(\ttype{n [o]> n}') P} = \ttype{n} \land \bigvee_{\ttype{op} \in \ttype{o}} (\ttype{op}) \land X(\ltlencode{P})\\
&\ltlencode{(\ttype{n +[o]> n}') P} = \ttype{n} \land \bigvee_{\ttype{op} \in \ttype{o}} (\ttype{op}) \land X(
\bigvee_{\ttype{op} \in \ttype{o}} (\ttype{op})\ U\ \ltlencode{P})
\end{smalign*}
\end{definition}
Note that we use $\lnot X(true)$, i.e., the path is complete as there is no next step, to represent the fact that IFL semantics is defined on finite paths.
%While IFL predicates on the whole system, LTL naturally predicates over paths and thus must be extended to transition systems.
%We resort to Kripke transition systems (KTS) as they encode both state and action properties.

LTL's semantics can be lifted to a KTS $K$ in quite different manners, with the best for modeling IFL being as follows:
$K \models_{l} \phi$ iff $\forall w \in W . w \models_{l} \phi$, where $W$ is the set of all (finite and infinite) paths in $K$.
This paves the way for defining when a KTS satisfies a set of IFL requirements.
%
%
%Formally, 
%
\begin{definition}[Satisfaction of configurations]\label{def:requirementsat}
Let $\mathcal{R}$ be a requirement of a given configuration, let $K$ be a KTS, and let $W$ be the set of paths in $K$.
We define the satisfaction relation $\vdash$ on the syntax of $\mathcal{R}$ as follows:
\begin{smalign*}
K \vdash P &\text{ iff } K \not\models_l \lnot\ltlencode{P}\\
K \vdash \ttype{\textasciitilde}P &\text{ iff } K \models_l \lnot\ltlencode{P}\\
K \vdash P \ttype{:} P' &\text{ iff } K \models_l \lnot\ltlencode{P} \lor \ltlencode{P'}
\end{smalign*}
%To do this, we resort to the function $check(K, \phi)$, which given a KTS $K$ and a LTL formula $\phi$ returns $yes$ if $\forall w \in W . w \models_{l} \phi$, and $no$ otherwise, where $W$ is the set of maximal paths in $K$.
%%Formally, 
%
%%
%\begin{definition}[Satisfaction of configurations]\label{def:requirementsat}
%    Let $\mathcal{R}$ be a requirement of a given configuration, let $K$ be a KTS, and let $W$ be the set of paths in $K$.
%    We define the satisfaction relation $\vdash$ on the syntax of $\mathcal{R}$ as follows:
%    \begin{smalign*}
%        K \vdash P &\text{ iff } check(K, \lnot \ltlencode{P} ) = no\\
%        K \vdash \ttype{\textasciitilde}P &\text{ iff } check(K, \lnot \ltlencode{P} ) = yes\\
%        K \vdash P \ttype{:} P' &\text{ iff } check(K, \lnot \ltlencode{P} \lor \ltlencode{P \sqcap P'}) = yes 
%    \end{smalign*}
%
\noindent
We homomorphically extend $\vdash$ to sets of requirements.
%The relation $\vdash$ is homomorphically extended to sets of requirements.
\end{definition}
Note that the three clauses above mimic the analogous clauses in the Definition~\ref{def:requirementsem}.
The first clause says that at least one path satisfies $\ltlencode{P}$; conversely, the second clause says that no path satisfies $\ltlencode{P}$.
The third clause simply contains the boolean definition of classical implication.

This correspondence supports the correctness of our verification technique, which is expressed by the following theorem stating that the notions of validity and  satisfaction of configurations coincide:
\begin{restatable}[Correctness and Completeness]{theorem}{correctness}\label{thm:correctness}
Let $\Sigma$ be an IFCIL configuration with requirements $\mathbb{R}$, let $I$ be its information flow diagram, and let $K$ be the KTS of $\Sigma$. Then
\vspace{-1mm}
\begin{smalign*}
K \vdash \mathbb{R}\ \text{  if and only if  }\ I \models \mathbb{R}.
\end{smalign*}
\end{restatable}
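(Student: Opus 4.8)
The plan is to reduce the biconditional for a set $\mathbb{R}$ of requirements to the case of a single requirement $\mathcal{R}$, since $\vdash$ and $\models$ are both extended homomorphically over sets; so it suffices to show $K \vdash \mathcal{R}$ iff $I \models \mathcal{R}$ for each of the three syntactic forms of $\mathcal{R}$. Inspecting Definitions~\ref{def:requirementsem} and~\ref{def:requirementsat}, all three cases are built from the same primitive: whether some path of kind $P$ exists in $I$, equivalently whether some path $w$ of $K$ satisfies $\ltlencode{P}$. Concretely, the path-existence case $I\models P$ unfolds to ``$\exists\pi.\ \pi\triangleright_I P$'' versus $K\not\models_l\lnot\ltlencode{P}$, i.e.\ ``$\exists w\in W.\ w\models_l\ltlencode{P}$''; the prohibition case is the negation of this; and the path-constraint case $P\ttype{:}P'$ is ``$\forall\pi.\ \pi\triangleright_I P\Rightarrow\pi\triangleright_I P'$'' versus ``$\forall w.\ w\models_l\lnot\ltlencode{P}\lor\ltlencode{P'}$''. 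Thus the entire theorem follows from a single Key Lemma connecting information-flow paths with LTL-satisfying paths of the encoded KTS.

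\textbf{Key Lemma.} For every information flow diagram $I=(N,\ta,E)$ with encoded KTS $K=(N,O,E',\Lambda)$, every flow kind $P$, and every finite KTS path $w$ in $K$, we have $w\models_l\ltlencode{P}$ iff the sequence of $E$-edges underlying $w$ forms an $I$-path $\pi$ with $\pi\triangleright_I P$. First I would set up the correspondence between the two notions of path: a KTS path $w=s_1\,op_1\,s_2\,op_2\cdots s_{k}$ (alternating states and actions, by $E'=\{(n,op,n')\mid(n,o,n')\in E,\ op\in o\}$) collapses to the $I$-path $\pi_w$ obtained by grouping consecutive single-operation transitions back into the $E$-arcs they came from — though one must be careful, since an $I$-arc $(n,o,n')$ with $|o|>1$ yields several distinct single-op $E'$-transitions; so the map $w\mapsto\pi_w$ is many-to-one, and conversely each $\pi$ of length $i$ corresponds to a family of $w$'s, one for each choice of $op_j\in o_j$. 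The disjunction $\bigvee_{op\in o}(op)$ in Definition~\ref{def:enc-kind} is exactly what absorbs this multiplicity, matching the clause $o\cap o'\neq\emptyset$ in Definition~\ref{def:ifpath-kind}. I would then prove the lemma by structural induction on $P$ using the rewritten four-case grammar from Section~\ref{sec:LTLencoding}: the base cases $\ttype{n [o]> n'}$ and $\ttype{n +[o]> n'}$ match $\ltlencode{\cdot}$ against the first two (and the recursive third) clauses of $\triangleright_I$, using that $\Lambda(n)\ni M$ iff $n\in\ta(M)$ to handle the node/typeattribute and wildcard tests, $\lnot X(\mathit{true})$ to enforce that the path is exactly one step (resp.\ ends at $\mathtt{n'}$), and the $U$-operator with action-disjunction guard to realize the transitive closure in the $\ttype{+[}o\ttype{]>}$ case; the inductive cases $(\ttype{n [o]> n'})P$ and $(\ttype{n +[o]> n'})P$ peel off one step via $X(\cdot)$ and apply the induction hypothesis to $P$ on the suffix of $w$, paralleling the concatenation clause $\pi'\pi''\triangleright_I P_1P_2$.

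\textbf{From the lemma to the theorem.} Once the Key Lemma is in place, each of the three cases is a short calculation. For $I\models P$: by the lemma, $\exists\pi.\ \pi\triangleright_I P$ iff $\exists w\in W.\ w\models_l\ltlencode{P}$ (noting that $W$ in $K\models_l$ is taken over all finite and infinite paths, and since $\ltlencode{P}$ forces $\lnot X(\mathit{true})$ at the end, any satisfying $w$ is finite, matching the finite $\pi$); the latter is by definition $\lnot(K\models_l\lnot\ltlencode{P})$, i.e.\ $K\vdash P$. The prohibition case $\ttype{\textasciitilde}P$ is the contrapositive: $I\models\ttype{\textasciitilde}P$ iff $\lnot(I\models P)$ iff (by the previous case) $\lnot(K\vdash P)$ iff $K\models_l\lnot\ltlencode{P}$, which is $K\vdash\ttype{\textasciitilde}P$. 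For $P\ttype{:}P'$: $I\models P\ttype{:}P'$ means every $\pi$ with $\pi\triangleright_I P$ also has $\pi\triangleright_I P'$; applying the lemma in both directions, this holds iff every finite $w$ with $w\models_l\ltlencode{P}$ also has $w\models_l\ltlencode{P'}$, and since infinite paths vacuously fail $\ltlencode{P}$ (which forces termination), this is equivalent to $\forall w\in W.\ w\models_l(\lnot\ltlencode{P}\lor\ltlencode{P'})$, i.e.\ $K\models_l\lnot\ltlencode{P}\lor\ltlencode{P'}$, which is $K\vdash P\ttype{:}P'$. Homomorphic extension to $\mathbb{R}$ closes the argument.

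\textbf{Main obstacle.} I expect the delicate point to be the exact bookkeeping in the Key Lemma around the two mismatches between the models: (i) $I$-arcs carry an operation \emph{set} $o$ whereas KTS transitions carry a single operation, so the correspondence between $I$-paths and KTS paths is genuinely many-to-one and the induction must phrase the statement at the right granularity (``there exists a KTS path over these states whose actions witness the flow'' rather than a bijection); and (ii) reconciling IFL's finite-path semantics with LTL's semantics over a KTS that admits infinite runs — this is handled by the $\lnot X(\mathit{true})$ conjunct, but one must verify carefully that it does the intended job in the nested positions of the recursive encoding (e.g.\ that in $\ltlencode{(\ttype{n [o]> n'})P}$ the termination marker lives only at the very end, inside $\ltlencode{P}$) so that the ``finite $w$ = finite $\pi$'' identification is exact. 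Everything else — the node/typeattribute labeling via $\Lambda$, the wildcard cases, the $U$-for-transitive-closure correspondence — is a routine unfolding of the definitions once the statement is set up correctly.
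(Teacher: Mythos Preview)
Your approach is the paper's: reduce to a single requirement, prove a path-level correspondence by structural induction on $P$, then discharge the three syntactic forms of $\mathcal{R}$. The paper splits your Key Lemma into two statements: (i) $\pi \triangleright_I P$ iff there \emph{exists} $w\in\ltlencode{\pi}$ with $w\models_l\ltlencode{P}$, where $\ltlencode{\pi}$ is the set of all KTS paths obtained by choosing one $op_j\in o_j$ at each step; and (ii) the corollary (since $\ltlencode{\pi_w}=\{w\}$) that $w\models_l\ltlencode{P}$ iff $\pi_w\triangleright_I P$, where $\pi_w$ uses the \emph{singleton} sets $\{op_j\}$ rather than the original $E$-arcs.

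Your Key Lemma as literally stated---a biconditional between $w\models_l\ltlencode{P}$ and $\pi\triangleright_I P$ for the $I$-path $\pi$ built from the \emph{original} $E$-arcs underlying $w$---is false in one direction: if $E$ contains $(n,\{\ttype{read},\ttype{write}\},n')$ and $P=\ttype{n [read]> n}'$, then $\pi=(n,\{\ttype{read},\ttype{write}\},n')$ has $\pi\triangleright_I P$, yet the KTS path $w=n,\ttype{write},n'$ fails $\ltlencode{P}$. You flag precisely this in your ``Main obstacle'' (``there exists a KTS path over these states whose actions witness the flow''), and that existential rephrasing is exactly the paper's lemma (i). With that fix the three cases go through as you sketch; note that for $P\ttype{:}P'$ you genuinely need both forms: one direction takes an offending $w$ and uses (ii) to get $\pi_w$, while the other takes an offending $\pi$ and uses (i) to extract a single $w$ that satisfies $\ltlencode{P}$ and---because $\pi\not\triangleright_I P'$ means \emph{no} element of $\ltlencode{\pi}$ satisfies $\ltlencode{P'}$---also fails $\ltlencode{P'}$.
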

\vspace{-2mm}
\subsection{Model checking IFCIL}

Theorem~\ref{thm:correctness} enables us to reuse model checking techniques and tools to automatically verify that a configuration is correct with respect to its information flow requirements.
In particular, an LTL model checker provides us with a decision algorithm for $K \models_{l} \phi$.
%, allowing us to evaluate $K \vdash \mathbb{R}$.

In this work, we resort to the classical model checker \modelchecker\ that targets only infinite paths, as usual, whereas Definition~\ref{def:requirementsat} also considers finite paths.
Therefore, we extend $K$ to a KTS $K_\iota$ with a distinguished sink state $\iota$ and with additional transitions (labeled with every $op \in O$) from every state to $\iota$.
%Through this extension, every finite path $w$ of $K$ yield an infinite path $w\ \iota^\omega$ in $K_\iota$.
Roughly, the satisfaction relation is updated by substituting $X(\iota)$ for $\lnot X(true)$.
Theorem~\ref{thm:correctness} still holds (see Corollary~\ref{thm:MC} in Appendix~\ref{app:IFCIL}).
This extension enables us to verify the correctness of IFCIL configurations with \modelchecker.
%Every infinite path in $K_\iota$ is either an infinite path in $K$ or the infinite extension $w\ \iota^\omega$ of a finite path $w$ in $K$.
%Since the following corollary holds trivially, we model check with \modelchecker\ on $K_\iota$ that a configuration $\Sigma$ satisfies its requirements $\mathbb{R}$.
%\begin{restatable}[]{corollary}{MC}\label{thm:MC}
%	Let $\Sigma$ be an IFCIL configuration with requirements $\mathbb{R}$, let $I$ be its information flow diagram, and let $K$ be the KTS of $\Sigma$. Then
%	\vspace{-1mm}
%	\begin{align*}
%		K_\iota \vdash \mathbb{R}\ \text{  if and only if  }\  K \vdash \mathbb{R}\ \text{  if and only if  }\ I \models \mathbb{R}.
%	\end{align*}
%\noindent
%	where $K_\iota \vdash \mathbb{R}$ is obtained by substituting $X(\iota)$ for $\lnot X(true)$ in $\ltlencode{\_}$ of Definitions~\ref{def:enc-kind} and~\ref{def:requirementsat}.
%\end{restatable}

The worst case complexity of LTL model checking is unfortunately $2^{\mathcal{O}(|\phi|)}\mathcal{O}(|S| + |E'|)$~\cite{Muller-OlmSS99}, where $|S|$ and $|E'|$ are the number of nodes and number of arcs in the KTS, respectively.
In practice we expect the size of the configuration and the number of requirements to grow as the system grows.
In contrast, we do not expect the size of each IFL requirement to depend on the system size.

We now specialize the formula above to our encoding.
It is easy to see that $|S|$ is equal to the number of type declarations in the configuration, and that 
$|E'|$ is bounded from above by $|S| \times |S| \times |O|$, where $O$ is the set of SELinux operations.
%\footnote{
%Note that the number of \ttype{allow} rules cannot be used to determine the number of edges.
%Consider the worst case: let any \ttype{allow} rule be between two typeattributes both containing all the nodes and with $O$ as permitted actions, then $|E'|$ is less or equal to the number of allow rules times $|S| \times |S| \times |O|$, which is a fully connected graph even for a single allow rule.}
Note that the size of an LTL formula resulting from the encoding of an IFL requirement is linear with respect to the number of names in the requirement.
%\footnote{
%Note that the number of \ttype{allow} rules cannot be used to determine the number of edges.
%Consider the worst case: let any \ttype{allow} rule be between two typeattributes both containing all the nodes and with $O$ as permitted actions, then $|E'|$ is less or equal to the number of allow rules times $|S| \times |S| \times |O|$, which is a fully connected graph even for a single allow rule.}
Thus, the complexity of verifying an \lang\ configuration is $\sum_{\phi \in \mathbb{R}} 2^{\mathcal{O}(|\phi|)}\mathcal{O}(|S|)^2 \times \mathcal{O}(|O|)$.
Since $|\phi|$ is usually small and does not increase with the configuration size, the complexity grows linearly with respect to the number of requirements and  operations, and quadratically with respect to the number of types.

Experiments with our prototype implementation on real-world configurations show that results are obtained in an acceptable amount of time, on the order of seconds,
see below.
%\alert{Sounds bad.}

%TODO
%\begin{itemize}
%\item more details about the cost (on all the parameters)
%\item requirement size should not grow too much
%\item size of IFL requirements VS size of IFL requirements
%(potrebbe non essere lineare nelle queries a causa (1) della negazione e (2) della ripetizione di P nel meet $\ltlencode{P} \lor \ltlencode{P \sqcap P'}$)
%\item size of Krepkee structure w.r.t. configurations
%\item anticipate that tests on real world configuration are promising 
%\end{itemize}

\section{The tool \tool}\label{sec:tool}
% !TEX root = ../sp2022-main.tex
%

We now describe our tool \tool\ that given a \lang\ configuration verifies its correctness with respect to its information flow requirements.
Although our tool is currently a prototype, and not optimized, we were nevertheless able to successfully apply it to large, complex, real-world policies.

\subsection{Translation to \modelchecker}
Our tool has a front end that reads a configuration, normalizes it, and then computes its semantics, the associated KTS, and the LTL representation of the requirements, expressed in the \modelchecker\ input language.
The result is supplied to the model checker \modelchecker, which checks each requirement.
Finally the administrator is notified which requirements are satisfied and which are not.

In more detail, \tool\ takes as input an \lang\ configuration and an associated file where every operation comes with the direction of the information flow it causes.
This file is used to build the information flow diagram.

The tool explicitly handles CIL's constructs for defining classes and permissions, and reduces the input configuration to one that only uses the fragment of CIL presented in Section~\ref{sec:formal}.
Since other constructs, like those concerning roles, do not affect requirement satisfiability, the tool just ignores them.

%\alert{
%The tool is currently limited by the partial implementation of the $\sqcap$ operator, which requires the user to explicitly write the refinements.
%Moreover, the implementation is a proof of concept, several optimizations can be performed on the code.
%None of these limitation significantly impact the results presented in this work.
%}

For example, the following
\begin{lstlisting}[style=cil]
(type DB)
(type http)
(type home)
(type net)
(type anon)

(typeattribute other)
(typeattributeset .other 
  (not (or .DB (or .http (or .anon .net)))))

(allow .anon .DB (file (read)))
(allow .http .anon (file (read)))
(allow .http .DB (file (write)))
(allow .http .other (file (read)))
(allow .http .net (file (read write)))

;IFL; (F1) .net +> .http ;IFL;
;IFL; (F2) .http +> .net ;IFL;
;IFL; (F1R) .net +> .http +> .DB ;IFL;
;IFL; (F2R) .DB +> .http +> .net ;IFL;
;IFL; (S1R) .DB+>.net: .DB[read]>.anon+>.net ;IFL;)
;IFL; (S2) ~ .DB +> .other ;IFL;
\end{lstlisting}
is the normalization of the configuration in \figurename~\ref{fig:annotation}.
Its \lang\ semantics is the pair $(G, \mathbb{R} = \{ \ttype{F1, F2, F1R, F2R,}$ $\ttype{S1R, S2} \})$, where $G$ is the CIL semantics in Figure~\ref{fig:conf+diag}.
It is trivial to derive the KTS $K$ associated with $G$.
To verify the satisfaction of the requirements, we check $K \vdash \mathcal{R} \in \mathbb{R}$.
We only show the case $\mathcal{R} = \ttype{S1R}$, i.e., $K \models_l \lnot \ltlencode{\ttype{.DB +> .net}} \lor \ltlencode{\ttype{.DB [read]> .anon +> .net}}$
%that require $check(I, \lnot \ltlencode{\ttype{.DB +> .net}} \lor \ltlencode{\ttype{.DB [read]> .anon +> .net}})$
where:
\begin{smalign*}
&\ltlencode{\ttype{.DB +> .net}} =
\ttype{.DB} \land \bigvee_{\ttype{op} \in \ttype{O}} (\ttype{op})\ \land\\
&\quad X(\bigvee_{\ttype{op} \in \ttype{O}} (\ttype{op})\ U\ (\ttype{.net} \land \lnot X(true) ))\\
&\ltlencode{\ttype{.DB [read]> .anon +> .net}} = \ttype{.DB} \land (\ttype{read})\ \land\\
&\quad X(\ttype{.anon} \land \bigvee_{\ttype{op} \in \ttype{O}} (\ttype{op}) \land X(\bigvee_{\ttype{op} \in \ttype{O}} (\ttype{op})\ U\ (\ttype{.net} \land \lnot X(true) )))
\end{smalign*}
The resulting input file for \modelchecker\ represents the nodes of the KTS by variable assignments and transitions as updates of such assignments (using the \ttype{next} operator).
\begin{lstlisting}[style=NuSMV]
MODULE main

DEFINE
  other := (!((type=DB | (type=http | 
      (type=anon | type=net))))) & !(type=sink);
VAR
  type : { sink, DB, anon, home, http, net };

IVAR
  operation : { read, write };

TRANS
  (type=DB -> 
  	((operation=read & next(type=anon)) | 
  	  next(type=sink))) &
  (type=anon -> 
  	((operation=read & next(type=http)) | 
  	  next(type=sink))) &
  (type=home -> 
  	(next(type=sink))) &
  (type=http -> 
  	((operation=write & next(type=DB)) | 
  	  (operation=write & next(type=net)) | 
  	    next(type=sink))) &
  (type=net -> 
  	((operation=read & next(type=http)) | 
  	  next(type=sink))) &
  (type=sink -> next(type=sink))

LTLSPEC (!(type=DB & X(F type=net)) | (type=DB & 
 operation=read & X(type=anon & X(F type=net))))
LTLSPEC !(type=net & X(F type=http))
LTLSPEC !(type=http & X(F type=net))
LTLSPEC !(type=DB & X(F(type=http & X(F type=net))))
LTLSPEC !(type=net & X(F(type=http & X(F type=DB))))
LTLSPEC !(type=DB & X(F other))
\end{lstlisting}
%
%LTLSPEC (!(type = DB & X(F type = net)) | (type = DB & 
%    operation = read & X(type = anon & X(F type = net))))
%LTLSPEC !(type = net & X(F type = http))
%LTLSPEC !(type = http & X(F type = net))
%LTLSPEC !(type = DB & X(F(type = http & X(F type = net))))
%LTLSPEC !(type = net & X(F(type = http & X(F type = DB))))
%LTLSPEC !(type = DB & X(F other))
%We briefly comment on the encoding that leads to the input file:
We briefly comment on the encoding to generate the input file:
\begin{itemize}
\item The state variable \ttype{type} has the enumeration type that lists all the types in the configuration, 
plus \ttype{sink} (i.e., $\iota$).
%plus an extra value \ttype{sink} that we will discuss later.
\item Typeattributes are encoded as symbols and defined as predicates on types.
\item The input variable \ttype{operation} has the enumeration type that lists all the operations in $O$.
%\item Since \modelchecker\ checks for infinite paths, the  special node \ttype{sink} is added to $N$, and an arrow (implicitly labeled with all operations) is added from every node (including \ttype{sink}) to it.
\item The transitions are defined in \ttype{TRANS}: from each starting node there is an arc to the possible types and typeattributes with the appropriate operation.
\item Requirements are expressed in the syntax of \modelchecker\ as defined by $\ltlencode{\_}$.
\end{itemize}
\tool\ then parses the response of \modelchecker\ and answers positively: all the requirements are verified within few seconds.

\subsection{Validation}
\label{sec:validation}

We experimentally assessed our tool on three real-world CIL policies.
The first policy~\cite{openWRTconfig} is used in the OpenWrt project, a version of the Linux operating system targeting embedded devices, like network  appliances~\cite{openWRTproject}.
The second and the third are SELinux example policies, namely \emph{cilbase}~\cite{cilbase} and \emph{dspp5}~\cite{dspp5}, which serve as templates for creating personalized configurations.
The analyzed policies have more than ten thousands lines of code, and make extensive use of all CIL's advanced features, in particular macros and blocks.

To illustrate IFL's expressivity, we use it to formalize various properties that are often considered in the literature, as well as domain-specific policies that we designed.
Expressing them in IFL is easy and the resulting requirements are short, direct, and natural.
%
%We found it easy to specify these standard policies in IFL with short, direct, and natural requirements.
%	
Moreover, we assess the scalability of \tool\ on real-world examples and show that it scales well to large configurations, checking their requirements in a few seconds.

%\alert{Dire che trattiamo le configurazioni come black box? Oppure mostrare dopo l'utilita' di non trattarle come tale in un paragrafo a parte?}
%\marginpar{
%\blu{Non ho trovato analisi standard per selinux (a differenza di seandroid), ma tipi di analisi, cioe' design patterns che si vogliono implementare.}}
\paragraph{Properties}
%\emph{Transitive information flow properties} are the first we consider, because they express confidentiality and integrity. 
%In IFL they take the form of \ttype{\textasciitilde a +> b} meaning that information cannot flow from type \ttype{a} to \ttype{b}.
We first consider the following property inspired by Jeager et al.~\cite{Jaeger}, who investigated the \emph{trusted computing base} (TCB) of an SELinux configuration and checked from which types information flows to the TCB, identifying those that do not compromise security.
Using IFCIL, the administrator can restrict the information flows to the TCB to the permitted ones by defining the typeattributes \ttype{TCB} and \ttype{Harmless}, and by requiring \ttype{ +> TCB : Harmless +> TCB}.
%(containing what we identified as the critical elements of the given configuration)

The second property states that the flow from \ttype{a} to \ttype{z} must pass through a list of intermediate entities \ttype{b,c} ... ~\cite{Guttman}, also called \emph{assured pipeline}~\cite{patterns}. Here, it suffices to define requirements of the form \ttype{a +> z : * +> b +> c +> ... +> *}.

We express the \emph{wrapping of untrustworthy programs} of~\cite{patterns}, by
defining requirements stating that all the information flows from (or to) a given type \ttype{untrustworthy} must pass through a \ttype{verifier} type as first step, i.e., \ttype{untrustworthy > * : * > verifier}.

Finally, we propose the additional \emph{augment-only} property that only allows elements of type \ttype{a} to increase (\ttype{append}) the information on the targets  with type \ttype{b} without overwriting or removing any.
This property is expressed as \ttype{a > b : a [append]> b} and \ttype{a +> > b : a +> [append]> b}.

\paragraph{Experimental results}

The results of our analyses on the three configurations are summarized in Table~\ref{tab:performance}.
For each row, the table reports the kind of property, the number of requirements, and the total time for verifying them (\modelchecker\ input file generation plus LTL model checking).
The tool took approximately two minutes to check the entire OpenWRT configuration, and less then three seconds for the other two policies.
The analysis reports that some requirements are violated.
Among these, the checks on the TCB property show that information flows exist from types that are likely untrusted to types related to the OS security mechanisms, e.g., in \textit{dspp5}, information can flow from \ttype{.lostfound.file} to \ttype{.sys.fs}.
% \ttype{.kernel.sysfile}, can be  accessed by types that are likely untrusted, e.g., \ttype{.selinux.not\_typeattr}.
We are investigating whether these types are indeed untrusted and the actual impact the detected violations have on security.
This however requires reverse engineering to better understand the security goals of the analyzed policies.
%\alert{
%Some of the requirements are violated, e.g., the ones of TCB property where we found that the types related to the OS security mechanisms are accessed by types that we consider untrusted.
%We are investigating how these violations impact on security, in particular we are ascertaining the actual trustfulness of the untrusted types above.
%This requires understanding the security goals the administrator had in mind.
%}
%\alert{qui satisfaction rate?}

%Our analysis consist of 44 requirements, and we found that some of them are not met by the configuration.
%This highlights possible misconfigurations.
%The evaluation took less then two minutes for the whole openWRT configuration to be checked, which is roughly tree seconds for each requirement (with a small variance between different types of requirements).

%\blu{
%Recurively defined attributes Indeed, we found such cases in the openWRT configuration that we used for assessing our tool.
%}

\begin{table}
\caption{Performance analysis on tree real-world configurations}
\label{tab:performance}
\center

\begin{tabular}{|l|c|c|c|}
\hline
Property & Requirements & Verification Time \\
\hline
\multicolumn{3}{l}{}\\[-0.15cm]
\multicolumn{3}{l}{\textbf{openWRT} (45702 lines, 590 types)}\\
\hline
TCB & 1 & 119sec \\
\hline
assured pipeline & 3 & 122sec \\
\hline
wrap untrustworthy & 10 & 100sec \\
\hline
augment only & 2 & 115sec \\
\hline
total & 16 & 129sec \\
\hline
\multicolumn{3}{l}{}\\[-0.15cm]
\multicolumn{3}{l}{\textbf{cilbase} (11989 lines, 293 types)}\\
\hline
TCB & 1 & 0.240sec \\
\hline
assured pipeline & 4 & 0.238sec \\
\hline
wrap untrustworthy & 6 & 0.235sec \\
\hline
augment only & 2 & 0.232sec \\
\hline
total & 13 & 0.258sec \\
\hline
\multicolumn{3}{l}{}\\[-0.15cm]
\multicolumn{3}{l}{\textbf{dspp5} (14782 lines, 149 types)}\\
\hline
TCB & 1 & 2.22sec \\
\hline
assured pipeline & 4 & 2.14sec \\
\hline
wrap untrustworthy & 8 & 2.28sec \\
\hline
total & 13 & 2.24sec \\
\hline
\end{tabular}
\end{table}

\section{Related Work}\label{sec:related}
% !TEX root = ../sp2022-main.tex
%

Checking the security of the information flow is a problem that has been widely studied since the seminal work by Denning~\cite{Denning}, and Goguen and Meseguer~\cite{GoguenMeseguer}. 
Below, we discuss proposals that target information flow in SELinux policies. 
We also consider research that addresses verifying information flow in access control languages, and that augment existing programming languages with information flow policies.
For a broad survey on the topic, we refer the reader to~\cite{survey}.

\paragraph{Information flow in SELinux}
Numerous tools for SELinux policy analysis have been proposed. 
Many of them are based on information flow, but none targets CIL or explicitly handles the advanced features we consider. 
These tools can be divided in two categories.
The first focuses on predefined tests, searching for specific kinds of misconfigurations.
%In the second one objective is to allow the administrator to reason on the information flow properties of the given policy by asking queries.
The second supports administrators in querying information flow properties of given policies.
%
%\alert{Signpost better: mention both categories first and compare with our result}
%
%W.r.t. the first class, our contribution is fundamentally different, since our tool allow the administrator to perform tuned analysis.
Since our tool enables administrators to perform custom analysis, it differs from the proposals in the first category that we briefly survey.

Reshetova et. al.~\cite{Reshetova} propose SELint, a tool for detecting common kinds of misconfigurations in given SELinux configurations, e.g., the overuse of default types, and the association of specific untrusted types with critical permissions.
In contrast to our work, their approach is also specialized for mobile devices.    
%\alert{
%

Radika et. al.~\cite{Radhika} analyse SELinux configurations to spot potentially dangerous information flows.
They consider an information flow from an entity $a$ to an entity $b$ to be potentially dangerous if a \ttype{neverallow} rule prohibits a direct read access from $b$ to $a$.
They propose two tools: the first statically investigates such information flows in configurations, and has been applied to the SELinux reference policy and to the Android policy~\cite{AndroidOSP};
the second is a run-time monitor that dynamically tracks information flows in an SELinux system.
Our tool does the same kinds of analysis, and also expresses more specific requirements. 
We can, for example, check for the presence of direct information flows caused by operations different from those in \ttype{neverallow}
and of intransitive information flows that pass through a specific path.

Jaeger et. al.~\cite{Jaeger} analyze the SELinux example policy for Linux 2.4.19, focusing on integrity properties.
They determine which entities are in the TCB and analyze their integrity by focusing on transitive information flow.
As discussed above, we let the administrator specify the TCB and the desired requirements while developing the configuration, rather than deriving the TCB after the policy is implemented.

%Works in the second category allow the administrator to reason on the information flow properties of the given policy by asking queries.
We now briefly discuss the proposals in the second category that are closest to ours.
These proposals neither directly work on structured CIL configurations nor 
they offer real support for advanced features of this language. 
Moreover, they do not allow labeling configurations with requirements that interact with the language constructs.
All the properties they consider are global.
In contrast, our proposal works directly on structured CIL configurations and our requirements are first class citizens in \lang.

Guttman et al.~\cite{Guttman} propose a formal model of SELinux access control, based on transition systems, and provide an LTL model checking procedure to verify that a configuration satisfies the security goals specified by the administrator.
The security goals they consider are non-transitive information flow properties: they verify that every information flow between two given SELinux entities (e.g., users, types, roles) passes through a third entity.
As discussed above, \lang\ expresses these requirements, also with conditions about the operations occurring in the information flow.
In contrast, we do not consider exceptions as they can be encoded using typeattributes.
%as a construct of the requirement language, 
%but in CIL typeattributes can be used to encode such behaviour.

Sarna-Starosta et al.~\cite{sarnastarosta} propose a logic-programming based approach to analyzing SELinux policies.
Their tool transforms a configuration into a Prolog
program, thus allowing the administrator to perform deductions on the properties of the configuration with the standard Prolog query mechanism.
%Apart for us targeting CIL and allowing labels inside the configuration, this proposal is similar to ours, but they rely on libraries of predefined queries for assisting users that are not used to logic programming languages, conversely, we give a simple information flow language and compile it into LTL.
This proposal is similar to ours except that we target CIL and allow labels inside configurations. 
Also, they rely on libraries of predefined queries for assisting users not familiar with logic programming.
Our DSL precisely targets information flows, and easily compiles into LTL.
%A more detailed comparison between the two approaches and the efficiency of their implementations requires further investigations.

%In all the proposals above, properties are only checked globally on the resulting access matrix, with little attention to the structure of the configuration, and no support for advanced features of the language.
%In contrast, our proposal works directly on structured CIL configurations and allows checking that the composition of two pieces of code is consistent with the information flow requirements.
%Moreover, our proposal can deal with both transitive and non-transitive information flow properties.
Finally, high-level languages have been proposed for SELinux based on information flows.
All these languages were presented prior to the introduction of CIL; they therefore target the kernel policy language and do not exploit CIL's advanced features.
In contrast, we consider an already adopted language, namely CIL, and extend it with useful features, that support administrators in reasoning about their code.
%In contrast to this line of research, our choice has been to consider an already adopted language as CIL and to extend it with useful features, that support administrators to easily reason about their code.
Moreover, \lang\ is backward compatible. Administrators thus neither need to change the workflow nor the tools they use to develop and maintain SELinux configurations.  

Hurd et al.~\cite{Lobster} propose Lobster, a high-level DSL for specifying SELinux configurations.
This compositional language describes the configuration's expected information flow.
%, focusing on compositionality.
Instead of macros and blocks, Lobster provides the user with class definition and instantiation, where operations and permissions are represented as ports and labeled arrows between ports, respectively.
The user must specify all the desired information flows of the system and the compiler checks that no others are possible.
In contrast, we allow the user to succinctly specify wanted and unwanted information flows.
In particular, user can also specify ``negative'' requirements that explicitly forbid some information flows, while Lobster allows specifying only the ``positive'' flows.
Moreover, \lang\ supports more fine-grained requirements, letting users choose the level of details in defining the information flow in the system, e.g., targeting only critical permissions. 
Finally, Lobster is not backward compatible with SELinux, whereas \lang\ is backward compatible. %with CIL.

Nakamura et al.~\cite{Nakamura} propose SEEdit, a security policy configuration system that supports creating SELinux configurations using a high-level language called the Simplified Policy Description Language (SPDL).
SPDL keeps the configuration small because the administrator can group SELinux permissions and refer to system resources directly using their name instead of types.
They implement a converter that produces SELinux configurations, and they propose a set of tools for automatically deriving (parts of) a configuration using system logs. 
Their main objective is mainly to simplify the usage of the kernel policy language, working on its syntax and adding utility features.
\mbox{Static checking is not supported.}

\paragraph{Information flow on access control}
Bugliesi et al.~\cite{Gran} develop a verification framework supported by a tool for \texttt{grsecurity}, a role-based access control system for Unix/Linux~\cite{grsecurity}. 
They propose an operational semantics for \texttt{grsecurity} and an abstraction mechanism that reduces the problem of policy verification to reachability, thereby allowing for model checking.
The properties they address concern establishing whether a given subject can access a given resource and the writing and reading flows on resources. 
Although they address properties similar to ours, these properties are built into the verification framework and there is no language for formalizing new security requirements. 
Moreover, they do not address non-transitive properties as we do.

Calzavara et al.~\cite{Calzavara:types} continue this line of research by proposing a security type system for verifying information flow in ARBAC policies. 
In particular, their type system can address the role reachability problem and offers a compositional technique.  
Also in these policies are built into the type system and do not cover intransitive properties.
In contrast, our policies are written in a language, can be composed, and can express intransitive properties. 
Since we use model checking, the results of our verification phase cannot, however, be composed. 

Guttman and Herzog~\cite{Guttman:network} consider a network access control scenario. 
They propose a formalism for expressing networks and security goals about the trajectories a network packet can follow.
Moreover, they propose ad hoc algorithms that determine if the security goals are satisfied by a system.
Their security goals are similar to our information flow requirements in terms of the expressible properties.
However, we reduce verification to standard model checking. 

\paragraph{Adding information flow to real languages}

Numerous papers address the control of information flow in the language-based approach where programmers specify how data may be used.
Below we consider some proposals, focussing on full-fledged
security-typed languages.

FlowCAML~\cite{FlowCaml} is a variant of ML with information-flow types and type inference and provides support for the static enforcement of Denning-style confidentiality policies.

Jif~\cite{Jif} extends Java with the decentralized label model where data values are labeled
with security policies. 
The Jif compiler enforces these security policies performing some static checking. 
Moreover, Jif supports declassification, which provides a liberal information flow escape
hatch for programs that would otherwise be rejected by the compiler.

Fabric~\cite{Fabric} extends Jif with support for distributed programming and transactions. 
It provides several mechanisms for controlling accesses and information flow, to prevent  violating confidentiality and integrity policies. 
All values in Fabric are labeled with policies in the decentralized label model that express security requirements in terms of principals.
These labels allow principals to control to what extent other principals can learn or affect their information.

Lifty~\cite{liquid:type} is a domain-specific language for data-centric applications that allows programmers to annotate the sources of sensitive data with declarative information flow policies.
Lifty uses liquid types to enforce static information flow control and to statically and automatically verify that the application obeys the policies.
Moreover, its compiler is equipped with a repair engine that automatically patches any found leaks.

Paragon~\cite{paragon} extends Java with information flow policies building on an object-oriented generalisation of Paralocks~\cite{paralocks}. 
A policy is a set of flow locks that are conditions constraining how principals handle data and that can be opened and closed via instructions.
Paragon expresses a wide variety of policy paradigms, including Denning-style policies, the Jif decentralised label model, and stateful information flow policies.

All the above papers address Turing complete languages.
They encode policies through security labels and enforce them through security type systems.
We instead propose a declarative language for expressing policies that describe the admitted and prohibited flows rather than associating labels to resources and user.
Also, we target a configuration language that is not Turing complete, and our verification mechanism is based on model checking. 
Finally, in contrast to some of the above proposals, we do not explicitly deal with declassification.

\section{Conclusions and future work}\label{sec:conclude}
% !TEX root = ../sp2022-main.tex

%%% OLD PIECE %%%%

%We have proposed \lang, a backwards compatible extension of CIL, which is a new intermediate language for SELinux.
%Our language helps administrators in writing information flow policies, including confidentiality and integrity, and in checking their validity on CIL configurations.
%
%To do this, we have first defined a formal semantics of CIL, which we have experimentally validated.
%Our semantics revealed several ambiguities in the CIL reference manual, which can lead to unwanted behaviour and possible vulnerabilities.
%We then defined IFL, a declarative language for expressing fine-grained information flow requirements and integrated it inside CIL, obtaining \lang. 
%These requirements are formulated as comments, so that standard CIL compilers can still be used.
%
%
%Finally, we have defined a verification procedure to check if an \lang\ configuration complies with its IFL requirements.
%It works by translating IFL requirements into LTL formulas and to resort to standard model checking.
%We have implemented this procedure into a prototypical tool that we have assessed on real configurations.
%Our experiments have shown that the language works well for defining properties that are commonly investigated on SELinux policies, and that the performances are acceptable even for large configurations.

%%% ==== NEW PIECE 

We have proposed IFL, a language for expressing fine-grained information flow requirements.
Its declarative nature makes it easy to embed it in various access control languages and facilitates requirement verification through standard model checkers.
We exploit IFL to obtain \lang, a backwards compatible extension of CIL.
\lang\ helps administrators in writing information flow policies, including confidentiality and integrity requirements.
We have also defined and implemented a verification procedure to check if an \lang\ configuration complies with its IFL requirements.
Our experiments show that the language works well for defining properties that are commonly investigated for SELinux policies, and that the verification times are acceptable even for large real-world configurations.

\paragraph{Discussion}

We believe that our extension can help with the development of more advanced high-level languages. 
As our annotations are associated with a common intermediate language, they can enrich different high-level languages.
Our verification procedure can be used for checking properties when composing code written in different languages.
%high-level languages.

Our semantics focuses on CIL type enforcement because it allows defining more fine-grained information flow policies than other constructs, like those for multi-level security~\cite{SELinuxMLS}.
Moreover, many real CIL configurations only use these more limited constructs.
We do not explicitly model the constructs for defining the operations used inside allow rules.
But this is not a limitation because these constructs can be easily encoded in the considered fragment. 
Indeed, as we discussed in Section~\ref{sec:validation}, our tool deals with all the type enforcement constructs used in real-word CIL configurations.

Our extension targets well known problems in policy development. 
%Moreover, it provides a basis for implementing a new high-level language for SELinux. 
Moreover, it provides a basis for developing and implementing new high-level languages for SELinux as our semantics completes the existing, informal, and incomplete, CIL documentation.
Our proposal can also be applied to check properties when composing code written in different high-level languages sharing this common intermediate language.
%Our proposal can also help to check properties when composing code written in different high-level languages sharing this common intermediate language.

Since the actual SELinux architecture uses CIL as an intermediate language, our tool can also be used to verify properties of configurations written in the current policy language. 
This includes the SELinux reference policy that is part of several Linux distributions, and the Android policy~\cite{AndroidOSP}.

\paragraph{Future work}
There are several exciting directions for future work that aim at fostering the adoption of \lang\ by practitioners.
%We plan to extend our work in different directions. 
%First, we want to extend our development to cover all the features of the CIL language. 
%Even if the type enforcement fragment which we currently support is enough to analyze real configurations, we believe that supporting full language can help the adoption of \lang\ by practitioners.    
First, we plan to cover all the features of the CIL language,
even though the type enforcement fragment that we currently support suffices to analyze many real-word configurations.
We will also provide more friendly diagnostics and suggestions for fixing violated requirements.

%Then, our working prototype will be improved, although it already deals with configurations with thousands lines of code.
	We plan to enhance our tool's efficiency by reengineering and optimizing its code, and extending it to fully support requirement refinement.
Also we will address the issues of modular and incremental analysis. 
We consider these aspects critical for the integration of \lang\ in the life-cycle of CIL configurations.
In particular, we aim at supporting the development of tools like IDEs that provide instant feedback to administrators while they are writing their configurations, as is sometimes the case with typed languages.

Finally, we plan to support configurations partly written in the kernel policy language and partly written in CIL, as this is common practice~\cite{aosp}.

\section*{ACKNOWLEDGEMENTS}

We thank the anonymous referees and the shepherd for their insightful comments and detailed suggestions that helped to greatly improve our presentation. 
L. Ceragioli, P. Degano and L. Galletta have been partially supported by the MIUR project PRIN 2017FTXR7S IT MATTERS.

\bibliographystyle{IEEEtran}
\bibliography{biblio}

\appendices
\section{Formalizing CIL}\label{app:CIL}
% !TEX root = ../sp2022-main.tex
%

\begin{figure*}
\begin{center}
\small
\begin{tabular}{c}
\prftree[l]
{(N-1)}
{\overline{eval}_{\sigma ; \#}(B) = B'}
{(\sigma, inherit\ B) \rightarrow (\sigma, inherit\ B')}

\quad

\prftree[l]
{(N-2)}
{
	\begin{matrix}
		(\sigma, inherit\ B) \in \Gamma\\
		(B.\rho, r) \in \Gamma\\
	\end{matrix}
}
{add (\sigma.\rho, r)}

\quad

\prftree[l]
{(N-3)}
{\overline{eval}_{\sigma ; \#}(m) = m'}
{(\sigma, call\ m([a])) \rightarrow
(\sigma, call\ m'([a]))}

\\[.4cm]

\prftree[l]
{(N-4)}
{
	\begin{matrix}
		(\sigma, call\ m([a])) \in \Gamma\\
		(m, d) \in \Gamma\\
	\end{matrix}
}
{add (\sigma, d)}

\quad

%including macro parameters
\prftree[l]
{(N-5a)}
{\begin{matrix}
a\ occurs\ in\ c\\
\overline{eval}_m(a) = a' \neq \bot\\
eval_m(a) = \bot\\
\end{matrix}}
{(m, c) \rightarrow
((m, c\{ a' / a \})}

\quad

\prftree[l]
{(N-5b)}
{\begin{matrix}
(\sigma, call\ \sigma'.n([a])) \in \Gamma\\
(\sigma'.n, c) \in \Gamma\\
(\sigma', macro\ m([x])) \in \Gamma\\
(\lnot\exists m', a' : (\sigma'.n, call\ m'[a']) \in \Gamma)
\end{matrix}}
{add (\sigma, c\{ [a] / [x] \})}

\\[.4cm]

\prftree[l]
{(N-5c)}
{\begin{matrix}
(\sigma, call\ m([a])) \in \Gamma\\
(\lnot\exists m', [a'] : (m, call\ m'[a']) \in \Gamma)
\end{matrix}}
{remove (\sigma, call\ m([a]))}

\qquad

\prftree[l]
{(N-6)}
{
	\begin{matrix}
	a\ occurs\ in\ c\\
	\overline{eval}_{B_\# ; \#}(a) = a'\\
	\end{matrix}
}
{(B_\#, c) \rightarrow
(B_\#, c \{ a' / a \})}

\end{tabular}
\end{center}
\caption{CIL normalization rules.}
\label{fig:CILnormalization}
\end{figure*}

\begin{figure*}
\begin{center}
\small
\begin{tabular}{c}
\prftree[l]
{(S-N1)}
{(B_\#, type\ n) \in \Gamma}
{B_\#.n \in N}

\qquad

\prftree[l]
{(S-N2)}
{(B_\#, typeattribute\ n) \in \Gamma}
{B_\#.n \in N}

\qquad

\prftree[l]
{(S-ta1)}
{(B_\#, type\ n) \in \Gamma}
{(B_\#.n, B_\#.n) \in ta}

\\[.4cm]

\prftree[l]
{(S-ta2)}
{
	\begin{matrix}
	(B_\#, typeattributeset\ a\  (expr))) \in \Gamma\\
	(B_\#', type\ n) \in \Gamma\\
	\semantics{expr}_\Gamma(B_\#'.t)\\
	\end{matrix}
}
{(a, B_\#'.t) \in ta}

\qquad

\prftree[l]
{(S-A1)}
{(B_\#, allow\ t\ t'\ (class\ (perms))) \in \Gamma}
{(t, perms, t') \in A}

\qquad

\prftree[l]
{(S-A2)}
{
	\begin{matrix}
	(t1, perms, t2) \in A\\
	t1' \in ta(t1)\\
	t2' \in ta(t2)\\
	\end{matrix}
}
{(t1', perms, t2') \in A}

\end{tabular}
\end{center}
\caption{CIL semantics.}
\label{fig:CILsemantics}
\end{figure*}

CIL has qualified names, i.e., names prefixed by a path $\rho$ in the nesting of blocks and macros.
$\rho$ is a list of elements separated by dots.
In CIL, qualified names from the global namespace start with a dot ($.$).
We instead use the distinguished symbol $\#$.  
Global qualifications $\sigma$ start from the global namespace $\#$; other qualifications are called relative (e.g., $\#.A.a$ is globally qualified, $A.a$ is a relatively qualified). 
%
%For simplifying the notation, we will explicitly write it using the distinguished symbol $\#$.  
%CIL allows the administrator to write qualified names, i.e., names prefixed by a path $\rho$ in the nesting of blocks and macros, which is expressed as a dot separated list.
%Global qualifications $\sigma$ start from the global namespace, i.e., start with $\#$, other qualification are called relative (e.g., $\#.A.a$ is globally qualified, $A.a$ is a relatively qualified). 
%
%The syntax of a CIL configuration is expressed by the following grammar, where $[\phi]$ represents lists $\phi$s, and where $CIL$ is the starting symbol.

The syntax of a CIL configuration is as follows, where $[F]$ represents lists of $F$ entities, and $\mathit{CIL}$ is the starting symbol.
\begin{smalign*}
\mathit{CIL} &::= [rule]\\
rule &::= declaration \mid command\\
declaration &::= (block\ n\ CIL) \mid (typeattribute\ n)\\
     & \mid (type\ n) \mid (macro\ n ([x]) (CIL))\\ 
command &::= (allow\ a\ a\ (class\ (perms)))\\
     & \mid (typeattributeset\ a\ (expr))\\
     & \mid (call\ m ([a])) \mid (blockinherit\ B)
\end{smalign*} 
\noindent
Here $n, n', \dots$ are unqualified names (of types, typeattributes, macros or blocks); $x, x', \dots$ are formal parameter names;
$a, a', \dots$ are types and typeattributes (possibly qualified); $m, m', \dots$ are macro (possibly qualified) names; $B, B', \dots$ are block (possibly qualified) names.
Furthermore, we will use $g, g', \dots$ for possibly qualified names of macros or blocks, and $p, p', \dots$ for possibly qualified names in general.
Finally, we use $ B_\#, B_\#', \dots $ to refer to either $\#$ or a block name $B$.

%\alert{
%Abstractly, a CIL configuration is a set of pairs $(\sigma, r)$, where the rule $r$ occurs in the namespace $\sigma$.
%}
We abstractly represent a CIL configuration as a set of pairs $(\sigma, r)$, where the rule $r$ occurs in the namespace $\sigma$.
%For simplicity, we abstractly represent a CIL configuration a set of pairs $(\sigma, r)$, in which the rule $r$ is paired with  the namespace $\sigma$ in which it occurs.
%
%\begin{smalign*}
%Rule &::= (Namespace, Instruction)\\
%Instruction &::= Declaration \mid Command\\
%Declaration &::= block\ Name \mid type\ Name\\
%		    \mid &typeattribute\ Name \mid typeattributeset\ Name\ TaExpr)\\
%            \mid &macro\ Name\ [Name])\\
%Command &::= call\ NsName\ [NsName] \mid inherit\ NsName\\
%         	\mid &allow\ NsName\ NsName\ Perms)\\
%TaExpr &::= NsName \mid (all) \mid (Bop\ TaExpr\ TaExpr) \mid (not\ TaExpr)\\
%Binop  &::= and \mid or \mid xor\\
%NsName &::= (Namespace, Name)\\
%Namespace &::= [NsStep]\\
%NsStep &::= (Name, block) \mid ((Name, [Name]), macro)
%\end{smalign*}
%
%Where $Name$ represents all the legal names for types, typeattributes, blocks and macros, and variables, plus the special name $\#$ for the global namespace.
%%%
%E.g., the following block
%\begin{lstlisting}[style=cil]
%(block nest
%  (block unconfined
%    (macro m ((type x))
% 	  (allow subject x (file (write)))
%    (type subject)
%    (type object)
%    (allow subject object (file (read))))
%  (type bird)
%  (allow bird unconfined.object (file (write))))
%(type out)
%(call .nest.unconfined.m (out))
%\end{lstlisting}
%is $\Gamma$, the set of rules represented in Table~\ref{tab:syntaxexample}.
%\begin{table*}
%\caption{Example of configuration $\Gamma$}
%\label{tab:syntaxexample}
%\begin{lstlisting}[style=cil]
%{([(#, block)], block nest),
% ([(#, block), (nest, block)], block unconfined),
% ([(#, block), (nest, block), (unconfined, block)], macro m [x]),
% ([(#, block), (nest, block), (unconfined, block), ((m, [x]), macro)], allow ([], subject) ([], x) (file [write])),
% ([(#, block), (nest, block), (unconfined, block)], type subject),
% ([(#, block), (nest, block), (unconfined, block)], type object),
% ([(#, block), (nest, block), (unconfined, block)], allow ([], subject) ([], object) (file [read])),
% ([(#, block), (nest, block)], type bird),
% ([(#, block), (nest, block)], allow ([], bird) ([(unconfined, block)], object) (file [write])),
% ([(#, block)], type out),
% ([(#, block)], call ([(#, block), (nest, block), (unconfined, block)], m) [([], out)])}
%\end{lstlisting}
%\end{table*}

Assume as given a set of CIL rules $\Gamma$.
We define the following function $eval_{\sigma}^{k}(p)$ to resolve names, where
$p = \rho.n$ (the qualification $\rho$ is possibly empty, and $n$ is the unqualified name) occurring in the globally qualified block or macro $\sigma$, and where $k \in \{ \texttt{type}, \texttt{typeattribute}, \texttt{block}, \texttt{macro} \}$ 
%\blu{represents if}
indicates that we are resolving a type, a typeattribute, etc.
% kind of entities is an entity???.
This function returns a fully qualified name 
%???for entity???
for $p$, or $\bot$ is resolution is not possible in $\sigma$.
%
%A critical task is name resolution.
%For this purpose, given a set of CIL rules $\Gamma$, we define the following function $eval_{\sigma}^{k}(p)$ that, given a possibly qualified name $p$ (equal to $\rho.n$, where $\rho$ is the possibly empty qualification, and $n$ is the unqualified name), and the globally qualified block or macro in which the name occurs $\sigma$, returns a fully qualified name for entity, or $\bot$ is resolution is not possible in $\sigma$.
\begin{smalign*}
&eval_{\sigma}^{k}(\rho.n) =
\begin{cases}
\rho.n             		      & \text{if}\ \rho = \#.\rho'\\
\sigma.\rho.n           	  & \text{if}\ \rho = \#.\rho'\ \land (\sigma.\rho, k\ n) \in \Gamma\\
\bot			              & \text{otherwise}
\end{cases}
\end{smalign*}
%This function also depends on the kind of the considered entity $k \in \{ \texttt{type}, \texttt{typeattribute}, \texttt{block}, \texttt{macro} \}$.
%Since typeattributes are treated as types in CIL, with an abuse of notation we always use $eval_{\sigma}^{type}(a)$ for referring to the function defined as $eval_{\sigma}^{type}(a)$ if its result is different from $\bot$, as $eval_{\sigma}^{typeattribute}(a)$ otherwise.
%Moreover, for simplicity, in the semantics, we do not write $k$ explicitly, assuming that the correct parameter is used considering the context.
%
%\alert{
%Since typeattributes are treated as types in CIL, we abuse the notation and simply write $eval_{\sigma}^{type}(a)$  when its result is not $\bot$, and $eval_{\sigma}^{typeattribute}(a)$ otherwise.
%}
Since typeattributes are treated as types in CIL, we abuse notation and simply write $eval_{\sigma}^{type}(a)$ for the function defined as $eval_{\sigma}^{type}(a)$ if its result is different from $\bot$, and as $eval_{\sigma}^{typeattribute}(a)$ otherwise.
Moreover, we omit $k$, assuming that the correct parameter is used.
In CIL, a name that cannot be resolved in the namespace in which it occurs is often resolved in its parent namespace (recursively).
We formalize this as follows
%We write $\overline{eval}_{\sigma}^k(p)$ for such a function, defined as
\begin{smalign*}
&\overline{eval}_{\sigma}^{k}(p) =
\begin{cases}
eval_{\sigma}^{k}(p)              & \text{if}\ eval_{\sigma}^{k}(p) \neq \bot\\
\overline{eval}_{\sigma'}^{k}(p)  & \text{if}\ eval_{\sigma}^{k}(p) = \bot\ \land\\
								  & \sigma = \sigma'.n \texttt{ with } \sigma' \neq \#\\
\bot			                  & \text{otherwise}
\end{cases}
\end{smalign*}
Moreover, it is common that a name is resolved in the global namespace if the resolution in the block or macro in which the name is used fails.
We will write $eval_{\sigma;\sigma'}(p)$ for a function that, evaluates $p$ in $\sigma$ unless the result is $\bot$, and evaluates it in $\sigma'$ otherwise.

%$eval_m(a) \neq \bot \neq m.n \Rightarrow a' = eval_m(a)$
%
%\begin{table*}
%\caption{CIL name evaluation function $eval$.}
%\label{tab:eval}
%&eval_{\Gamma, ns}^{type}(ns', name) =
%\begin{cases}
%(ns', name)              & if\ ns' = (\#, block) @ ns''\\
%(ns @ ns', name)         & if\ ns' \neq (\#, block) @ ns'' \land\ ns = ns'' @ (A, block) \land\ (ns @ ns', type\ name) \in \Gamma\\
%(ns'' @ ns', name)       & if\ ns' \neq (\#, block) @ ns'' \land\ ns = ns'' @ ((m, vars), macro)\\
%						 & \quad \land\ (ns'' @ ns', type\ name) \in \Gamma \land\ (ns, type\ name) \notin \Gamma \land\ name \notin vars\\
%name			         & if\ ns' \neq (\#, block) @ ns'' \land\ ns = ns'' @ ((m, vars), macro)\\
%						 & \quad \land\ (\ (ns'' @ ns', type\ name) \notin \Gamma \lor\ (ns, type\ name) \in \Gamma \lor\ name \in vars)\\
%\bot					 & otherwise
%\end{cases}\\
%\end{table*}

The CIL normalization pipeline consists of the six rewriting rules in \figurename~\ref{fig:CILnormalization}, with 
applicability conditions in the upper part and an action in the lower one.
We denote rules by $r, r', \dots$; declarations by $d, d', \dots$; and commands by $c, c', \dots$.
The conditions predicate on the configuration at hand, and the actions either prescribe: (i) to rewrite a rule in $\Gamma$, as in $(\sigma, r) \rightarrow (\sigma', r')$; (ii) to add a rule in $\Gamma$, as in $add\ (\sigma, r)$; and (iii) to remove a rule from $\Gamma$, as in $remove (\sigma, r)$.
Each phase is iterated until a fixpoint is reached, with the only exception being the fifth phase.
The fifth phase is a sub-pipeline where rule (N-5a) is applied first, then rule (N-5b), and finally (N-5c).
In the fifth phase, each rule and the whole pipeline are applied until a fixpoint is reached.
%The CIL normalization pipeline consists of six rewriting functions, represented by the rules in \figurename~\ref{fig:CILnormalization}.
%Those rules are composed by a condition in the upper part and an action in the lower part, where $r, r', \dots$ represent rules, $d, d', \dots$ represent declarations, $c, c', \dots$ represent commands.
%The condition predicates on the given configuration in hand, and the actions either prescribes (i) rewriting a rule in $\Gamma$, as in $(\sigma, r) \rightarrow (\sigma', r')$; (ii) adding a rule in $\Gamma$, as in $add\ (\sigma, r)$; and (iii) removing a rule from $\Gamma$, as in $remove (\sigma, r)$.
%Each phase consists in the application of the given rule until a fixpoint is reached, with the only exception of the fifth phase.
%The fifth phase is a sub-pipeline where rule (N-5a) is applied first, then rule (N-5b), and finally (N-5c).
%In the fifth phase, each rule, and the sub-pipeline itself are applied until a fixpoint is reached.

The rules for CIL's semantics are given
in \figurename~\ref{fig:CILsemantics}, which yield the graph $G = (N, ta, A)$.
Since attribute expressions $expr$ are boolean functions on types and typeattributes, we assume a denotational semantics $\semantics{expr}: N \rightarrow \{\text{true}, \text{false}\}$.

\section{Formalizing IFCIL}\label{app:IFCIL}
% !TEX root = ../sp2022-main.tex
%

\subsection{IFL}

\begin{figure*}
\begin{center}
\small
\begin{tabular}{c}
\prftree[l]
{(node)}
{\_}
{n \preceq_n *}

\quad

\prftree[l]
{(op)}
{o \subseteq o'}
{o \preceq_o o'}

\quad

\prftree[l]
{(arrow)}
{\_}
{> \ \ \preceq_a\ +>}

\quad

\prftree[l]
{(o-arrow)}
{o \preceq_o o'}
{w \preceq_a w'}
{(w,o) \preceq_{oa} (w',o')}

\quad

\prftree[l]
{(comp)}
{P_1 \preceq_P P_1'}
{P_2 \preceq_P P_2'}
{P_1 P_2 \preceq_P P_1' P_2'}

\\[.5cm]

\prftree[l]
{(P-1)}
{n_1 \preceq_n n_1'}
{n_2 \preceq_n n_2'}
{oa \preceq_{oa} oa'}
{n_1\ oa\ n_2 \preceq_P n_1'\ oa'\ n_2'}

\qquad

\prftree[l]
{(P-2)}
{n_1 \preceq_n n_1'}
{n_2 \preceq_n n_2'}
{o_1 \preceq_{o} o_1'}
{o_1 \preceq_{o} o_2'}
{o_2 \preceq_{o} o_2'}
{n_1\ +[o_1]>\ *\ [o_2]> n_2 \preceq_P n_1'\ [o_1']>\ *\ +[o_2']>\ n_2'}

\\[.5cm]

\prftree[l]
{(P-3)}
{n_1 \preceq_n n_1'}
{n_2 \preceq_n n_2'}
{o_1 \preceq_{o} o'}
{o_2 \preceq_{o} o'}
{n_1\ +[o_1]>\ *\ +[o_2]> n_2 \preceq_P n_1'\ +[o']>\ n_2'}

\qquad

\prftree[l]
{(P-4)}
{n_1 \preceq_n n_1'}
{n_2 \preceq_n n_2'}
{o_1 \preceq_{o} o_1'}
{o_2 \preceq_{o} o_2'}
{o_2 \preceq_{o} o_1'}
{n_1\ [o_1]>\ *\ +[o_2]> n_2 \preceq_P n_1'\ +[o_1']>\ *\ [o_2']>\ n_2'}

\\[.5cm]

\qquad

\prftree[l]
{($\mathcal{R}$-1)}
{P \preceq_P P'}
{P \preceq P}

\qquad

\prftree[l]
{($\mathcal{R}$-2)}
{P \preceq_P P'}
{\text{\textasciitilde} P' \preceq \text{\textasciitilde} P}

\qquad

\prftree[l]
{($\mathcal{R}$-3)}
{P_1 \preceq_P P_1'}
{P_2 \preceq_P P_2'}
{P_1' : P_2 \preceq P_1 : P_2'}

\end{tabular}
\end{center}
\caption{Definition of IFL requirement refinement and auxiliary relations.}
\label{fig:refinement}
\end{figure*}

IFL requirement refinement is defined by the $\preceq$ operator in \figurename~\ref{fig:refinement}, where reflexivity and transitivity rules are implicitly assumed for every defined relation, and where arrows $>$ or $+>$ are sometimes represented by $w$, and an arrow $w$ labeled with a set of operations $o$ is represented as a pair $(w, o)$ (e.g., $+[\{read\}]>$ is represented as $(+>, \{read\})$).
%, where $\preceq_n, \preceq_o, \preceq_a, \preceq_{oa}$ and $\preceq_P$ are auxiliary relations.
\begin{lemma}[Kind refinement]\label{thm:kindref}
Let $I$ be an information flow diagram and $\pi$ a path in $I$, if $P \preceq_P P’$ and $\pi \triangleright_I P$ then $\pi \triangleright_I P’$.
\end{lemma}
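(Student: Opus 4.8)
The plan is to prove Lemma~\ref{thm:kindref} by structural induction on the derivation of $P \preceq_P P'$ from the rules of \figurename~\ref{fig:refinement}, after establishing two elementary monotonicity facts and one ``flattening'' characterization of kinds that does most of the work. The monotonicity facts are: from rule (node) together with the implicit reflexivity and transitivity, $n \preceq_n n'$ holds exactly when $n' = n$ or $n' = \mathtt{*}$, so any node of $\pi$ that matches $n$ (equals $n$, lies in $\ta(n)$, or has $n = \mathtt{*}$) also matches $n'$; and $o \preceq_o o'$ holds exactly when $o \subseteq o'$, so an arc whose operation label meets $o$ also meets $o'$, while $w \preceq_a w'$ only ever relaxes $\mathtt{>}$ to $\mathtt{+>}$. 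The characterization, proved by a short induction on the length of $\pi$ from Definition~\ref{def:ifpath-kind}, says: writing $\pi = a_1 \cdots a_k$, we have $\pi \triangleright_I \mathtt{m}\ \mathtt{+[}o\mathtt{]>}\ \mathtt{m'}$ iff $a_1$ starts at a node matching $\mathtt{m}$, $a_k$ ends at a node matching $\mathtt{m'}$, and every $a_j$ has operation label meeting $o$; and $\pi \triangleright_I \mathtt{m}\ \mathtt{[}o\mathtt{]>}\ \mathtt{m'}$ is the same with the extra requirement $k = 1$. Given this, every case of the induction reduces to re-slicing the arc sequence of $\pi$ and checking endpoint and operation conditions.

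I would then handle the cases. Reflexivity of $\preceq_P$ is immediate; transitivity composes two uses of the induction hypothesis; and (comp) splits $\pi = \pi' \pi''$ according to $\pi \triangleright_I P_1 P_2$ and applies the induction hypothesis to the strictly smaller derivations of $P_1 \preceq_P P_1'$ and $P_2 \preceq_P P_2'$. For (P-1), the characterization turns $\pi \triangleright_I n_1\ oa\ n_2$ into endpoint conditions at $n_1$ and $n_2$ plus ``every arc meets $o$'' (with $k = 1$ when the arrow is $\mathtt{>}$); node and operation monotonicity lift the endpoints to $n_1', n_2'$ and the operation set to $o'$, and $w \preceq_a w'$ only drops the $k = 1$ clause, so the very same arcs witness $\pi \triangleright_I n_1'\ oa'\ n_2'$.

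The main work is in the three re-bracketing rules (P-2), (P-3), (P-4); for each I would flatten both the hypothesis kind and the goal kind into arc-condition form, split $\pi$ accordingly, and re-slice the split. For (P-3), whose source concatenates a $\mathtt{+[}o_1\mathtt{]>}$-chain from $n_1$ with a $\mathtt{+[}o_2\mathtt{]>}$-chain ending at $n_2$ and whose goal is a single $\mathtt{+[}o'\mathtt{]>}$-chain from $n_1'$ to $n_2'$: the hypothesis gives $\pi = \pi_1 \pi_2$ with every arc of $\pi_1$ meeting $o_1$, every arc of $\pi_2$ meeting $o_2$, $\pi_1$ starting at $n_1$, $\pi_2$ ending at $n_2$; since $o_1, o_2 \subseteq o'$ every arc of $\pi$ meets $o'$, and node monotonicity lifts the endpoints to $n_1', n_2'$, so $\pi$ has the goal kind. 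For (P-2), whose source is a $\mathtt{+[}o_1\mathtt{]>}$-chain followed by one $\mathtt{[}o_2\mathtt{]>}$-step and whose goal is one $\mathtt{[}o_1'\mathtt{]>}$-step followed by a $\mathtt{+[}o_2'\mathtt{]>}$-chain, I would re-slice $\pi$ so its first arc is the new leading step --- it meets $o_1 \subseteq o_1'$ and starts at $n_1'$ --- and all remaining arcs form the new chain: the leftover tail of the old first chain meets $o_1 \subseteq o_2'$, the old final arc meets $o_2 \subseteq o_2'$, and the chain ends at $n_2'$. Rule (P-4) is the mirror image, re-slicing off the last arc of $\pi$ as the new trailing step; here the premise $o_2 \preceq_o o_1'$ is exactly what lets the leftover middle arcs (from the old $\mathtt{+[}o_2\mathtt{]>}$-chain) be absorbed into the new leading $\mathtt{+[}o_1'\mathtt{]>}$-chain.

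I expect the main obstacle to be the operation-set bookkeeping in (P-2) and (P-4): one must pair each of the three operation side-conditions with one of the three groups of arcs produced by the re-bracketing (new leading step, middle arcs, new trailing step) and get the inclusions in the right direction --- in particular seeing why the extra premises $o_1 \preceq_o o_2'$ in (P-2) and $o_2 \preceq_o o_1'$ in (P-4) must be there --- and also dispatch the degenerate cases in which the original $\mathtt{+>}$-subpath is a single arc, so that the ``middle'' group is empty. The flattening characterization is what keeps this from becoming several nested inductions on path length; once it is available, every remaining case is a finite verification.
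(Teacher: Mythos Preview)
Your proposal is correct and follows essentially the same approach as the paper: both argue by induction on the derivation of $P \preceq_P P'$, handling (comp) by splitting $\pi$ and invoking the induction hypothesis, and treating (P-1)--(P-4) as base cases by unpacking the endpoint and operation conditions and re-slicing the path. The one presentational difference is that you isolate the ``flattening characterization'' of $\pi \triangleright_I \mathtt{m}\ \mathtt{+[}o\mathtt{]>}\ \mathtt{m'}$ as an explicit auxiliary fact, whereas the paper re-derives the same arc-by-arc conditions inline in each case (e.g., in its (P-3) argument it writes out ``for every $(n_1'', o_1', n_2'')$ in $\pi'$, $o_1' \cap o_1 \neq \emptyset$'' without naming this as a lemma); your version is a bit more systematic but not a different idea.
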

%%%%%%%%%%%%%%%%%%%%
%\begin{proof}
%Rule induction suffices: for each rule in \figurename~\ref{fig:refinement} we prove the property for the conclusion assuming 
%that the premises enjoy the property.
%both the premises and property on the premises.

\begin{proof}
Assume $\pi \triangleright_I P$ and not $\pi \triangleright_I P’$ and proceed by proving for each rule 
in \figurename~\ref{fig:refinement} and for arbirary $\pi$ that if the lemma holds 
on the premises, that it also does on the conclusion.

Consider the rule (comp), and let $i \in \{1,2\}$.
From $\pi_i \triangleright_I P_i$ we have that $\pi_i \triangleright_I P_i'$. 
Then $\pi_1 \pi_2 \triangleright_I P_1' P_2'$ by the definition of $\triangleright$.

Consider the rule (P-1) and proceed by cases on the form of the arrow in $oa$.
If the arrow is $>$ then from $\pi \triangleright_I (n_1 [o]> n_2)$ we know that $\pi = (n_1'', o'', n_2'')$ (since this is the only possible case in the definition of $\triangleright$).
We also know that $n_1'' \in ta(n_1)$ or $n_1 = *$, but since $n_1 \preceq n_1'$ either $n_1 = n_1'$ or $n_1' = *$,
and then $n_1'' \in ta(n_1')$ or $n_1' = *$.
Similarly for $n_2''$.
Finally, from the definition of $\triangleright$ 
we also know that $o \cap o'' \neq \emptyset$, but since $o \preceq o'$, $o \subseteq o'$, thus $o \cap o' \neq \emptyset$.
All the requirements for $\pi \triangleright_I (n_1' [o']> n_2')$ are thus verified.
If the arrow is $+>$, two rules in the definition of $\triangleright$ may be used.
The first one has same condition of the previous case, and the second one 
requires induction on $\pi$ (note that $* \preceq *$).
% occurrences of the previous case (note that $* \preceq *$).

Consider the rule (P-2) and assume $\pi \triangleright_I n_1\ +[o_1]>\ *\ [o_2]> n_2$.
Then, by the definition of $\triangleright$ we have that $\pi = \pi' (n, o, n_2'')$ with $\pi' \triangleright_I n_1\ +[o_1]>\ *$; $n_2'' \in ta(n_2)$ if $n_2 \neq *$, and $o \cap o_2 \neq \emptyset$.
From $\pi' \triangleright_I n_1\ +[o_1]>\ *$ it follows
 that either $\pi' = (n_1'', o', n')$ or $\pi' = (n_1'', o', n') \pi''$; in both cases $n_1'' \in ta(n_1)$ if $n_1 \neq *$, and $o' \cap o_1 \neq \emptyset$.
If $\pi' = (n_1'', o', n')$ then the thesis holds since $\pi' \triangleright_I n_1'\ [o_1']>\ *$ and $(n, o, n_2'') \triangleright_I * [o_2']> n_2'$ (thus $(n, o, n_2'') \triangleright_I * +[o_2']> n_2'$ by the definition of $\triangleright$).
If $\pi' = (n_1'', o', n') \pi''$, then from $\pi' \triangleright_I n_1\ +[o_1]>\ *$ and all $(n'', o'', n''')$ in $\pi'$, $o'' \cap o_1 \neq \emptyset$ holds.
The thesis follows, since $o_1 \preceq o_2'$, $\pi'' (n, o, n_2'') \triangleright_I * +[o_2']> n_2'$.

Consider the rule (P-3) and assume $\pi \triangleright_I n_1\ +[o_1]>\ *\ +[o_2]> n_2$.
Then split $\pi$ as $\pi' \pi''$ such that $\pi' \triangleright_I n_1\ +[o_1]>\ *$ and $\pi'' \triangleright_I *\ +[o_2]> n_2$.
From $\pi' \triangleright_I n_1\ +[o_1]>\ *$,
the first element of $\pi'$ is some $(n, o, n')$ such that $n \in ta (n_1)$ if $n_1 \neq *$ and thus that $n \in ta (n_1')$ if $n_1' \neq *$. 
Moreover, for every $(n_1'', o_1', n_2'')$ in $\pi'$, $o_1' \cap o_1 \neq \emptyset$.
From $\pi'' \triangleright_*\ +[o_1]>\ n_2$, the last element of $\pi'$ is some $(n'', o'', n''')$ such that $n''' \in ta (n_2)$ if $n_2 \neq *$ and thus $n \in ta (n_1')$ if $n_1' \neq *$. 
Moreover, for every $(n_1''', o_2', n_2''')$ in $\pi''$, $o_2' \cap o_2 \neq \emptyset$.
Finally, since $o_1 \preceq o'$, and $o_2 \preceq o'$, for every $(\_, o''', \_)$ in $\pi$, $o''' \cap o' \neq \emptyset$.

The proof of rule (P-4) is almost the same of (P-2).
\end{proof}

\refinement*
%%%%%%%%%%%%% commentato
\begin{proof}
We consider the rules in \figurename~\ref{fig:refinement}, and for arbitrary $I$ we show that if the premises are met, then the theorem holds for the conclusion.

Consider the rule ($\mathcal{R}$-1) and assume that $I \models P$, then there exists a path $\pi$ in $I$ such that $\pi \triangleright_I P$.
Since $P \preceq P'$, $\pi \triangleright_I P'$ by Lemma~\ref{thm:kindref}, thus $I \models P'$.

Consider the rule ($\mathcal{R}$-2), and assume that $I \models \text{\textasciitilde}P'$.
We proceed by refutation, assuming $I \not\models \text{\textasciitilde}P$ and showing a contradiction.
By the definition, $I \not\models \text{\textasciitilde}P$ implies $I \models P$ and thus there exists a path $\pi$ in $I$ such that $\pi \triangleright_I P$.
Since $P \preceq P'$, $\pi \triangleright_I P'$ by Lemma~\ref{thm:kindref} and thus $I \models P'$, i.e., $I \not\models \text{\textasciitilde}P'$.

Consider the rule ($\mathcal{R}$-3), and assume that $I \models P_1' : P_2$.
Assume now by refutation that $I \not\models P_1 : P_2'$.
By the definition, $I \not\models P_1 : P_2'$ implies that there exists a path $\pi$ in $I$ such that $\pi \triangleright_I P_1$ and not $\pi \triangleright_I P_2'$.
Since $P_1 \preceq P_1'$, $\pi \triangleright_I P_1'$ by Lemma~\ref{thm:kindref}.
Thus, since $I \models P_1' : P_2$, $\pi \triangleright_I P_2$ must hold, but since $P_2 \preceq P_2'$,
then $\pi \triangleright_I P_2'$ must also hold.
\end{proof}
%%%%%%%%%%%%

The (not always defined) \emph{greatest lower bound}, and \emph{least upper bound} of a pair of requirements $\mathcal{R}$ and $\mathcal{R}'$ are represented as $\mathcal{R} \sqcap \mathcal{R}'$, and $\mathcal{R} \sqcup \mathcal{R}'$ respectively.

\subsection{Syntax and semantics of \lang}

\begin{figure*}
\begin{center}
\small
\begin{tabular}{c}
\prftree[l]
{(N-1)}
{\overline{eval}_{\sigma ; \#}(B) = B'}
{(\sigma, inherit\ B\ R) \rightarrow (\sigma, inherit\ B'\ R)}

\quad

\prftree[l]
{(N-2)}
{	\begin{matrix}
	(\sigma, inherit\ B\ R) \in \Gamma\\
	(B.\rho, \overline{r}) \in \Gamma\\
	\end{matrix}}
{add (\sigma.\rho, \overline{r})}

\quad

\prftree[l]
{(N-2')}
{	\begin{matrix}
	(\sigma, inherit\ B\ R) \in \Gamma\\
	(B.\rho, \texttt{;IFL;}\ (l)\ \mathcal{R} \ \texttt{;IFL;}) \in \Gamma\\
	\texttt{;IFL;} (l':\rho.l)\ \mathcal{R}' \texttt{;IFL;} \in R\\
	\end{matrix}}
{add (\sigma.\rho, \texttt{;IFL;} (l')\ \mathcal{R}' \sqcap \mathcal{R} \ \texttt{;IFL;})}

\\[.5cm]

\prftree[l]
{(N-2'')}
{	\begin{matrix}
	(\sigma, inherit\ B\ R) \in \Gamma\\
	(B.\rho, \texttt{;IFL;}\ (l)\ \mathcal{R} \ \texttt{;IFL;}) \in \Gamma\\
	\lnot\exists l', \mathcal{R}' : \texttt{;IFL;} (l':\rho.l)\ \mathcal{R}' \texttt{;IFL;} \in R\\
\end{matrix}}
{add (\sigma.\rho, \texttt{;IFL;} (l')\ \mathcal{R} \ \texttt{;IFL;})}

\quad

\prftree[l]
{(N-3)}
{\overline{eval}_{\sigma ; \#}(m) = m'}
{(\sigma, call\ m([a])\ R) \rightarrow
(\sigma, call\ m'([a])\ R)}

\quad

\prftree[l]
{(N-4)}
{	\begin{matrix}
	(\sigma, call\ m([a])\ R) \in \Gamma\\
	(m, d) \in \Gamma\\
\end{matrix}}
{add (\sigma, d)}

\\[.5cm]

%including macro parameters
\prftree[l]
{(N-5a)}
{\begin{matrix}
a\ occurs\ in\ c\\
\overline{eval}_m(a) = a' \neq \bot\\
eval_m(a) = \bot\\
\end{matrix}}
{(m, c) \rightarrow
((m, c\{ a' / a \})}

\quad

\prftree[l]
{(N-5b)}
{\begin{matrix}
(\sigma, call\ \sigma'.n([a])\ R) \in \Gamma\\
(\sigma'.n, \overline{c}) \in \Gamma\\
(\sigma', macro\ m([x])) \in \Gamma\\
(\lnot\exists m', a', R' : (\sigma'.n, call\ m'[a']\ R') \in \Gamma)
\end{matrix}}
{add (\sigma, \overline{c}\{ [a] / [x] \})}

\\[.5cm]

\prftree[l]
{(N-5b')}
{\begin{matrix}
(\sigma, call\ \sigma'.n([a])\ R) \in \Gamma\\
(\sigma'.n, \texttt{;IFL;} (l)\ \mathcal{R} \ \texttt{;IFL;}) \in \Gamma\\
(\sigma', macro\ m([x])) \in \Gamma\\
(\lnot\exists m', a', R' : (\sigma'.n, call\ m'[a']\ R') \in \Gamma)\\
{\texttt{;IFL;} (l':l)\ \mathcal{R}' \texttt{;IFL;} \in R}
\end{matrix}}
{add (\sigma, \texttt{;IFL;} (l')\ (\mathcal{R} \sqcap \mathcal{R}')\{ [a] / [x] \} \ \texttt{;IFL;})}

\qquad

\prftree[l]
{(N-5b'')}
{\begin{matrix}
(\sigma, call\ \sigma'.n([a])\ R) \in \Gamma\\
(\sigma'.n, \texttt{;IFL;} (l)\ \mathcal{R} \ \texttt{;IFL;}) \in \Gamma\\
(\sigma', macro\ m([x])) \in \Gamma\\
(\lnot\exists m', a', R' : (\sigma'.n, call\ m'[a']\ R') \in \Gamma)\\
{\lnot\exists l', \mathcal{R}' : \texttt{;IFL;} (l':l)\ \mathcal{R}' \texttt{;IFL;} \in R}
\end{matrix}}
{add (\sigma, \texttt{;IFL;} (l)\ \mathcal{R}\{ [a] / [x] \} \ \texttt{;IFL;})}

\\[.5cm]

\prftree[l]
{(N-5c)}
{\begin{matrix}
(\sigma, call\ m([a])\ R) \in \Gamma\\
(\lnot\exists m', [a'] : (m, call\ m'[a']\ R) \in \Gamma)
\end{matrix}}
{remove (\sigma, call\ m([a])\ R)}

\quad

\prftree[l]
{(N-6)}
{	\begin{matrix}
	a\ occurs\ in\ c\\
	\overline{eval}_{B_\# ; \#}(a) = a'\\
\end{matrix}}
{(B_\#, c) \rightarrow
(B_\#, c \{ a' / a \})}

\end{tabular}
\end{center}
\caption{IFCIL normalization rules.}
\label{fig:IFCILnormalization}
\end{figure*}

The grammar for \lang\ is obtained by updating the rules of $command$ for $call$ and $blockinherit$ as follows.
\begin{smalign*}
command &::= (call\ m ([a])\ [IFL\mathit{refinement}])\\
     & \mid (blockinherit\ B\ [IFL\mathit{refinement}])
\end{smalign*}
Moreover, the following rules are added to the grammar.
\begin{smalign*}
command &::= IFLrequirement\\
IFLrequirement &::=\ \texttt{;IFL;}\ (label)\ \mathcal{R}\ \texttt{;IFL;}\\
IFL\mathit{refinement} &::=\ \texttt{;IFL;}\ (label:label)\ \mathcal{R}\ \texttt{;IFL;}
\end{smalign*}

In the normalization pipeline, the rules are updated as shown in \figurename~\ref{fig:IFCILnormalization}, where $\overline{r}$ and $\overline{c}$ are rules and commands that are not IFL requirements.
The normalization procedure is the same as CIL, where rules (N-i), (N-i'), and (N-i'') are applied together during phase $i$ ($1 \leq i \leq 6$).

The semantics $(G, \mathbb{R})$ of a \lang\ configuration $\Sigma$ is obtained by applying the rules in \figurename~\ref{fig:CILsemantics} for $G$, and the following one for $\mathbb{R}$.
\begin{smalign*}
\prftree[l]
{(S-$\mathbb{R}$)}
{(B_\#, \texttt{;IFL;} (l)\ \mathcal{R} \ \texttt{;IFL;}) \in \Gamma}
{\mathcal{R} \in \mathbb{R}}
\end{smalign*}

\subsection{Verification}

%In the following we will consider the redefinition of the grammar of $P$ in Section~\ref{sec:verification}, and the following definition of $\triangleright_I$, which is trivially equivalent to the original one.
It is convenient to use in the following the grammar of $P$ of Section~\ref{sec:verification}, and to redefine the relation  $\triangleright_I$ of Definition~\ref{def:ifpath-kind} in the following,  trivially equivalent way.
\begin{smalign*}
(\mathtt{n}, o, \mathtt{n'}) \triangleright_I  \mathtt{m}\ \texttt{[}o'\texttt{]>}\ \mathtt{m'} 
          \ \text{ if }\  &(m = *\ \lor\ \mathtt{n}\in \ta(m)) 
          \\ & \> \land (m' = *\ \lor\ \mathtt{n'}\in \ta(m')) 
          \\ & \> \land  o \cap o' \neq \emptyset 
          \\
(\mathtt{n}, o, \mathtt{n'}) \triangleright_I  \mathtt{m}\ \texttt{+[}o'\texttt{]>}\ \mathtt{m'} 
          \ \text{ if }\ & 
          (\mathtt{n}, o, \mathtt{n'}) \triangleright_I  \mathtt{m}\ \texttt{[}o'\texttt{]>}\ \mathtt{m'} 
\\        
(\mathtt{n}, o, \mathtt{n'})\, \pi  \triangleright_I \mathtt{m}\ \texttt{+[}o'\texttt{]>}\ \mathtt{m'} 
          \ \text{ if }\ & (\mathtt{n}, o, \mathtt{n'}) \triangleright_I \mathtt{m}\ \texttt{[}o'\texttt{]>}\ \mathtt{*}\\
          &\> \land\, \pi \triangleright_I \mathtt{*}\ \texttt{+[}o'\texttt{]>}\ \mathtt{m'}  
\end{smalign*}
\begin{smalign*}                
(\mathtt{n}, o, \mathtt{n'})\, \pi  \triangleright_I \mathtt{m}\ \texttt{[}o'\texttt{]>}\ \mathtt{m'}\ P 
\ \text{ if }\ & (\mathtt{n}, o, \mathtt{n'}) \triangleright_I \mathtt{m}\ \texttt{[}o'\texttt{]>}\ \mathtt{*}\\
&\> \land\, \pi \triangleright_I P\\
\pi  \triangleright_I \mathtt{m}\ \texttt{+[}o'\texttt{]>}\ \mathtt{m'}\ P 
          \ \text{ if }\ & \pi  \triangleright_I \mathtt{m}\ \texttt{[}o'\texttt{]>}\ \mathtt{m'}\ P\\
(\mathtt{n}, o, \mathtt{n'})\, \pi  \triangleright_I \mathtt{m}\ \texttt{+[}o'\texttt{]>}\ \mathtt{m'}\ P 
\ \text{ if }\ & (\mathtt{n}, o, \mathtt{n'}) \triangleright_I \mathtt{m}\ \texttt{[}o'\texttt{]>}\ \mathtt{*}
\\ &  \> \land \ \pi \triangleright_I \mathtt{*}\ \texttt{+[}o'\texttt{]>}\ \mathtt{*}\ P
\end{smalign*}
\noindent
Given a path $\pi = (n_0, o_0, n_1) (n_1, o_1, n_2) \dots (n_{m-1}, o_{m-1}, n_m)$ of an information diagram $I$ with $K$ its KTS, let $\ltlencode{\pi}$ be the set of paths $w = n_0, op_0, n_1, op_1, n_2,\dots, n_{m-1}, op_{m-1}, n_m$ in $K$ such that $\forall i : op_i \in o_i$.
Moreover, given a path of $K$ $w = n_0, op_0, n_1, op_1, n_2,\dots, n_{m-1}, op_{m-1}, n_m$, let $\pi_w$ be \mbox{$(n_0, \{op_0\}, n_1) (n_1, \{op_1\}, n_2) \dots (n_{m-1}, \{op_{m-1}\}, n_m)$ of $I$}.

\begin{lemma}\label{thm:paths}
Let $\pi$ be a path in the information diagram $I$ with $K$ its KTS, and let $P$ be an information flow kind.
Then $\pi \triangleright_I P$ if and only if $w \models_l \ltlencode{P}$ for some $w \in \ltlencode{\pi}$.
\end{lemma}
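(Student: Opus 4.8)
The plan is to prove the biconditional by structural induction on the flow kind $P$, following the four-production grammar of Section~\ref{sec:verification} and the reformulated definition of $\triangleright_I$ stated just before the lemma. Two observations bridge the two sides. First, in the KTS $K$ a state $n$ satisfies an atom $\mathtt{m}$ iff $n \in \ta(\mathtt{m})$ (by the definition of the labelling), and the wildcard $*$ plays the role of the atom $\mathit{true}$; second, $(n,op,n')$ is a transition of $K$ iff $(n,o,n') \in E$ for some $o \ni op$, which is exactly the relation between a path $\pi$ of $I$ and its lifts $w \in \ltlencode{\pi}$. I would exploit the existential quantifier in the statement to move operations around: from $\pi \triangleright_I P$ I pick, at each arc $(n_i,o_i,n_{i+1})$ of $\pi$, an operation $op_i$ lying in the (nonempty) intersection of $o_i$ with the operation set demanded by $P$ at that step, obtaining a witness $w \in \ltlencode{\pi}$ with $w \models_l \ltlencode{P}$; conversely, from any $w \in \ltlencode{\pi}$ with $w \models_l \ltlencode{P}$ the singletons $\{op_i\} \subseteq o_i$ immediately discharge the nonempty-intersection side conditions of $\triangleright_I$.

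The two non-transitive cases are straightforward. For $P = \ttype{m [o]> m'}$ the terminal conjunct $\lnot X(\mathit{true})$ of $\ltlencode{P}$ and the single-arc clause of $\triangleright_I$ both pin $\pi$ to exactly one arc, and the equivalence collapses to matching three conditions (source in $\ta(\mathtt{m})$, target in $\ta(\mathtt{m'})$, operation compatible with $o$) up to the wildcard convention. For the sequencing case $P = (\ttype{m [o]> m'})\,P''$ I peel off the leading arc: $\ltlencode{P} = \mathtt{m} \land \bigvee_{op \in o}(op) \land X(\ltlencode{P''})$, the operator $X$ shifts the witness to the lift of the suffix $\pi''$ of $\pi$, and both sides agree that the first arc realises $\ttype{m [o]> *}$ while $\pi''$ realises $P''$ (whose own first clause checks the junction node); the structural induction hypothesis applied to $(\pi'',P'')$ closes the case.

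The transitive productions $P = \ttype{m +[o]> m'}$ and $P = (\ttype{m +[o]> m'})\,P''$ are the crux. Here I would first establish, by an auxiliary induction on the length of $\pi$, the closed characterisations: $\pi \triangleright_I \ttype{m +[o]> m'}$ iff the first state of $\pi$ lies in $\ta(\mathtt{m})$, its last state lies in $\ta(\mathtt{m'})$, and $o_i \cap o \neq \emptyset$ at every arc; and $w \models_l \ltlencode{\ttype{m +[o]> m'}}$ iff the first state of $w$ lies in $\ta(\mathtt{m})$, its last state lies in $\ta(\mathtt{m'})$, and every action of $w$ belongs to $o$. The induction on $\pi$ mirrors exactly the recursive clause of $\triangleright_I$ that strips one arc and continues with $*\ \ttype{+[o]>}\ \cdots$, whereas on the logic side it is the unfolding of $\bigvee_{op \in o}(op)\ U\ (\cdots)$; matching the number of $U$-unfoldings to the length of the consumed prefix is the \emph{main obstacle}, and everything else (choosing $op_i \in o_i \cap o$ in the forward direction, reading $o_i \supseteq \{op_i\}$ in the backward direction, and handling the boundary node $\mathtt{m'}$ at which $P''$ takes over) is routine bookkeeping combined with the structural induction hypothesis on $(\text{remaining suffix},P'')$. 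Once the lemma is proved, Theorem~\ref{thm:correctness} follows by quantifying the equivalence over all paths $\pi$ of $I$, respectively all $w$ of $K$, and comparing the three clauses of Definition~\ref{def:requirementsem} with those of Definition~\ref{def:requirementsat}.
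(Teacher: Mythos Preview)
Your proposal is correct and follows essentially the same structural induction on $P$ as the paper's proof; the paper groups the cases differently (it declares the two non-sequencing forms $\ttype{m [o]> m'}$ and $\ttype{m +[o]> m'}$ trivial and details only the two sequencing forms, whereas you split non-transitive versus transitive), but the inductive content is identical. You are in fact more explicit than the paper about the auxiliary induction on the length of $\pi$ that is needed in the $\ttype{+[o]>}$ cases---the paper invokes this only as ``by induction hypothesis'' without saying which induction---so your write-up would actually be cleaner on that point.
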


%%%%%%%%%%%%%%%%%%% Commentato
\begin{proof}
We proceed by induction on $P$.
For $P$ of the form $\mathtt{m}\ \texttt{[}o'\texttt{]>}\ \mathtt{m'}$ or $\mathtt{m}\ \texttt{+[}o'\texttt{]>}\ \mathtt{m'}$ the properties trivially hold.
For the last two cases of $P$ we proceed as follows.

\textbf{Case} $P = \mathtt{m}\ \texttt{[}o'\texttt{]>}\ \mathtt{m'}\ P'$.
Assume $\pi \triangleright_I P$.
By the definition of $\triangleright_I$, $\pi = (n, o, n')\ \pi'$, with $o \cap o' \neq \emptyset$.
We also know by the induction hypothesis that there exists 
$w' \in \ltlencode{\pi'}$ such that $w' \models_l \ltlencode{P'}$.
Since $o \cap o' \neq \emptyset$, there exists $op \in o \cap o'$.
Take $w = n, op, w'$.
Then $w \in \ltlencode{\pi}$ holds by definition, and also $w \models_l \ltlencode{P}$.

Assume $w \models_l \ltlencode{P}$.
Since $\ltlencode{P} = m \land \bigvee_{op \in o'} (op)\ \land X(\ltlencode{P'})$, $w = n, op, w'$ with $n \in ta(m)$ and $op \in o'$ trivially follows from the semantics of LTL.
Then $\pi = (n, o, n')\pi'$ with $op \in o$ and $w' \in \ltlencode{\pi'}$.
We also know by the induction hypothesis that $\pi' \triangleright_I P'$, then $\pi \triangleright_I P$.

\textbf{Case} $P = \mathtt{m}\ \texttt{+[}o'\texttt{]>}\ \mathtt{m'}\ P'$.
Assume $\pi \triangleright_I P$.
We separately consider the two cases of the definition of $\triangleright$ that apply to $P$.
If $\pi  \triangleright_I \mathtt{m}\ \texttt{[}o'\texttt{]>}\ \mathtt{m'}\ P'$, then we are in the previous case.
Otherwise, $\pi = (n, o, n')\ \pi'$ with $(n, o, n') \triangleright_I m \texttt{[}o'\texttt{]>} *$ and $\pi' \triangleright_I * \texttt{[}o'\texttt{]>} *\ P'$.
By induction hypothesis, $w' \in \ltlencode{\pi'}$ exists such that $w' \models_l \bigvee_{op \in o'} (op)\ \land X(\bigvee_{op \in o'} (op)\ U\ \ltlencode{P'})$.
Thus, $w = n,op,w'$ exists such that $op \in o$, and it holds that $w \in \ltlencode{\pi}$.
The thesis follows from the following stronger statement that trivially holds by construction: $w \models_l m \land \bigvee_{op \in o'} (op)\ \land X(\bigvee_{op \in o'} (op) \land X (\bigvee_{op \in o'} (op)\ U\ \ltlencode{P'}))$.
%
%We have that $(\mathtt{n}, o, \mathtt{n'}) \triangleright_I \mathtt{m}\ \texttt{+[}o'\texttt{]>}\ \mathtt{*}$, and $\pi \triangleright_I \mathtt{*}\ \texttt{+[}o'\texttt{]>}\ \mathtt{*}\ P$.
%Trivially, $\pi \triangleright_I \mathtt{*}\ \texttt{+[}o'\texttt{]>}\ \mathtt{*}$ if for all $(n, o, n') \in \pi$, $o \cap o' \neq \emptyset$, which implies that there exists a $w\ w' \in \ltlencode{\pi}$ such that for all $(n_w, op, n_w') \in w$, $\bigvee_{op' \in o'} op = op'$.
%Then the thesis holds from the induction hypothesis.

Assume $w \models_l \ltlencode{P}$.
Then either $w \models_l \bigvee_{op \in o'} (op)\ \land X(\ltlencode{P'})$ or $w \models_l \bigvee_{op \in o'} (op)\ \land X(\bigvee_{op \in o'} (op)\ \land X(\bigvee_{op \in o'} (op)\ U\ \ltlencode{P'}))$.
If $w \models_l \bigvee_{op \in o'} (op)\ \land X(\ltlencode{P'})$, then we are in the previous case.
Consider now the second case, and let $w$ be $n, op, w'$.
By definition, $\pi = (n, o, n') \pi'$, with $op \in o$ and $w' \in \ltlencode{\pi'}$.
By construction, $(n, o, n') \triangleright_I m \texttt{[}o'\texttt{]>} *$, and by induction hypothesis, $\pi' \triangleright_I * \texttt{+[}o'\texttt{]>} * P$.
\end{proof}
%%%%%%%%%%%%%%%%%%%%%%%%%%

\begin{lemma}\label{thm:paths2}
	Let $w$ be a path in the KTS $K$ of the information diagram $I$, and let $P$ be an information flow kind.
	Then $w \models_l \ltlencode{P}$ if and only if $\pi_w \triangleright_I P$.
\end{lemma}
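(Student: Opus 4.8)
The plan is to obtain Lemma~\ref{thm:paths2} as an almost immediate consequence of the preceding Lemma~\ref{thm:paths}. Since $\pi_w$ is, by construction, a path of $I$, instantiating Lemma~\ref{thm:paths} with $\pi := \pi_w$ gives: $\pi_w \triangleright_I P$ if and only if $w' \models_l \ltlencode{P}$ for some $w' \in \ltlencode{\pi_w}$. The one remaining step is to record that $\ltlencode{\pi_w}$ is the singleton $\{w\}$. Indeed, writing $w = n_0, op_0, n_1, \dots, n_{m-1}, op_{m-1}, n_m$, the path $\pi_w$ has singleton operation labels $o_i = \{op_i\}$, so by the definition of $\ltlencode{\cdot}$ on paths every $w' \in \ltlencode{\pi_w}$ must read exactly the operations $op_i$ along the same node sequence, forcing $w' = w$; and $w \in \ltlencode{\pi_w}$ because $w$ is a path of $K$ by hypothesis. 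Substituting $\ltlencode{\pi_w} = \{w\}$ into the equivalence collapses the existential quantifier and yields $\pi_w \triangleright_I P$ iff $w \models_l \ltlencode{P}$, which is the claim. So, in the form I would actually write it, the proof is two lines once the singleton observation is made.

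For a self-contained argument that does not route through Lemma~\ref{thm:paths}, the fallback plan is a structural induction on $P$ over the reformulated four-clause grammar of Section~\ref{sec:verification}, proving the biconditional directly and mirroring, case by case, the proof of Lemma~\ref{thm:paths}. The crucial simplification is that $\pi_w$ carries deterministic operation labels, so every step of the form ``choose some $op \in o \cap o'$'' becomes ``the single operation $op_i$ labelling the $i$-th transition of $w$ lies in $o'$'', and the disjunction $\bigvee_{op \in o'}(op)$ appearing in $\ltlencode{P}$ is satisfied at a position of $w$ precisely when the operation read there belongs to $o'$. The base cases, $P = \mathtt{n}\,\texttt{[}o\texttt{]>}\,\mathtt{n'}$ and $P = \mathtt{n}\,\texttt{+[}o\texttt{]>}\,\mathtt{n'}$, I would handle by unfolding $\ltlencode{P}$ and matching it against the relevant $\triangleright_I$ clauses; for the transitive arrow this needs an inner induction on the length of $w$, peeling one transition at a time via the clause $(\mathtt{n},o,\mathtt{n'})\,\pi \triangleright_I \mathtt{m}\,\texttt{+[}o'\texttt{]>}\,\mathtt{m'}$. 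The inductive cases, $(\mathtt{n}\,\texttt{[}o\texttt{]>}\,\mathtt{n'})\,P'$ and $(\mathtt{n}\,\texttt{+[}o\texttt{]>}\,\mathtt{n'})\,P'$, split $w$ as $n_0, op_0, w'$ (after peeling an appropriate prefix all of whose operations lie in $o$, in the transitive case), apply the induction hypothesis to $w'$ and $P'$, and reassemble with the matching $\triangleright_I$ clause, exactly as in the last two cases of the proof of Lemma~\ref{thm:paths}.

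The main obstacle, in the self-contained version, is purely the bookkeeping around the transitive arrows: lining up the LTL ``until'' operator and the $\lnot X(\mathit{true})$ terminator --- which encodes that IFL paths are finite and must end exactly at the target node --- with the several mutually recursive $\triangleright_I$ clauses for $\texttt{+[}o'\texttt{]>}\,\mathtt{m'}$ and $\texttt{+[}o'\texttt{]>}\,\mathtt{m'}\,P'$, and treating the wildcard $*$ uniformly (as the proposition that holds at every state, matching the $m = *$ disjuncts in $\triangleright_I$). Since the proof of Lemma~\ref{thm:paths} already carries out this alignment in the harder setting where operations are chosen existentially, here it only needs to be specialised to the deterministic labelling of $\pi_w$; and in the version that simply cites Lemma~\ref{thm:paths}, there is no obstacle at all beyond the one-line check that $\ltlencode{\pi_w} = \{w\}$.
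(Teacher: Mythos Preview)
Your primary approach is exactly the paper's: the proof is the single line ``trivially derives from Lemma~\ref{thm:paths} since $\ltlencode{\pi_w} = \{w\}$'', and your justification of the singleton observation is correct. The fallback induction is unnecessary but also fine.
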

\begin{proof}
	Trivially derives from Lemma~\ref{thm:paths} since $\ltlencode{\pi_w} = \{ w \}$.
\end{proof}
%%%%%%%%%%%%

%\begin{lemma}\label{thm:pathexists}
%Let $I$ be an information diagram with $K$ its KTS, then
%$W = \bigcup_{\pi\text{ in } I} \ltlencode{\pi}$.
%\end{lemma}
%\begin{proof}
%Trivially follows from the definition of $K$ and $\ltlencode{\pi}$.
%\end{proof}

\correctness*

%%%%%%%%%%%%%% commentato
\begin{proof}
%By cases on $\mathcal{R}$ using Lemmas~\ref{thm:paths} and~\ref{thm:paths2}.
We proceed by cases on $\mathcal{R}$.

\textbf{Case} $\mathcal{R} = P$.
Assume $I \models P$.
Then there exists $\pi$ such that $\pi \triangleright_I P$, and by Lemma~\ref{thm:paths} $K \vdash P$.
Conversely, assume $K \vdash P$, thus there exists $w$ such that $w \models_l \ltlencode{P}$, and by Lemma~\ref{thm:paths2} $I \models P$.

\textbf{Case} $\mathcal{R} = \text{\textasciitilde}P$.
Assume $I \models \text{\textasciitilde}P$.
Then there exists no $\pi$ such that $\pi \triangleright_I P$.
Assume by refutation that $K \not\vdash P$.
Then $w \models_l \ltlencode{P}$ for some $w$, and by Lemma~\ref{thm:paths2} $\pi_w \triangleright_I P$.
Contradiction.

Conversely, assume $K \vdash \text{\textasciitilde}P$, thus there exists no $w$ such that $w \models_l \ltlencode{P}$.
Assume by refutation that $I \not\models \text{\textasciitilde}P$.
Then there exists $\pi$ such that $\pi \triangleright_I P$ and, by Lemma~\ref{thm:paths2} there exists $w$ such that $w \models_l \ltlencode{P}$.
Contradiction.

\textbf{Case} $\mathcal{R} = P:P'$.
Assume $I \models P:P'$.
Then there exists no $\pi$ such that $\pi \triangleright_I P$ and $\pi \not\triangleright_I P$.
Assume by refutation that $K \not\vdash P:P'$.
Then there exists $w$ such that $w \models_l \ltlencode{P}$ and $w \not\models_l \ltlencode{P'}$, and by Lemma~\ref{thm:paths2} $\pi_w \triangleright_I P$ and $\pi_w \not\triangleright_I P$.
Contradiction.

Conversely, assume $K \vdash P:P'$, thus there exists no $w$ such that $w \models_l \ltlencode{P}$ and $w \not\models_l \ltlencode{P'}$.
Assume by refutation that $I \not\models P:P'$.
Then there exists $\pi$ such that $\pi \triangleright_I P$ and $\pi \not\triangleright_I P'$.
Thus, by Lemma~\ref{thm:paths}, there exists $w$ such that $w \models_l \ltlencode{P}$ and $w \not\models_l \ltlencode{P'}$.
Contradiction.
\end{proof}
%%%%%%%%%%%%%%%%

In the following we write $\Bar{W}$ and $\Bar{W_\iota}$ for infinite paths in $K$ and $K_\iota$; 
$\Dot{W}$ and $\Dot{W_\iota}$ for finite paths in $K$ and $K_\iota$; and $W_\iota$ for $\Bar{W_\iota} \cup \Dot{W_\iota}$ (note that $W = \Bar{W} \cup \Dot{W}$).
With a small abuse of notation, we write $K_\iota \models_l \phi$ if and only if $\forall w \in \Bar{W_\iota}. w \models_l \phi$.
%
%Roughly, $K_\iota \vdash \mathbb{R}$ is obtained by substituting $X(\iota)$ for $\lnot X(true)$ in $\ltlencode{\_}$ of Definitions~\ref{def:enc-kind} and~\ref{def:requirementsat}.
%Formally:
\begin{definition}
The encoding of flow kinds for $K_\iota$ is as follows.
\begin{smalign*}
	&\ltlencode{\ttype{n [o]> n}'}_\iota = \ttype{n} \land \bigvee_{\ttype{op} \in \ttype{o}} (\ttype{op}) \land X(\ttype{n}' \land X(\iota) )\\
	&\ltlencode{\ttype{n +[o]> n}'}_\iota = \ttype{n} \land \bigvee_{\ttype{op} \in \ttype{o}} (\ttype{op}) \land X(\bigvee_{\ttype{op} \in \ttype{o}} (\ttype{op})\ U\ (\ttype{n}' \land X(\iota)))\\
	&\ltlencode{(\ttype{n [o]> n}') P}_\iota = \ttype{n} \land \bigvee_{\ttype{op} \in \ttype{o}} (\ttype{op}) \land X(\ltlencode{P}_\iota)\\
	&\ltlencode{(\ttype{n +[o]> n}') P}_\iota = \ttype{n} \land \bigvee_{\ttype{op} \in \ttype{o}} (\ttype{op}) \land X(
	\bigvee_{\ttype{op} \in \ttype{o}} (\ttype{op})\ U\ \ltlencode{P}_\iota)
\end{smalign*}
The satisfaction relation $\vdash_\iota$ is defined as follows.
\begin{smalign*}
	K_\iota \vdash_\iota P &\text{ iff } K_\iota \not\models_l \lnot\ltlencode{P}_\iota\\
	K_\iota \vdash_\iota \ttype{\textasciitilde}P &\text{ iff } K_\iota \models_l \lnot\ltlencode{P}_\iota\\
	K_\iota \vdash_\iota P \ttype{:} P' &\text{ iff } K_\iota \models_l \lnot\ltlencode{P}_\iota \lor \ltlencode{P'}_\iota
\end{smalign*}
\end{definition}

\begin{lemma}\label{thm:finiota}
If $w \models_l \ltlencode{P}$ then $w \in \Dot{W}$ and $w \iota^{\omega} \models_l \ltlencode{P}_\iota$ with $w \iota^{\omega} \in \Bar{W_\iota}$.
\end{lemma}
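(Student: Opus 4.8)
The plan is to prove the lemma by structural induction on the flow kind $P$, using the simplified grammar of Section~\ref{sec:verification} in which a kind is either a single step $\ttype{n [o]> n}'$, a single transitive step $\ttype{n +[o]> n}'$, or a step followed by a kind. The driving observation is that, comparing Definition~\ref{def:enc-kind} with the definition of $\ltlencode{\cdot}_\iota$, the formulas $\ltlencode{P}$ and $\ltlencode{P}_\iota$ are syntactically identical except that every occurrence of the end‑of‑path marker $\lnot X(\mathit{true})$ in $\ltlencode{P}$ is replaced by $X(\iota)$ in $\ltlencode{P}_\iota$. The relevant semantic fact is that $\lnot X(\mathit{true})$ holds at a position $s$ of a path $w$ iff $s$ is the last state of $w$, whereas $X(\iota)$ holds at position $s$ of $w\iota^\omega$ iff the successor of $s$ along $w\iota^\omega$ is $\iota$ — which, for the last state of $w$, it is. Thus the two claims of the lemma are really one mechanism: wherever a model $w$ of $\ltlencode{P}$ must satisfy $\lnot X(\mathit{true})$, that position is forced to be the end of $w$, so $w\in\Dot{W}$; and on the same finite prefix followed by $\iota^\omega$ that conjunct becomes $X(\iota)$, while every other conjunct — node labels, the action disjunctions $\bigvee_{\ttype{op}\in\ttype{o}}(\ttype{op})$, and the nested $X$'s and $U$'s — is evaluated on exactly the same finite prefix and so is unaffected.

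Concretely: for $P=\ttype{n [o]> n}'$, from $w\models_l\ltlencode{P}$ we read that $w$ is $n,\mathit{op},n'$ with $\mathit{op}\in o$ and nothing after $n'$ (the conjunct $\lnot X(\mathit{true})$), so $w\in\Dot{W}$; then $w\iota^\omega$ is an infinite path of $K_\iota$ because $K_\iota$ has an edge $n'\!\to\!\iota$ labeled by every operation and a self‑loop on $\iota$, and it satisfies $\ltlencode{P}_\iota$ since $X(\iota)$ now holds at the position of $n'$. The case $P=\ttype{n +[o]> n}'$ is analogous via the semantics of $U$: the until‑witness is the state at which $\ttype{n}'\land\lnot X(\mathit{true})$ holds, which again pins down the (finite) end of $w$, and on $w\iota^\omega$ that same state carries $\ttype{n}'\land X(\iota)$ while all actions strictly before it still lie in $o$. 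For $P=(\ttype{n [o]> n}')\,P'$, decompose $w=n,\mathit{op},w'$ with $\mathit{op}\in o$ and $w'\models_l\ltlencode{P'}$; the induction hypothesis gives $w'\in\Dot{W}$ and $w'\iota^\omega\models_l\ltlencode{P'}_\iota$ with $w'\iota^\omega\in\Bar{W_\iota}$, whence $w\in\Dot{W}$ and $w\iota^\omega=n,\mathit{op},w'\iota^\omega$ satisfies $\ltlencode{P}_\iota=\ttype{n}\land\bigvee_{\ttype{op}\in\ttype{o}}(\ttype{op})\land X(\ltlencode{P'}_\iota)$. For $P=(\ttype{n +[o]> n}')\,P'$, unwind the leading $U$: it yields an index $j$ such that the suffix $w^{(j)}\models_l\ltlencode{P'}$ and every action of $w$ strictly before position $j$ lies in $o$; apply the induction hypothesis to $w^{(j)}$ and rebuild $\ltlencode{P}_\iota$ on $w\iota^\omega$ using the same witness position $j$.

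The step I expect to require the most care is the $\ttype{+[o]>}$ case of the inductive argument: one must keep the index arithmetic straight when relocating the until‑witness from $w$ to $w\iota^\omega$, check that the "$\bigvee_{\ttype{op}\in\ttype{o}}(\ttype{op})$ holds before the witness" obligation is exactly the one already supplied by $w$ (appending $\iota^\omega$ introduces no new actions before that witness), and confirm that the position at which the induction hypothesis delivers $\ltlencode{P'}_\iota$ is precisely the position at which $\ltlencode{P}_\iota$ demands it. A minor but necessary side check, needed in every case, is that $w\iota^\omega$ is a genuine infinite path of $K_\iota$; this holds because, by construction, $K_\iota$ adds transitions labeled by every $\mathit{op}\in O$ from every state — in particular from the final state of $w$ and from $\iota$ itself — to $\iota$.
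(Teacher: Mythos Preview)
Your proposal is correct. The paper's own proof consists of the single word ``Trivial,'' so your structural induction on $P$ is a faithful and fully correct expansion of what the paper leaves implicit; the key observation that $\ltlencode{P}$ and $\ltlencode{P}_\iota$ differ only by the substitution of $X(\iota)$ for $\lnot X(\mathit{true})$ is exactly the right driving idea.
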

\begin{proof}
Trivial.
\end{proof}

\begin{lemma}\label{thm:iotafin}
%If $w \models_l \ltlencode{P}_\iota$ then $w = w' \iota^\omega$ for some unique $w' \in \Dot{W}$ and $w' \models_l \ltlencode{P}$.
If $w \models_l \ltlencode{P}_\iota$ then 
there exists a unique ${\dot w} \in \Dot{W}$ such that ${\dot w} \models_l \ltlencode{P}$ and $w = {\dot w} \iota^\omega$.
\end{lemma}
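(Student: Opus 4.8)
The plan is to prove the statement by induction on the structure of the flow kind $P$, using the grammar $P ::= \ttype{n\,[o]> n'} \mid \ttype{n\,+[o]> n'} \mid (\ttype{n\,[o]> n'})\,P \mid (\ttype{n\,+[o]> n'})\,P$ of Section~\ref{sec:verification} and the matching clauses of the $\iota$-encoding $\ltlencode{\cdot}_\iota$; this is the converse direction of Lemma~\ref{thm:finiota}, and the reasoning follows the same pattern as the proofs of Lemmas~\ref{thm:paths} and~\ref{thm:paths2}. Two elementary facts about $K_\iota$ do all the work: (i) the sink $\iota$ has outgoing transitions only to itself — one labelled by each $op\in O$ — so once an infinite path of $K_\iota$ enters $\iota$ it stays there forever; and (ii) $\iota$ lies in no typeattribute and is no type node, so no atomic proposition holds at $\iota$, whence any state at which a conjunct of the form $\ttype{n}$ of $\ltlencode{P}_\iota$ holds is a genuine state of $K$, reached by a genuine $K$-transition (the only $K_\iota$-transitions into a non-$\iota$ state are those inherited from $K$).

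First I would dispatch uniqueness, which comes for free: every ${\dot w}\in\Dot{W}$ is a path of $K$ and hence $\iota$-free, so the equation $w = {\dot w}\,\iota^\omega$ pins ${\dot w}$ down as precisely the maximal $\iota$-free prefix of $w$. Thus the whole content is to show that this prefix is a well-defined finite $K$-path that satisfies $\ltlencode{P}$. For the base case $P = \ttype{n\,[o]> n'}$, expanding $w = t_0\,a_0\,t_1\,a_1\,t_2\cdots \models_l \ttype{n}\land\bigvee_{op\in o}(op)\land X(\ttype{n'}\land X(\iota))$ forces $t_0$ to be an $\ttype{n}$-state and $t_1$ an $\ttype{n'}$-state (both of $K$, by (ii)), $a_0\in o\subseteq O$, and $t_2 = \iota$; by (i), $t_i = \iota$ for all $i\ge 2$, so $w = {\dot w}\,\iota^\omega$ with ${\dot w} = t_0\,a_0\,t_1$, and ${\dot w}\models_l\ltlencode{P}$ because $\lnot X(true)$ holds at its last state. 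The case $P = \ttype{n\,+[o]> n'}$ is the same, except that the $U$-subformula is used to locate the first position $j$ at which $\ttype{n'}\land X(\iota)$ holds; by (ii) every state up to $t_j$ lies in $K$, the actions $a_0,\dots,a_{j-1}$ all lie in $o$, and by (i) $t_i = \iota$ for $i>j$, so ${\dot w} = t_0\,a_0\cdots a_{j-1}\,t_j$ is the required finite $K$-path, and the $U$-subformula over ${\dot w}[1..]$ (witness: its last state) gives ${\dot w}\models_l\ltlencode{P}$.

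For the inductive cases $P = (\ttype{n\,[o]> n'})\,P'$ and $P = (\ttype{n\,+[o]> n'})\,P'$ I would peel off the head of $w$: the conjuncts $\ttype{n}$ and $\bigvee_{op\in o}(op)$ fix $t_0$ (an $\ttype{n}$-state of $K$) and $a_0\in o$, and the remaining conjunct $X(\ltlencode{P'}_\iota)$ — respectively $X(\bigvee_{op\in o}(op)\,U\,\ltlencode{P'}_\iota)$ — says that a suffix $w'$ of $w$ satisfies $\ltlencode{P'}_\iota$, where in the $+$-case $w'$ is obtained by additionally discarding a block of states all of whose actions lie in $o$, a block that, by (ii), precedes the first occurrence of $\iota$. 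Applying the induction hypothesis to $w'$ and $P'$ yields a unique ${\dot w'}\in\Dot{W}$ with ${\dot w'}\models_l\ltlencode{P'}$ and $w' = {\dot w'}\,\iota^\omega$; re-attaching the finitely many ($\iota$-free) states and actions that were removed produces ${\dot w}\in\Dot{W}$ with $w = {\dot w}\,\iota^\omega$, and a direct check of the relevant LTL clause — taking the split and the $U$-witness inherited from ${\dot w'}$ — yields ${\dot w}\models_l\ltlencode{P}$.

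I expect the only delicate step to be this bookkeeping in the $\ttype{+[}o\ttype{]>}$-cases: one must argue carefully that the witness position of the $U$-operator lies strictly before the infinite tail of $\iota$'s — here fact (ii) is essential, since otherwise a later state satisfying an $\ttype{n}$- or $\ttype{n'}$-conjunct could not exist — so that the decomposition $w = {\dot w}\,\iota^\omega$ is the correct one and the induction hypothesis is applied to a suffix that is genuinely of the shape $\dot w'\,\iota^\omega$ for a finite $K$-path $\dot w'$. Everything else is a mechanical unfolding of the semantics of LTL and of $\ltlencode{\cdot}_\iota$.
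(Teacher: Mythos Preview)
Your proposal is correct and spells out in full the structural induction that the paper leaves implicit: the paper's own proof of this lemma is the single word ``Trivial.'' Your two key facts---(i) absorption at $\iota$ and (ii) no atomic proposition holds at $\iota$---together with the case analysis on the four grammar clauses are exactly the argument one would give, in the same style as the detailed proofs of Lemmas~\ref{thm:paths} and~\ref{thm:paths2}.
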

\begin{proof}
Trivial.
\end{proof}

\begin{corollary}\label{thm:MC}
	Let $\Sigma$ be an IFCIL configuration with requirements $\mathbb{R}$, let $I$ be its information flow diagram, and let $K$ be the KTS of $\Sigma$. Then
	\vspace{-1mm}
	\begin{align*}
		K_\iota \vdash \mathbb{R}\ \text{  if and only if  }\  K \vdash \mathbb{R}\ \text{  if and only if  }\ I \models \mathbb{R}.
	\end{align*}
\end{corollary}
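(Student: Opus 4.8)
The plan is to reduce the three-way equivalence to a single new claim. The equivalence $K \vdash \mathbb{R}$ iff $I \models \mathbb{R}$ is precisely Theorem~\ref{thm:correctness}, so the only thing left to establish is $K_\iota \vdash_\iota \mathbb{R}$ iff $K \vdash \mathbb{R}$, where $\vdash_\iota$ is the satisfaction relation on the sink-extended system $K_\iota$ defined just above the corollary (the $\vdash$ written on $K_\iota$ in the statement is to be read as $\vdash_\iota$). Since both $\vdash$ and $\vdash_\iota$ are extended to sets of requirements conjunctively, it suffices to prove $K_\iota \vdash_\iota \mathcal{R}$ iff $K \vdash \mathcal{R}$ for a single requirement $\mathcal{R}$, which I would do by cases on its three syntactic forms, in each case unfolding the definitions of $\vdash$, $\vdash_\iota$, and $\models_l$ into explicit quantifications over paths.

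The technical core is a correspondence between the relevant path sets, and it is exactly what Lemmas~\ref{thm:finiota} and~\ref{thm:iotafin} supply: the map $w \mapsto w\,\iota^{\omega}$ is a bijection from $\{\,w \in \Dot{W} : w \models_l \ltlencode{P}\,\}$ onto $\{\,w \in \Bar{W_\iota} : w \models_l \ltlencode{P}_\iota\,\}$, its inverse being the truncation of Lemma~\ref{thm:iotafin}. I would make the underlying reason explicit: the clause $\lnot X(\mathit{true})$ that closes every branch of $\ltlencode{P}$ can hold only at the last state of a finite path, so $\ltlencode{P}$ is satisfied by some finite path of $K$ and by no infinite path of $K$; replacing that clause by $X(\iota)$ to form $\ltlencode{P}_\iota$ turns precisely those finite witnesses into their infinite prolongations by $\iota^{\omega}$ inside $K_\iota$. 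Hence ``there is a path of $K$ satisfying $\ltlencode{P}$'' and ``there is an infinite path of $K_\iota$ satisfying $\ltlencode{P}_\iota$'' are equivalent, and likewise for their negations.

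Given this, the case analysis is mechanical. For $\mathcal{R} = P$, $K \vdash P$ unfolds to $\exists w \in W.\ w \models_l \ltlencode{P}$, which by the observation above is the existence of a finite witness, and the bijection transfers it to $\exists w \in \Bar{W_\iota}.\ w \models_l \ltlencode{P}_\iota$, i.e.\ $K_\iota \vdash_\iota P$. For $\mathcal{R} = \ttype{\textasciitilde}P$ one simply negates this equivalence. For $\mathcal{R} = P \ttype{:} P'$, $K \vdash P\ttype{:}P'$ unfolds to ``$\forall w \in W$, if $w \models_l \ltlencode{P}$ then $w \models_l \ltlencode{P'}$''; for the forward direction, any $w \in \Bar{W_\iota}$ with $w \models_l \ltlencode{P}_\iota$ factors as $w = \dot w\,\iota^{\omega}$ with $\dot w \in \Dot{W}$ and $\dot w \models_l \ltlencode{P}$ by Lemma~\ref{thm:iotafin}, hence $\dot w \models_l \ltlencode{P'}$ by hypothesis, and Lemma~\ref{thm:finiota} lifts this back to $w \models_l \ltlencode{P'}_\iota$; the converse direction is symmetric, using Lemma~\ref{thm:finiota} to pass from a finite $K$-witness of $\ltlencode{P}$ to its $\iota^{\omega}$-extension in $K_\iota$.

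I do not anticipate a genuine obstacle: Lemmas~\ref{thm:finiota} and~\ref{thm:iotafin} are flagged as trivial and carry the weight, and Theorem~\ref{thm:correctness} supplies the second equivalence wholesale. The only points requiring care are the quantifier bookkeeping across the three cases — in particular that $K \models_l$ ranges over all of $W = \Bar{W}\cup\Dot{W}$ whereas $K_\iota \models_l$ ranges only over $\Bar{W_\iota}$, so one must note explicitly that infinite $K$-paths and finite $K_\iota$-paths never satisfy the encodings in question — and, in the $P\ttype{:}P'$ case, checking that the single bijection $w \leftrightarrow w\,\iota^{\omega}$ is simultaneously compatible with $\ltlencode{P}$ and $\ltlencode{P'}$, which is what lets the implication be transported both ways.
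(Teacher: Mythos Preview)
Your proposal is correct and follows essentially the same approach as the paper: invoke Theorem~\ref{thm:correctness} for the $K \vdash \mathbb{R} \Leftrightarrow I \models \mathbb{R}$ half, and use Lemmas~\ref{thm:finiota} and~\ref{thm:iotafin} to transport witnesses between $W$ and $\Bar{W_\iota}$ for the $K_\iota \vdash_\iota \mathbb{R} \Leftrightarrow K \vdash \mathbb{R}$ half. The only organizational difference is that the paper collapses your three cases into two auxiliary equivalences, $K \models_l \lnot\ltlencode{P} \Leftrightarrow K_\iota \models_l \lnot\ltlencode{P}_\iota$ and $K \models_l \lnot\ltlencode{P}\lor\ltlencode{P'} \Leftrightarrow K_\iota \models_l \lnot\ltlencode{P}_\iota\lor\ltlencode{P'}_\iota$, and proves each by contradiction rather than via the explicit bijection you describe; the underlying logic is identical.
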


\begin{proof}
	It suffices to prove the following.
	\begin{align}
	K \models_l \lnot \ltlencode{P} &\text{ iff } K_\iota \models_l \lnot \ltlencode{P}_\iota\label{case1}\\
	K \models_l \lnot \ltlencode{P} \lor \ltlencode{P'} &\text{ iff } K_\iota \models_l \lnot \ltlencode{P}_\iota \lor \ltlencode{P'}_\iota\label{case2}
	\end{align}
	\textbf{Proof of (\ref{case1}).}
	Assume by refutation that $K \models_l \lnot \ltlencode{P}$ and $K_\iota \not\models_l \lnot \ltlencode{P}_\iota$.
	From $K_\iota \not\models_l \lnot \ltlencode{P}_\iota$ we know that there exists $w' \in \Bar{W_\iota}$ such that $w' \models_l \ltlencode{P}_\iota$.
	By Lemma~\ref{thm:iotafin}, there exists $w'' \in W$ such that $w'' \models_l \ltlencode{P}$.
	But from $K \models_l \lnot \ltlencode{P}$ we know that $\forall w \in W. w \not\models_l \ltlencode{P}$. Contradiction.
	
	Assume now by refutation that $K \not\models_l \lnot \ltlencode{P}$ and $K_\iota \models_l \lnot \ltlencode{P}_\iota$.
	From $K \not\models_l \lnot \ltlencode{P}$ we know that there exists $w' \in W$ such that $w \models_l \ltlencode{P}$.
	By Lemma~\ref{thm:finiota}, $w' \iota^\omega \in \Bar{W_\iota}$ is such that $w' \iota^\omega \models_l \ltlencode{P}_\iota$.
	But from $K_\iota \models_l \lnot \ltlencode{P}_\iota$ we know that $\forall w \in \Bar{W_\iota}$ it is $w \not\models_l \ltlencode{P}$. Contradiction.
	
	\textbf{Proof of (\ref{case2}).}
	Assume by refutation that $K \models_l \lnot \ltlencode{P} \lor \ltlencode{P'}$ and $K_\iota \not\models_l \lnot \ltlencode{P}_\iota \lor \ltlencode{P'}_\iota$.
	From $K_\iota \not\models_l \lnot \ltlencode{P}_\iota \lor \ltlencode{P'}_\iota$ we know that there exists $w' \in \Bar{W_\iota}$ such that $w' \models_l \ltlencode{P}_\iota$ and $w' \not\models_l \ltlencode{P'}_\iota$.
	Then, by Lemma~\ref{thm:iotafin}, $w' = w'' \iota^\omega$ with $w'' \in \Dot{W}$ and $w'' \models_l \ltlencode{P}$.
	Clearly, $w'' \not\models_l \ltlencode{P'}$, otherwise we would have that $w' \models_l \ltlencode{P'}_\iota$ by Lemma~\ref{thm:finiota}.
	But from $K \models_l \lnot \ltlencode{P} \lor \ltlencode{P'}$ we know that $\forall w \in W$ it is $w \not\models_l \ltlencode{P} \lor w \models_l \ltlencode{P'}$. Contradiction.
	
	Assume now by refutation that $K \not\models_l \lnot \ltlencode{P} \lor \ltlencode{P'}$ and $K_\iota \models_l \lnot \ltlencode{P}_\iota \lor \ltlencode{P'}_\iota$.
	From $K \not\models_l \lnot \ltlencode{P} \lor \ltlencode{P'}$ we know that there exists $w' \in W$ such that $w' \models_l \ltlencode{P}$ and $w' \not\models_l \ltlencode{P'}$.
	Thus, by Lemma~\ref{thm:finiota}, $w' \iota^\omega \models_l \ltlencode{P}_\iota$ with $w' \iota^\omega \in \Bar{W_\iota}$.
	Clearly, $w' \iota^\omega \not\models_l \ltlencode{P'}_{\iota}$, otherwise we would have that $w' \models_l \ltlencode{P'}$ by Lemma~\ref{thm:finiota}.
	But from $K_\iota \models_l \lnot \ltlencode{P}_\iota \lor \ltlencode{P'}_\iota$ we know that $\forall w \in \Bar{W_\iota}$ it is  $w \not\models_l \ltlencode{P}_\iota \lor w \models_l \ltlencode{P'}_\iota$. Contradiction.
\end{proof}

%%%%%%%%%%%
%\section{Pierpaolo Formalizing IFCIL}\label{app:IFCIL}
%\input{sections/Pierpaolo-IFCILappendix}
%%%%%%%%%%%%

%%%%%%%%%%%%%%%%%%%%%%%%%%%%%%%%%%%%%%%%%%%%%%%%%%%%%%%%%%%%%%%%%%%%%%%%%%%%%%%%
\end{document}